\documentclass[11pt]{article}
\usepackage[a4paper,margin=30mm]{geometry}
\usepackage[utf8]{inputenc}
\usepackage{amsmath}
\usepackage{amsthm}
\usepackage{amsfonts}
\usepackage{amssymb}
\usepackage{mathtools}
\usepackage{amsbsy}
\usepackage[hidelinks]{hyperref}
\allowdisplaybreaks
\mathtoolsset{showonlyrefs=true}

\usepackage{fourier}
\usepackage{xcolor}

\newcommand{\Vir}{\mathrm{Vir}}
\newcommand{\Z}{\mathbb{Z}}
\newcommand{\C}{\mathbb{C}}

\theoremstyle{plain}
\newtheorem{thm}{Theorem}[section]
\newtheorem{cor}[thm]{Corollary}
\newtheorem{lem}[thm]{Lemma}

\newtheorem{re}[thm]{Remark}

\numberwithin{equation}{section}

\begin{document}
\title{Existence and Uniqueness of Irregular Vectors\\ of Integer and Half-Integer Ranks for the Virasoro Algebra}
\author{Hajime Nagoya \\ School of Mathematics and Physics, Kanazawa University, \\Kanazawa, Ishikawa 920-1192, Japan \\ E-mail: \href{mailto:nagoya@se.kanazawa-u.ac.jp}{nagoya@se.kanazawa-u.ac.jp}}
\date{}
\maketitle

\begin{abstract}
   Although irregular vectors for the Virasoro algebra are widely used in modern
mathematical physics, a rigorous existence and uniqueness theorem in arbitrary
rank has not been available in the literature. 
In this paper, we develop an
algebraic framework, based on Virasoro differential operators on the parameter
space, which gives such a theorem for arbitrary integer and half-integer ranks.
A key ingredient is the construction of a canonical operator \(L_*\) from the
coefficient matrix of the vector-field part of a truncated Virasoro
realization. This operator closes the recursive system by isolating the
derivative with respect to the highest irregular parameter. Using this
mechanism, we prove the existence and uniqueness of formal irregular vectors
of arbitrary integer rank. We then construct the truncated Virasoro vector
fields required in the half-integer-rank setting and prove the existence and
uniqueness of the corresponding half-integer-rank formal irregular vectors.
We also prove that, after a scalar gauge normalization, the canonical
solutions satisfy the full lower Virasoro deformation equations.
These results provide an algebraic foundation for the rigorous construction
of irregular conformal blocks built from higher-rank irregular vectors. After passing to eigenvalue coordinates, the vector-field part of the
half-integer construction is identified with the differential realizations
appearing in the literature, while the zeroth-order terms are
explained by scalar gauge freedom.
\end{abstract}

\tableofcontents
\section{Introduction}

Irregular conformal blocks and irregular vectors of the Virasoro algebra have
played an important role in the interaction between two-dimensional conformal
field theory, gauge theory, and the theory of isomonodromic deformations. They
first appeared in the study of the AGT correspondence \cite{AGT} and its degenerations \cite{Gaiotto},
where special vectors in Virasoro Verma modules were used to describe 
conformal blocks associated with asymptotically free gauge theories. Since then, irregular conformal blocks have appeared
in a wide range of problems, including strong-coupling expansions of partition functions in Argyres-Douglas theories \cite{GT}, Painlev\'e and quantum Painlev\'e equations \cite{BST25, GIL13, LNR18, Nagoya15, Nagoya18},
confluent Heun equations \cite{BILT23, INS26, LN22}, black-hole perturbation theory \cite{CunhaCavalcante20, BILT22}, and exact WKB analysis \cite{IILZ25, I20}.

There are, however, two different kinds of objects that are often both called
irregular vectors. The first type, introduced by Gaiotto \cite{Gaiotto}, has
regular singular behavior with respect to the relevant deformation parameters.
These vectors, often called Gaiotto vectors, are characterized by Whittaker-type
conditions for the Virasoro algebra and have been studied extensively in
connection with the degenerations of the AGT correspondence \cite{FJK12, HJS10}. The second type, which is the subject of
the present paper, has genuinely irregular singular behavior. In the higher-rank
case, the corresponding formal vectors contain essential singularities in the
highest irregular parameter; for example, exponential factors of the form
\[
    \exp\left(\sum_{k>0} \frac{g_k}{c_r^k}\right).
\]
This distinction is important: the latter objects are not obtained by a
straightforward Whittaker construction and require a more delicate recursive
analysis.

Irregular vectors with genuinely irregular singular behavior were studied by
Gaiotto and Teschner \cite{GT} in connection with Argyres-Douglas theories.
They wrote down the expected Virasoro constraints for general rank and gave
explicit descriptions in the rank two and rank three cases, revealing recursive
structures that are expected to persist in general. A general form of the
defining relations for integer-rank irregular vectors was later written down by
Nishinaka and Uetoko \cite{NU19}. These works provided the physical and
algebraic motivation for the general higher-rank theory, but a mathematical
proof of the existence and uniqueness of the corresponding formal vectors was
not available.

Another approach to irregular conformal blocks is provided by irregular vertex
operators. In \cite{Nagoya15, Nagoya18}, such operators were introduced through
suitable intertwining relations involving irregular Verma modules and irregular
vectors, and they are used to construct irregular conformal blocks relevant to Painlev\'e
equations \cite{Nagoya15}, quantum Painlev\'e equations \cite{BST25}, and
confluent Heun equations \cite{INS26, LN22}. The recursive construction of
these operators already contains part of the mechanism used in the present
paper.

Half-integer ranks have appeared in the theory of irregular conformal blocks
since the early works on Gaiotto vectors \cite{Gaiotto} and irregular vertex
operators \cite{Nagoya18}. In the setting of genuinely irregular vectors,
the rank \(5/2\) case was proposed in \cite{PP}. A significant conjectural
framework for arbitrary half-integer ranks was later proposed by Hamachika,
Nakanishi, Nishinaka, and Tanigawa \cite{HNNT}, motivated by confluent limits
from integer-rank equations and supported by extensive finite-rank
computations. 
A related approach was developed in \cite{IILZ25}, where a first-order
differential realization of \(\mathrm{Vir}_+\) for arbitrary half-integer rank
was given in half-integer-indexed parameters, uniquely up to gauge equivalence.
Very recently, Zang gave a complementary construction using a
\(\mathbb Z_2\)-twisted boson and related it to the eigenvalue-recursion
parametrization; in that work the equivalence of the two parameterizations is
demonstrated explicitly in ranks \(3/2\) and \(5/2\).

The present paper adopts the parameter-space viewpoint of \cite{PP}: the
half-integer-rank vector is regarded as a vector in an integer-rank irregular
Verma module, with parameters \(c_0,c_1,\ldots,c_{r-1},\Lambda_{2r-1}\).
In this setting, however, a general construction of the Virasoro vector fields, as well as 
 an existence and uniqueness theorem for the corresponding formal vectors, was still missing.

The purpose of this paper is to provide such a construction. We prove the
existence and uniqueness of irregular vectors of arbitrary integer and
half-integer ranks. The key idea is to isolate the derivative with respect to
the highest irregular parameter: \(c_r\) in the integer-rank case and
\(\Lambda_{2r-1}\) in the half-integer-rank case.

Our construction is driven by a canonical operator \(L_*\).  It is obtained
from the coefficient matrix of the relevant parameter vector fields and
eliminates all intermediate deformation derivatives.  The result is a closed
recursive system for the coefficients of the formal irregular vector.  In the
half-integer-rank case, we first construct the required truncated Virasoro
vector fields intrinsically from their commutation relations with the scalar
higher modes; after passing to eigenvalue coordinates, these vector fields
recover the differential realizations appearing in the literature.

A related observation appears in \cite{IILZ25} in the rank \(5/2\) situation
related to the Painlev\'e I tau function, where a Virasoro generator
corresponding to differentiation with respect to the natural deformation
parameter is used to compute the expansion under an existence assumption.  The
operator \(L_*\) in the present paper gives a general higher-rank version of
this idea and is incorporated into the existence proof itself.

The higher-mode equations and the \(L_*\)-equation determine a canonical
solution.  We then prove that the lower Virasoro deformation equations are
compatible with this construction: their obstruction is scalar and can be
absorbed into a scalar gauge factor.  Thus, after scalar normalization, the
canonical solution satisfies the full lower-mode constraints.

These results give a rigorous foundation for irregular conformal blocks built
from higher-rank irregular vectors. They also clarify the algebraic deformation
structure underlying these formal objects.

The paper is organized as follows. 
In Section~2, we review the Virasoro algebra, irregular Verma modules,
and the construction of rank one irregular (Gaiotto) vectors.
We also recall a determinant formula from our previous work that will be used later.
Section~3 is devoted to the construction and uniqueness theorem for irregular
vectors of arbitrary integer rank. In particular, we introduce the canonical
operator \(L_*\) and derive the corresponding recursive system.
In Section~4, we construct the truncated Virasoro vector fields associated with
half-integer-rank singularities and prove their commutation relations.
Finally, we establish the existence and uniqueness theorem for
irregular vectors of arbitrary half-integer rank.

\section{Preliminaries}
The Virasoro algebra 
\begin{equation*}
	\Vir=\bigoplus_{n\in\Z}\C L_n\oplus \C c
\end{equation*}
is a Lie algebra with commutation relations
\begin{align*}
	&[L_m,L_n]=(m-n)L_{m+n}+\frac{c}{12}(m^3-m)\delta_{m+n,0}, 
\\ &[\Vir, c]=0.
\end{align*}
Let us denote by $M_\Delta$ a Verma module of the Virasoro algebra with the highest weight 
vector $|\Delta\rangle$. For a positive integer $r$ a vector $|I^{(r)}\rangle$ with 
$\Lambda=(\Lambda_r,\Lambda_{r+1},\ldots,\Lambda_{2r})\in \C^{r}\times \C^*$ satisfying 
\begin{equation}\label{eqdefIV1}
  L_n|I^{(r)}\rangle =\Lambda_n|I^{(r)}\rangle\quad (n=r,r+1,\ldots,2r)
\end{equation}
is called an irregular vector of rank $r$. 
A vector $|I^{(r-1/2)}\rangle$ with 
$\Lambda=(\Lambda_r,\Lambda_{r+1},\ldots,\Lambda_{2r-1})\in \C^{r-1}\times \C^*$ satisfying 
\begin{equation}\label{eqdefIV2}
  L_n|I^{(r-1/2)}\rangle =\Lambda_n|I^{(r-1/2)}\rangle\quad (n=r,r+1,\ldots,2r-1)
\end{equation}
is called an irregular vector of rank $r-1/2$. 
An irregular Verma module $M_\Lambda^{(r)}$ of rank $r$ is defined as an induced module
\begin{equation*}
    M_\Lambda^{(r)}= \mathrm{Ind}_{\mathrm{Vir}_{\geq r}} ^{\mathrm{Vir}} \C |I^{(r)}\rangle, 
\end{equation*}
 where $\mathrm{Vir}_{\geq r}$ ($r\in\Z_{\geq 0}$) is the subalgebra generated by $L_n$ ($n\geq r$).  
An irregular Verma module $M_\Lambda^{(r)}$ of rank $r$ is irreducible \cite{FJK12}.

For the determinant formula below, we put
\[
  \Lambda_n=0 \qquad (n>2r),
  \qquad
  \widetilde L_n:=L_n-\Lambda_n \qquad (n\ge r).
\]
Set 
\begin{equation*} 
\widetilde{L}_\lambda=\widetilde{L}_{\lambda_\ell+r}\cdots \widetilde{L}_{\lambda_1+r},\quad 
  L_{-\lambda}=L_{-\lambda_1+r}\cdots L_{-\lambda_\ell+r}
\end{equation*}
for a partition $\lambda=(\lambda_1,\ldots, \lambda_\ell)$. Then, $L_{-\lambda}|I^{(r)}\rangle$ for all partitions $\lambda$ forms the PBW basis of the irregular Verma module. We denote by $\{u\}$ the constant term ($|I^{(r)}\rangle$-component) for $u\in M_\Lambda^{(r)}$.  In the next sections, we use 
\begin{lem}[Corollary 2.23 \cite{Nagoya15}]\label{lem:det}
    For any $n,m\in\Z_{\geq 0}$, 
    \begin{equation*}
        \det \left(\left\{
        \widetilde{L}_\lambda L_{-\mu}|I^{(r)}\rangle\right\}\right)
        _{n\leq |\lambda|,|\mu|\leq m}=\Lambda_{2r}^{\sum_{i=n}^m ip(i)},
    \end{equation*}
    where $p(i)$ is the partition number of $i$. 
\end{lem}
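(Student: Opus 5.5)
The plan is to reduce the determinant to a product of diagonal blocks and then compute each block exactly. There are three steps: a grading argument for triangularity, an exact evaluation of the top-degree part of each diagonal block, and a bookkeeping of the constants.

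\textbf{Step 1: block triangularity.} Give $L_k$ degree $k-r$. Then $\widetilde L_\lambda$ raises degree by $|\lambda|$ and $L_{-\mu}$ lowers it by $|\mu|$. Move every $\widetilde L_j$ to the right, past the $L_{-\mu_s+r}$.
\begin{itemize}
\item The commutators $[L_a,L_b]=(a-b)L_{a+b}$ produce only modes $L_k$ with $k\ge r$.
\item These modes act on $|I^{(r)}\rangle$ by $\Lambda_k$, which vanishes for $k>2r$.
\end{itemize}
From this one checks that $\{\widetilde L_\lambda L_{-\mu}|I^{(r)}\rangle\}=0$ whenever $|\lambda|>|\mu|$. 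Ordering rows and columns by size, the matrix indexed by $n\le|\lambda|,|\mu|\le m$ is therefore block triangular. Its determinant is the product over $i=n,\dots,m$ of the $p(i)\times p(i)$ blocks $B_i=(\{\widetilde L_\lambda L_{-\mu}|I^{(r)}\rangle\})_{|\lambda|=|\mu|=i}$.

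\textbf{Step 2: each block has degree at most $ip(i)$ in $\Lambda_{2r}$.} Assign weight $1$ to $\Lambda_{2r}$ and weight $0$ to $\Lambda_k$ with $k<2r$. A single contraction of $\widetilde L_{a+r}$ against $L_{-b+r}$ yields $L_{a-b+2r}$, which gives $\Lambda_{2r}$ only when $a=b$. Hence every entry of $B_i$ is a polynomial of weight at most $\ell(\lambda)$, and also at most $\ell(\mu)$, in $\Lambda_{2r}$. Moreover, the top-weight part of $\det B_i$ comes only from complete matchings of the parts of $\lambda$ with equal parts of $\mu$. A matching of parts of different sizes produces a mode $L_k$ with $k\ne 2r$, which costs weight. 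Grading further by the multiplicities of parts, I expect the top-weight part of $B_i$ to be diagonal in the partition basis after a triangular change of basis. On the diagonal $\lambda=\mu=(1^{m_1}2^{m_2}\cdots)$, the top-weight part of the $(\lambda,\lambda)$ entry is $\prod_k m_k!\,(2k)^{m_k}\,\Lambda_{2r}^{m_k}$. This is a Heisenberg-type computation, since at top weight the modes $L_{k+r}$ and $L_{-k+r}$ commute like $a_k,a_{-k}$ with $[a_k,a_{-k}]=2k\Lambda_{2r}$. It then remains to show that no lower-weight terms survive in $\det B_i$, i.e.\ that $\det B_i$ is homogeneous of weight $\sum_{\lambda\vdash i}\ell(\lambda)=\sum_k(\text{number of parts equal to }k)$ summed over $\lambda\vdash i$.

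\textbf{Step 3 (main obstacle): exact exponent and constants.} I would first check that
\[
\sum_{\lambda\vdash i}\ell(\lambda)=\sum_{j\ge1}\sigma_0\text{-type counts},
\]
and compare with the claimed exponent $ip(i)$. Note that $ip(i)=\sum_{\lambda\vdash i}|\lambda|$, not $\sum_{\lambda\vdash i}\ell(\lambda)$. So the homogeneity argument must actually use a finer scaling: rescale $L_k\mapsto t^{k-r}L_k$ and $\Lambda_k\mapsto t^{k-r}\Lambda_k$, and use the irregular-module symmetry generated by $L_0$-type dilations. This will show that $\det B_i$ has total degree $ip(i)$ in the variables $\Lambda_{r+1},\dots,\Lambda_{2r}$ weighted by $k-r$. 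The earlier matching analysis then forces only the monomial $\Lambda_{2r}^{ip(i)}$ to survive, since lower $\Lambda_k$ appear only in strictly subleading entries of a triangular pattern.

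The remaining and hardest point is the numerical constant $\prod_{\lambda\vdash i}\prod_k m_k!(2k)^{m_k}$ produced by the diagonal entries. For the statement to hold as an exact equality, this constant must be accounted for by the normalization of the PBW vectors $L_{-\mu}|I^{(r)}\rangle$ and of $\widetilde L_\lambda$ fixed in \cite{Nagoya15}. I would therefore track the normalization there explicitly: either rescale the basis vectors by $\prod_k(m_k!(2k)^{m_k})^{-1/2}$ on each side, or verify that Corollary~2.23 of \cite{Nagoya15} is stated for the determinant modulo such basis normalizations. I would settle this first in the case $i=1$, where the block is $\{\widetilde L_{r+1}L_{r-1}|I^{(r)}\rangle\}=2\Lambda_{2r}$, before running the general argument.
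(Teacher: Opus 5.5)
\textbf{There is a genuine gap: the equality you are aiming at is false as printed, and your Step~3 tries to force it with an invalid scaling.}

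\textbf{The statement is false.} You already found that the $1\times1$ block is $2\Lambda_{2r}$, so the constant is wrong. The exponent is wrong too. For $i=2$, with rows and columns ordered $(2),(1,1)$, one gets for every $r\ge1$
\[
B_2=\begin{pmatrix}4\Lambda_{2r}&0\\ 0&8\Lambda_{2r}^{2}\end{pmatrix},\qquad \det B_2=32\,\Lambda_{2r}^{3},
\]
whereas $2p(2)=4$. The off-diagonal entries vanish because $[L_{r+2},L_{r-1}]=3L_{2r+1}$, and because $\widetilde L_{r+1}L_{r-2}|I^{(r)}\rangle=3\Lambda_{2r-1}|I^{(r)}\rangle$ is killed by the second $\widetilde L_{r+1}$. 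Rescaling basis vectors changes only constants, never the power of $\Lambda_{2r}$, and the bases are fixed in the statement anyway. The paper gives no proof of its own: it only cites the lemma, and it only uses that the determinant is non-zero when $\Lambda_{2r}\neq0$.

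\textbf{Step~3 fails.} The ``finer scaling'' does not exist. The weight $k-r$ is not compatible with the bracket, since $L_{a+b}$ has weight $(a-r)+(b-r)+r$, so no dilation realizes it. Even on its own terms, $\Lambda_{2r}^{ip(i)}$ would have weight $r\,ip(i)$, not $ip(i)$. The genuine dilation is the automorphism $L_k\mapsto t^kL_k$, $c\mapsto c$. Twisting $M^{(r)}_\Lambda$ by it gives the irregular Verma module with parameters $t^k\Lambda_k$. Comparing constant terms then shows that $G_{\lambda\mu}=\{\widetilde L_\lambda L_{-\mu}|I^{(r)}\rangle\}$ is weighted homogeneous of degree $|\lambda|-|\mu|+r(\ell(\lambda)+\ell(\mu))$, when $\Lambda_k$ has weight $k$ and the central charge weight $0$. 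Hence $\det B_i$ has weight $2r\sum_{\lambda\vdash i}\ell(\lambda)$. This forces the exponent $\sum_{\lambda\vdash i}\ell(\lambda)$, which is exactly what your Step~2 found, and not $ip(i)$.

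\textbf{What you can prove instead.} The corrected formula is what the paper actually needs, and your Steps~1--2 become a proof once combined with this dilation.
\begin{itemize}
\item Write $\widetilde L_\lambda L_{-\mu}|I^{(r)}\rangle=(\mathrm{ad}_{L_{\lambda_\ell+r}}\cdots\mathrm{ad}_{L_{\lambda_1+r}}L_{-\mu})|I^{(r)}\rangle$. Trace each factor $\Lambda_k$ in the constant term back to the set of original letters that were merged by brackets into the mode $L_k$ hitting $|I^{(r)}\rangle$.
\item Each such set contains a creation letter, since every annihilation letter attaches to a letter of $L_{-\mu}$. It also contains an annihilation letter, since bare creation letters are already in PBW order and are never bracketed with each other.
\item So every monomial has at most $\min(\ell(\lambda),\ell(\mu))$ factors $\Lambda_k$, each of weight at most $2r$.
\item Comparing with the homogeneity degree gives $G_{\lambda\mu}=0$ for $|\lambda|>|\mu|$. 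This is the correct justification of Step~1. Your bullet that commutators only produce modes with $k\ge r$ is false, e.g.\ $[L_{r+1},L_{-r-2}]=(2r+3)L_{-1}$.
\item For $|\lambda|=|\mu|$, the same comparison gives $G_{\lambda\mu}=0$ unless $\ell(\lambda)=\ell(\mu)$. In that case only perfect matchings of equal parts contribute.
\end{itemize}
So $B_i$ is actually diagonal and no lower-order terms survive. Writing $m_k(\lambda)$ for the number of parts of $\lambda$ equal to $k$,
\[
\det\bigl(G_{\lambda\mu}\bigr)_{n\le|\lambda|,|\mu|\le m}
=\prod_{i=n}^{m}\prod_{\lambda\vdash i}\prod_{k\ge1}m_k(\lambda)!\,(2k)^{m_k(\lambda)}\cdot
\Lambda_{2r}^{\sum_{i=n}^{m}\sum_{\lambda\vdash i}\ell(\lambda)}.
\]
This is a non-zero constant times a power of $\Lambda_{2r}$, and it is what you should state and prove in place of $\Lambda_{2r}^{\sum ip(i)}$.
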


\paragraph{Coefficient rings and formal completions.}
\label{par:coefficient-rings}
Throughout the paper the central element \(c\) acts by the scalar
\(1+6Q^2\), and we write
\[
  \Delta_a=a(Q-a).
\]
The central element \(c\) should not be confused with the parameters
\(c_i\).

We shall use the following coefficient rings and completions.  For the
integer-rank induction step \(r-1\to r\), where \(r\ge 2\), set
\[
  \mathcal K_r^{\mathrm{int}}
  :=\mathbb C(Q,c_0,c_0'),
  \qquad
  \mathcal A_r^{\mathrm{int}}
  :=
  \mathcal K_r^{\mathrm{int}}
  [c_1,\ldots,c_{r-1},c_{r-1}^{-1}].
\]
For the scalar-gauge discussion we also use the Laurent extension
\[
  \mathcal A_{r,\mathrm{Laur}}^{\mathrm{int}}
  :=
  \mathcal K_r^{\mathrm{int}}
  [c_1^{\pm1},\ldots,c_{r-1}^{\pm1}].
\]
Similarly, for the half-integer-rank construction \(r-1\to r-\frac12\),
set
\[
  \mathcal K_r^{\mathrm{half}}
  :=\mathbb C(Q,c_0),
  \qquad
  \mathcal A_r^{\mathrm{half}}
  :=
  \mathcal K_r^{\mathrm{half}}
  [c_1,\ldots,c_{r-1},c_{r-1}^{-1}],
\]
and
\[
  \mathcal A_{r,\mathrm{Laur}}^{\mathrm{half}}
  :=
  \mathcal K_r^{\mathrm{half}}
  [c_1^{\pm1},\ldots,c_{r-1}^{\pm1}].
\]
The smaller rings
\(\mathcal A_r^{\mathrm{int}}\) and
\(\mathcal A_r^{\mathrm{half}}\) are used for the recursive construction of
the canonical vectors.  The Laurent extensions are used only when scalar gauge
factors and logarithmic one-forms in the lower parameters are discussed.

Let \(M\) be the coefficient irregular Verma module which appears in a given
induction step, and let \(v_0\) denote its cyclic irregular vector.  For a
coefficient ring \(\mathcal A\) as above, we extend scalars to \(\mathcal A\)
and write
\[
  M_{\mathcal A}[d]
  :=
  \bigoplus_{|\lambda|=d}
  \mathcal A\, L_{-\lambda}v_0,
  \qquad
  \widehat M_{\mathcal A}
  :=
  \prod_{d\ge0} M_{\mathcal A}[d]. 
\]
  All formal vectors below are regarded as elements of this
PBW-degree completion.  At each fixed order in the highest irregular
parameter, the vectors constructed below contain only finitely many PBW degrees;
therefore Lemma~\ref{lem:det} is applied degree by degree.

We shall also fix the meaning of the formal exponential sectors used in the
construction.  Let \(t\) be the highest irregular parameter, and let
\(\mathcal A\) be one of the coefficient rings above.  For
\(g_1,\ldots,g_s,\nu\in \mathcal A\), we regard
\[
  t^\nu
  \exp\left(\sum_{j=1}^s \frac{g_j}{t^j}\right)
  \left(\widehat M_{\mathcal A}[[t]]\right)
\]
as a formal exponential sector.  The factor
\[
  P_t
  :=
  t^\nu
  \exp\left(\sum_{j=1}^s \frac{g_j}{t^j}\right)
\]
is treated as a formal prefactor.  Differentiation with respect to \(t\) is
defined by
\[
  \frac{\partial}{\partial t}\bigl(P_t F(t)\bigr)
  =
  P_t\left(
    \frac{\partial F}{\partial t}
    +
    \left(
      \frac{\nu}{t}
      -
      \sum_{j=1}^s \frac{j g_j}{t^{j+1}}
    \right)F(t)
  \right),
  \qquad
  F(t)\in \widehat M_{\mathcal A}[[t]].
\]
Here the coefficients \(g_j\) and \(\nu\) are independent of the highest
parameter \(t\), although they may depend on the lower parameters.

In the integer-rank construction we take
\[
  t=c_r,\qquad
  \mathcal A=\mathcal A_r^{\mathrm{int}},\qquad
  s=r-1,
\]
whereas in the half-integer-rank construction we take
\[
  t=\Lambda_{2r-1},\qquad
  \mathcal A=\mathcal A_r^{\mathrm{half}},\qquad
  s=r-1.
\]
Whenever we say that a scalar formal series in a prescribed exponential sector
is invertible, we mean that, after dividing by the fixed prefactor \(P_t\), its
constant term in \(\mathcal A[[t]]\) is a unit of \(\mathcal A\).

When $r=1$, an irregular vector 
can be constructed in terms of elements of $M_\Delta$. 
Set $\Lambda_1=2(Q-c_0)c_1$, $\Lambda_2=-c_1^2$, and 
$|I^{(1)}\rangle=\sum_{k=0}^\infty c_1^k v_k$ ($v_k\in M_\Delta$, $v_0=|\Delta\rangle$). 
Then, the relations \eqref{eqdefIV1} determine $v_k$ uniquely. Hence, 
an irregular vector of rank one is established as a power series in $c_1$ 
whose coefficients are elements of $M_\Delta$. 
We note that $L_0$ acts on the irregular vector as 
\begin{equation*}
L_0|I^{(1)}\rangle=\left(\Delta+c_1\frac{\partial}{\partial c_1}\right)|I^{(1)}\rangle.
\end{equation*}

 If $r>1$, then 
the relations \eqref{eqdefIV1} do not determine $v_k$ uniquely \cite{FJK12}. 
Instead, Gaiotto-Teschner \cite{GT} considered different series for $r=2$ ((3.5) in \cite{GT}) 
\begin{equation*}
  |I^{(2)}\rangle =c_1^{\nu_1}c_2^{\nu_2} e^{g_0(c_1)+g_1(c_1)/c_2}\sum_{k=0}^\infty c_2^k v_k.
\end{equation*}
Here, $v_0$ is the irregular vector $|I^{(1)}\rangle$ of rank one, and $v_k$ are its descendants. 
We can observe that the relations \eqref{eqdefIV1} with 
\begin{equation*}
  \Lambda_2=-c_1^2+c_2(3Q-2c_0),\quad \Lambda_3=-2c_1c_2,\quad \Lambda_4=-c_2^2
\end{equation*}
and 
\begin{equation*}
  L_0|\Lambda\rangle = c_0(Q-c_0)+c_1\frac{\partial}{\partial c_1}+2c_2\frac{\partial}{\partial c_2},\quad 
  L_1|\Lambda\rangle =2c_1(Q-c_0)+c_2\frac{\partial}{\partial c_1}
\end{equation*}
determine the parameters $\nu_1,\nu_2$, the functions $g_0(c_1)$, 
$g_1(c_1)$, and the descendants $v_k$. In the same approach, the irregular vector of rank $r$ 
could be constructed from the irregular vector of rank $r-1$ \cite{GT}, \cite{NU19}. 

We record the expected full form of the integer-rank irregular vector.  This
form will be established after scalar gauge normalization, in Section~3. We parametrize weights as 
\begin{equation*}
  \Lambda_{n}=((n+1)Q-2c_0)c_n-\sum_{k=1}^{n-1} c_k c_{n-k} \quad (n=1,\ldots,2r), 
\end{equation*}
where $c_i=0$ if $i<0$ or $i> r$. 

  For any \(r\in\Z_{\geq 2}\), the expected irregular vector is of the form
  \begin{equation}\label{eq:conjectural-integer-series}
    |I^{(r)}\rangle =c_1^{\nu_1}\cdots c_r^{\nu_r} \exp\left(\sum_{k=0}^{r-1} \frac{g_k(c_0',c_0,c_1,\ldots,c_{r-1})}{c_r^k}\right)
    \sum_{k=0}^\infty c_r^k v_k  
  \end{equation}
  satisfying 
  \begin{equation}\label{eq_irregular_condition_integer}
    L_n |I^{(r)}\rangle=\left(\Lambda_n+\sum_{k=1}^{r-n}k c_{n+k}\frac{\partial}{\partial c_k}\right) 
    |I^{(r)}\rangle\quad (n>0),\quad 
    L_0|I^{(r)}\rangle=\left(\Delta_{c_0}+\sum_{k=1}^rkc_k\frac{\partial}{\partial c_k}\right)
    |I^{(r)}\rangle,  
  \end{equation}
  where $v_0$ is an irregular vector $|I^{(r-1)}\rangle$ of rank $r-1$  satisfying the relations
  \begin{align*}
    L_n |I^{(r-1)}\rangle=\left(\Lambda_n'+\sum_{k=1}^{r-n}k c_{n+k}\frac{\partial}{\partial c_k}\right) 
    |I^{(r-1)}\rangle\quad (n>0),\quad 
    L_0|I^{(r-1)}\rangle=\left(\Delta_{c_0'}+\sum_{k=1}^{r-1}kc_k\frac{\partial}{\partial c_k}\right)
    |I^{(r-1)}\rangle, 
  \end{align*}
  where \begin{equation*}
    \Lambda_n'=((n+1)Q-2c_0')c_n-\sum_{k=1}^{n-1} c_k c_{n-k},  \quad c_j=0 \text{ for } j\geq r,  
\end{equation*} 
and $v_k$ are descendants of $v_0=|I^{(r-1)}\rangle$.

\begin{re}
The half-integer-rank case requires a small additional comment.  The
highest-mode eigenvalue conditions can be written down formally, but a
formal-series construction analogous to the integer-rank case also
involves the lower Virasoro actions, or equivalently the corresponding
Virasoro vector fields on the chosen parameter space.  In the present paper
we use the parameter space
\[
  c_0,c_1,\ldots,c_{r-1},\Lambda_{2r-1},
\]
and the required vector fields will be constructed in
Section~4.  We therefore postpone the
precise half-integer-rank formulation until then.
\end{re}

\section{Construction of the Irregular Vector of Integer Rank}

We establish a rigorous construction of the rank \(r\) irregular vector
\(|I^{(r)}\rangle\) from the rank \(r-1\) irregular vector
\(|I^{(r-1)}\rangle\), for \(r\in\mathbb Z_{\ge2}\).

\subsection{Construction of the Canonical Operator}
To uniquely construct the canonical irregular vector appearing in
Theorem~\ref{thm:int_to_int}, it is not necessary to use all the Virasoro
generators \(L_n\) with \(n\ge0\).
 The necessary ingredients are the higher generators $L_n$ ($n\geq r$) and an appropriate master operator $L_*$. 
We remark that for the specific case of rank $5/2$, an analogous operator was introduced in \cite{IILZ25} by requiring the existence of an anti-homomorphism from a certain Virasoro subalgebra to the ring of differential operators. 
Our viewpoint in this paper is based on a constructive motivation applicable to general rank $r$; we define $L_*$ systematically as a specific combination of $L_0, L_1, \ldots, L_{r-1}$ designed to directly eliminate the intermediate differential operators $\dfrac{\partial}{\partial c_i}$ ($i=1,\ldots,r-1$), thereby leading to a closed system of recurrence relations.

The elimination of the differential operators is performed as follows. 
Let the pure differential operator part obtained by subtracting the scalar part be denoted as 
\begin{equation*}
    \mathcal{D}_n = \sum_{k=1}^{r-n}k c_{n+k}\frac{\partial}{\partial c_k} \quad (n>0), \quad 
    \mathcal{D}_0 = \sum_{k=1}^rkc_k\frac{\partial}{\partial c_k}.
\end{equation*}

Writing these out for $n=0, 1, \dots, r-1$ and summarizing them in matrix notation $\mathcal{D} = M \partial_c$, we obtain the following:
\begin{equation}
    \begin{pmatrix}
        \mathcal{D}_0 \\
        \mathcal{D}_1 \\
        \vdots \\
        \mathcal{D}_{r-2} \\
        \mathcal{D}_{r-1}
    \end{pmatrix}
    =
    \begin{pmatrix}
        c_1 & 2c_2 & \cdots & (r-1)c_{r-1} & r c_r \\
        c_2 & 2c_3 & \cdots & (r-1)c_r & 0 \\
        \vdots & \vdots & \reflectbox{$\ddots$} & 0 & 0 \\
        c_{r-1} & 2c_r & \cdots & 0 & 0 \\
        c_r & 0 & \cdots & 0 & 0
    \end{pmatrix}
    \begin{pmatrix}
        \frac{\partial}{\partial c_1} \\
        \frac{\partial}{\partial c_2} \\
        \vdots \\
        \frac{\partial}{\partial c_{r-1}} \\
        \frac{\partial}{\partial c_r}
    \end{pmatrix}
    \label{eq:matrix_form}
\end{equation}

This $r \times r$ coefficient matrix $M$ has an anti-upper-triangular structure, where all elements below the anti-diagonal are zero. Hence, the determinant is \begin{equation}
    \det(M) = (-1)^{\frac{r(r-1)}{2}} r! (c_r)^r,
\end{equation}
thus, the matrix $M$ is invertible. Consequently, the inverse matrix $M^{-1}$ exists, and each partial derivative $\frac{\partial}{\partial c_k}$ can be expressed as a linear combination of $\mathcal{D}_0, \dots, \mathcal{D}_{r-1}$.

Let us particularly consider extracting only $\frac{\partial}{\partial c_r}$ (the $r$-th component of the column vector $\partial_c$). Then 
\begin{equation*}
    \frac{\partial}{\partial c_r} = \sum_{n=0}^{r-1} (M^{-1})_{r, n+1} \mathcal{D}_n.
\end{equation*}
We define the corresponding Virasoro operator by 
\begin{equation}\label{eq:def-Lstar-int}
    L_*:=c_r^r  \sum_{n=0}^{r-1} (M^{-1})_{r, n+1} D_n,
\end{equation}
where \(D_0=L_0-\Delta_{c_0}\) and \(D_n=L_n-\Lambda_n\) for \(1\le n\le r-1\). We note that since we multiply $c_r^r$ by $(M^{-1})_{r,n+1}$, this implies that $L_*$ is a polynomial of $c_1,\ldots,c_r$ of degree at most $r-1$ because all entries of the original matrix $M$ are linear polynomials in $c_k$ (or zero). 

\subsection{Canonical Irregular Vector of Integer Rank}
In the following theorem, we fix the scalar gauge by omitting the factor 
\(c_1^{\nu_1}\cdots c_{r-1}^{\nu_{r-1}}e^{g_0}\). 
Since the construction below only uses the higher-mode conditions and the \(L_*\)-equation, no derivatives with respect to the lower parameters occur. 
Thus multiplication by a non-zero scalar function of \(c_1,\ldots,c_{r-1}\) does not affect the existence and uniqueness statement proved here.
\begin{thm}\label{thm:int_to_int}
For any $r\in \mathbb{Z}_{\geq 2}$ there exists a unique formal series
\begin{equation}\label{eq:series_int_to_int}
    |\widetilde I^{(r)}\rangle = c_r^{\nu_r} \exp\left(\sum_{k=1}^{r-1} \frac{g_k(c_0',c_0,c_1,\ldots,c_{r-1})}{c_r^k}\right)
    \sum_{k=0}^\infty c_r^k v_k  
\end{equation}
satisfying 
\begin{equation}\label{eq:integer-canonical-conditions}
    L_*|\widetilde I^{(r)}\rangle= c_r^r\frac{\partial}{\partial c_r} |\widetilde I^{(r)}\rangle,\quad 
    L_n |\widetilde I^{(r)}\rangle=\Lambda_{n} 
    |\widetilde I^{(r)}\rangle\quad (n=r,r+1,\ldots,2r),
\end{equation}
where $v_0$ is an irregular vector $|I^{(r-1)}\rangle$ of rank $r-1$ satisfying the relations
\begin{align*}
    L_n |I^{(r-1)}\rangle=\Lambda_n'
    |I^{(r-1)}\rangle\quad (n=r-1,r,\ldots,2r-2),  
\end{align*}
where 
\begin{equation*}
    \Lambda_n'=((n+1)Q-2c_0')c_n-\sum_{k=1}^{n-1} c_k c_{n-k},  \quad c_j=0 \text{ for } j\geq r,  
\end{equation*}
and $v_k$ are descendants of $v_0=|I^{(r-1)}\rangle$.
\end{thm}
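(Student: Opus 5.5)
We substitute the ansatz \eqref{eq:series_int_to_int} into \eqref{eq:integer-canonical-conditions} and expand everything in powers of \(t=c_r\). The scalar data \(\nu_r,g_1,\dots,g_{r-1}\) are then fixed by the lowest orders, and the \(v_k\) are determined recursively using the nondegeneracy of the pairing in Lemma~\ref{lem:det}.

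\textbf{Step 1: expansion in \(c_r\).} The weights \(\Lambda_n\) are polynomials in \(c_r\). Comparing with \(\Lambda_n'\) gives
\[
\Lambda_n=\Lambda_n'+c_r\,\alpha_n+c_r^2\,\beta_n ,
\]
where \(\alpha_n\) is linear in \(c_0,c_0',c_1,\dots,c_{r-1}\) and \(\beta_{2r}=-1\). Here \(\Lambda_{2r-1}'=\Lambda_{2r}'=0\), and \(\Lambda_n'\) for \(n=r-1,\dots,2r-2\) are exactly the eigenvalues of \(v_0\). The coefficient \((M^{-1})_{r,n+1}c_r^r\) is a polynomial in \(c_1,\dots,c_r\). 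Hence \(L_*\) also expands as a finite sum \(\sum_j c_r^j L_*^{(j)}\), where each \(L_*^{(j)}\) is a combination of \(L_0,\dots,L_{r-1}\) and scalars with coefficients in \(\mathcal A_r^{\mathrm{int}}\).

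We write \(|\widetilde I^{(r)}\rangle=P_t F\), with \(P_t\) the formal prefactor. The \(L_*\)-equation becomes
\[
L_*F=\Bigl(c_r^r\partial_{c_r}+c_r^{r-1}\nu_r-\sum_{j} j g_j c_r^{r-1-j}\Bigr)F .
\]
Since \(j\le r-1\), every power of \(c_r\) on the right is nonnegative. Therefore the whole system is an equation in \(\widehat M_{\mathcal A}[[c_r]]\), and we compare coefficients of \(c_r^N\).

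\textbf{Step 2: scalar data.} At the lowest orders, applying \(\widetilde L_n\) (for \(n\ge r\)) and \(L_*^{(j)}\) to \(v_0\) produces only scalar multiples of \(v_0\), through \(L_k v_0=\Lambda_k' v_0\) for \(k\ge r-1\) and through the action of \(L_0,\dots,L_{r-2}\). Taking the constant term \(\{\cdot\}\) of the \(L_*\)-equation order by order in \(c_r^{0},\dots,c_r^{r-1}\) gives triangular scalar equations. These determine \(g_{r-1},g_{r-2},\dots,g_1,\nu_r\) in turn, each as an element of \(\mathcal A_r^{\mathrm{int}}\); the division needed here is only by \(c_{r-1}\) and by nonzero integers. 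Alternatively, one can read off \(g_{r-1}\) by demanding consistency of the top mode \(L_{2r}\) with \(\beta_{2r}=-1\).

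\textbf{Step 3: recursion for \(v_k\).} At order \(c_r^{N}\), the higher-mode equations take the form
\[
(L_n-\Lambda_n')\,v_N = R_{n,N}(v_0,\dots,v_{N-1}), \qquad n\ge r-1 .
\]
The shift of modes comes from the \(c_r\)-terms \(\alpha_n,\beta_n\); for instance \(L_{2r}\) at order \(N\) involves \(v_{N-2}\), while \(L_{2r-1}\) involves \(v_{N-1}\). Similarly, the \(L_*\)-equation at order \(N+r-1\) contains \(N v_N\) together with lower terms. It thus fixes the \(v_0\)-component of \(v_N\), because the equations for \(n\ge r-1\) only constrain \(v_N\) modulo \(\C v_0\).

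Two things must be checked for each \(N\): first, that \(v_N\) lies in finitely many PBW degrees (by induction, \(\deg v_N\le 2N\) or similar); second, that the prescribed values of the \(\widetilde L_\lambda\)-actions are mutually consistent. Once consistency holds, Lemma~\ref{lem:det} (for the rank \(r-1\) module, applied degree by degree) shows that the matrix \(\{\widetilde L_\lambda L_{-\mu}v_0\}\) is invertible over \(\mathcal A\), since its determinant is a power of \(\Lambda'_{2r-2}=-c_{r-1}^2\), a unit. Hence \(v_N\) exists and is unique. Uniqueness of the whole series follows from the same triangularity.

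\textbf{Main obstacle.} The hard step is the consistency in Step 3. The right-hand sides \(R_{n,N}\) must be compatible with the Virasoro relations \([L_m,L_n]=(m-n)L_{m+n}\) for \(m,n\ge r-1\), so that they actually define a linear functional on the module. The same compatibility is needed between the \(L_*\)-equation and the higher modes.

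I would first prove an operator identity of the form
\[
[L_*,L_n]-c_r^r\partial_{c_r}\Lambda_n\in\sum_{m\ge r}\mathcal A[c_r]\,(L_m-\Lambda_m)
\]
on the relevant space, using that \(L_*\) realizes \(c_r^r\partial_{c_r}\) through the matrix \(M\) and the relation \(\mathcal D_n\Lambda_m\) inherited from the Virasoro brackets. Then the consistency at order \(N\) follows inductively from the equations already solved at lower orders. This is the step where I expect the real computation to lie.
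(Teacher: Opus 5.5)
Your ingredients match the paper's: expansion in $c_r$, solving the higher modes through the Gram matrix of Lemma~\ref{lem:det}, and reading the scalar data off constant terms of the $L_*$-equation. However, there is a genuine gap in how you order the recursion. With $L_*=\sum_i c_r^i f_i$, the coefficient of $c_r^k$ in the $L_*$-equation is
\[
\sum_{i=0}^{r-2}\bigl(f_i+(r-1-i)g_{r-1-i}\bigr)v_{k-i}+\bigl(f_{r-1}-\nu_r-k+r-1\bigr)v_{k-r+1}=0 .
\]
Because of the term $f_0v_k$, the equation at order $N+r-1$ contains $v_{N+1},\dots,v_{N+r-1}$, not only $Nv_N$ and lower terms. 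The non-constant part of $v_{N+j}$ is produced by the higher modes from $v_{N+j-1},v_{N+j-2}$, so it depends on the still unknown constants $\{v_N\},\{v_{N+1}\},\dots$. Your scheme (solve $v_N$ modulo $\C v_0$, then fix $\{v_N\}$ at order $N+r-1$) is therefore circular as stated.

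Step 2 has the same defect: for $1\le k\le r-1$ the order-$k$ equation involves the non-constant parts of $v_1,\dots,v_k$, so the scalars cannot be fixed before Step 3. Two further claims there are false. First, $L_0,\dots,L_{r-2}$ act on $v_0$ as PBW creation operators, not by scalars. Second, the higher modes commute with $P_t$, so $L_{2r}$ cannot determine $g_{r-1}$.

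The missing idea is the paper's bookkeeping. Keep $c^{(j)}_\emptyset=\{v_j\}$ as unknowns and write $v_k=X_k+\sum_{i\ge1}c^{(i)}_\emptyset X_{k-i}$ with constant-free $X_j$ (equivalently, $F=\phi(c_r)X$ for a scalar series $\phi$). Then show that in the constant term of the order-$k$ equation the only new unknown is $g_{r-1-k}$, $\nu_r$, or $c^{(k-r+1)}_\emptyset$, each with a nonzero integer coefficient ($r-1-k$, $-1$, $k-r+1$). This uses $f_0=\frac{(-1)^{r-1}}{r}c_{r-1}^{r-1}D_{r-1}$ (Lemma~\ref{lem:f0}), so that $f_0+(r-1)g_{r-1}$ annihilates $v_0$. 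Together with the lower-order equations, this removes every $c^{(j)}_\emptyset$ with $j>k-r+1$.

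Second, what you defer as the main obstacle is exactly what makes the non-scalar part of the $L_*$-equation hold. Your Step 3 only imposes its $v_0$-component. The paper shows that, on vectors satisfying the higher modes, $[L_n,L_*]=c_r^rE(n)$ with $E(n)=\sum_{j=1}^r(M^{-1})_{r,j}(n-j+1)\Lambda_{n+j-1}$, and computes $E(n)=\partial_{c_r}\Lambda_n$ explicitly:
\begin{itemize}
\item for $n>r$, write $n-j+1=(n-p)+(n-q)$ over $p+q=n+j-1$ and use $\sum_j(M^{-1})_{r,j}M_{j,m}=\delta_{r,m}$, which gives $E(n)=-2c_{n-r}$;
\item for $n=r$, use $(M^{-1})_{r,1}=1/(rc_r)$ from Lemma~\ref{lem:inverse_A}, which gives $E(r)=(r+1)Q-2c_0$.
\end{itemize}
By induction this gives $\widetilde L_nZ_k=0$ for all $n\ge r$, where $Z_k$ is the left-hand side above. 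Since the constant term of $Z_k$ was forced to vanish, Lemma~\ref{lem:det} yields $Z_k=0$.

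Your proposed identity is the right one, up to sign: the correct form is $[L_n,L_*]-c_r^r\partial_{c_r}\Lambda_n$. But the proof needs this computation and the step ``annihilated by all $\widetilde L_n$ with zero constant term implies zero'', and both are absent. By contrast, the higher-mode consistency you flag is not where the paper spends effort. It takes the Gram-matrix solution as satisfying those equations directly; compatibility rests on $L_n\mapsto\Lambda_n$ ($n\ge r$) being a character of $\Vir_{\ge r}$, since $\Lambda_n=0$ for $n>2r$.
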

 
\begin{proof}
Set
\begin{equation*}
    L_*=\sum_{i=0}^{r-1}c_r^i f_i(L_0,L_1,\ldots,L_{r-1}). 
\end{equation*}
Here, $f_i$ are polynomials (operators) in $L_0, \dots, L_{r-1}$ with coefficients depending on $c_1, \dots, c_{r-1}$.

The defining relations \eqref{eq:integer-canonical-conditions} are equivalent to the following relations for $v_k$:
\begin{align}
    &\sum_{i=0}^{r-2} \left( f_i+(r-1-i)g_{r-1-i}    \right)  v_{k-i}
    +\left(f_{r-1}-\nu_r-k+r-1 \right)v_{k-r+1}=0,\label{eq:L_star_vk}
    \\
   & \widetilde{L}_r v_k = ((r+1)Q - 2c_0) v_{k-1}, \label{eq:L_r_vk}\\
    &\widetilde{L}_n v_k = -2 c_{n-r} v_{k-1} \quad (r < n < 2r), \label{eq:Ln-between-r-and-2r-vk}\\
    &L_{2r} v_k = - v_{k-2}, \label{eq:L_2r_vk}\\
    &L_n v_k = 0 \quad (n > 2r).\label{eq:Ln-greater-2r-vk}
\end{align}
We use the convention \(v_j=0\) for \(j<0\). 
This is because, in the definition of $\Lambda_n$, the constant term independent of $c_r$ exactly matches the weight $\Lambda_n'$ of rank $r-1$. Therefore, by comparing the coefficients of $c_r^k$ on both sides and setting $\widetilde{L}_n = L_n - \Lambda_n'$, we obtain the above relations.

It suffices to prove that the vectors $v_k$ satisfying these relations \eqref{eq:L_star_vk}--\eqref{eq:Ln-greater-2r-vk} can be uniquely constructed for all $k \in \mathbb{Z}_{\geq 1}$.

Set 
\begin{equation*}
  \widetilde{L}_\lambda=\widetilde{L}_{\lambda_\ell+r-1}\cdots \widetilde{L}_{\lambda_1+r-1},\quad 
  L_{-\lambda}=L_{-\lambda_1+r-1}\cdots L_{-\lambda_\ell+r-1}
\end{equation*}
for a partition $\lambda=(\lambda_1,\ldots, \lambda_\ell)$. 
Because of the relations \eqref{eq:L_r_vk}--\eqref{eq:Ln-greater-2r-vk}, we can denote $v_k$ by 
\begin{equation*}
  v_k=\sum_{|\lambda|\leq rk}c_\lambda^{(k)}L_{-\lambda}|I^{(r-1)}\rangle. 
\end{equation*}

First, we construct $v_k$ satisfying the relations \eqref{eq:L_r_vk},\eqref{eq:Ln-between-r-and-2r-vk}, \eqref{eq:L_2r_vk}, and \eqref{eq:Ln-greater-2r-vk} inductively. 
By definition, we have the matrix equation:
\begin{equation*}
 \left( \left\{ \widetilde{L}_\mu v_k \right\}\right)_{\mu}=
 \left( \left\{\widetilde{L}_\mu L_{-\lambda}|I^{(r-1)}\rangle\right\}\right)_{\mu,\lambda}
  \left(c_\lambda^{(k)}\right)_\lambda,  
\end{equation*}
where the summation is over all partitions $\lambda$, $\mu$ such that $1\leq |\lambda|, |\mu|\leq rk$. Lemma \ref{lem:det} is applied to the rank $r-1$ irregular module, hence the determinant of the associated Gram matrix is proportional to powers of 
  $\Lambda_{2r-2}'=-c_{r-1}^2$.  
Hence, 
 $c_\lambda^{(k)}$ for $|\lambda|\geq 1$
 are uniquely solved as polynomials in 
\begin{equation*}
    g_1,\ldots,g_{r-1},\nu_r,c_0',c_0,c_1,\ldots,c_{r-1},  c_{r-1}^{-1},c_\emptyset^{(1)},\ldots,c_\emptyset^{(k-1)}, 
\end{equation*}
  and the relations \eqref{eq:L_r_vk}--\eqref{eq:Ln-greater-2r-vk} hold. 

Next, we show that $g_i$ ($i=1,\ldots,r-1$), $\nu_r$, and $c_\emptyset^{(i)}$ ($i\geq 1$) can be uniquely solved as polynomials in 
\begin{equation*}
    c_0',c_0,c_1,\ldots,c_{r-1},c_{r-1}^{-1}.
\end{equation*} 

Setting $k=0$ in equation (\ref{eq:L_star_vk}), we obtain
\begin{equation}
    f_0v_0 = - (r-1) g_{r-1} v_0.
\end{equation}
From Lemma \ref{lem:f0}, $f_0$ depends only on $L_{r-1}$, hence the unknown function $g_{r-1}$ is uniquely determined.

We proceed by mathematical induction. Suppose that $v_0,v_1,\ldots,v_{k-1}$ satisfy the relations \eqref{eq:L_star_vk}--\eqref{eq:Ln-greater-2r-vk}, and that $c_\lambda^{(j)}$ ($j\leq k-1$) are polynomials in 
\begin{equation*}
    c_0',c_0,c_1,\ldots,c_{r-1},c_{r-1}^{-1},c_\emptyset^{(k-1)},\ldots,
    c_\emptyset^{(k-r+1)}.
\end{equation*} 
Furthermore, for $k\leq r$, assume that $g_{r-1},\ldots,g_{r-k}$ are polynomials in $c_0',c_0,c_1,\ldots,c_{r-1},c_{r-1}^{-1}$. 
For $k>r$, assume that 
\begin{equation*}
    g_{r-1},\ldots,g_1,\nu_r,c_\emptyset^{(1)},\ldots,c_\emptyset^{(k-r)}
\end{equation*}
are polynomials in $c_0',c_0,c_1,\ldots,c_{r-1},c_{r-1}^{-1}$. 

  We set $X_0=v_0$, for $k>0$
  \[
 X_k
 =
 v_k-\sum_{i=\max(1,k-r+1)}^k
 c_\emptyset^{(i)}X_{k-i},
\]
and $X_k=0$ for $k<0$. 
   For $k>0$, $X_k$ has no constant term. Furthermore, it does not depend on $c_\emptyset^{(j)}$. For example, since \(X_1=v_1-c_\emptyset^{(1)}v_0\), applying
\(\widetilde L_n\) to \(X_1\) produces expressions involving only the
parameters \(c_0',c_0,c_1,\ldots,c_{r-1}\).  The same property follows for all
\(X_k\) by induction.
Put
\begin{align*}
&A=f_0+(r-1)g_{r-1},\quad
B_j=-\left(f_j+(r-1-j)g_{r-1-j}\right)
\quad (1\leq j\leq r-2),\\
&C_k=-f_{r-1}+\nu_r+k-r+1.    
\end{align*}
Applying \(A\) to \(X_k\) and using the \(L_*\)-recurrence gives 
\begin{align*}
A X_k
={}&
\sum_{i=1}^{r-2} B_i X_{k-i}
+
\sum_{i=1}^{r-2}
\sum_{j=\max(1,k-i-r+1)}^{k-r}
c_{\emptyset}^{(j)}B_iX_{k-i-j}  
+
C_k v_{k-r+1}
-
c_\emptyset^{(k-r+1)} C_{r-1} v_0 .
\end{align*}
Here, the second sum is understood to be empty when the upper bound is smaller than the lower bound, and $c_\emptyset^{(j)}=0$ for ($j\leq 0$) in this formula.

Taking the constant term determines \(g_{r-k-1}\) if \(k<r-1\),
\(\nu_r\) if \(k=r-1\), and \(c_\emptyset^{(k-r+1)}\) if \(k>r-1\),
successively, as polynomials in
\(c'_0,c_0,\ldots,c_{r-1},c_{r-1}^{-1}\).
Indeed, in the last case, \(c_\emptyset^{(j)}\) for \(j\leq k-r\)
have already been determined as polynomials in
\(c'_0,c_0,\ldots,c_{r-1},c_{r-1}^{-1}\), and the new coefficient appears
with coefficient
\[
C_k-C_{r-1}=k-r+1.
\]

Finally, we show that the relation \eqref{eq:L_star_vk} holds. 
Since $g_i$ ($i=1,\ldots,r-1$), $\nu_r$, $c_\emptyset^{(i)}$ ($i\geq 1$) are solved by looking at the constant term of \eqref{eq:L_star_vk}, 
we need to show that the actions of $\widetilde{L}_n$ ($n\geq r$) on the left-hand side of \eqref{eq:L_star_vk} are equal to zero. We again use induction on $k$, namely, we suppose that $v_0,v_1,\ldots,v_{k-1}$ satisfy the relations \eqref{eq:L_star_vk}.  Applying $\widetilde{L}_n$ on the left-hand side of \eqref{eq:L_star_vk}, we obtain
\begin{align}
    &\sum_{i=0}^{r-1}  [L_n,f_i]   v_{k-i}
    +2c_{n-r}v_{k-r}\quad (r<n<2r),
    \label{eq:Ln-act-Lstar}
    \\
    &\sum_{i=0}^{r-1}  [L_r,f_i]   v_{k-i}
    -((r+1)Q-2c_0)v_{k-r},\label{eq:Lr-act-Lstar}
    \\
   & \sum_{i=0}^{r-1}  [L_{2r},f_i]   v_{k-i}
    +2v_{k-r-1}\label{eq:L2r-act-Lstar}
\end{align}
by induction. We note that each $f_i$ does not depend on $k$. 

Noting that equation \eqref{eq:L_star_vk} is obtained by applying the $r$-th row of $M^{-1}(D_0,\ldots,D_{r-1})^\mathsf{T}$ to $|\widetilde I^{(r)}\rangle$, we compute the action as follows, rather than calculating the action on each term of \eqref{eq:L_star_vk} individually:
\begin{equation*}
    \widetilde{L}_n M^{-1}\left(L_0,\ldots,L_{r-1}\right)^\mathsf{T} |\widetilde I^{(r)}\rangle
    = M^{-1}\left( n\Lambda_n,(n-1)\Lambda_{n+1},\ldots,(n-r+1)\Lambda_{n+r-1}\right)^\mathsf{T} |\widetilde I^{(r)}\rangle. 
\end{equation*}
Since \(L_*\) was defined using the \(r\)-th row of
\(M^{-1}(D_0,\ldots,D_{r-1})^\mathsf T\), we compute
\begin{equation*}
    E(n)=\sum_{j=1}^r (M^{-1})_{r,j} (n-j+1) \Lambda_{n+j-1}.
\end{equation*}
Here, noting that
\[
  \Lambda_n
  =
  -\sum_{\substack{p+q=n\\ p,q>0}}c_pc_q
  \qquad (n\ge r+1)
\]
and
\[
  \Lambda_r
  =
  ((r+1)Q-2c_0)c_r
  -
  \sum_{\substack{p+q=r\\ p,q>0}}c_pc_q,
\]
where \(c_i=0\) for \(i>r\), for \(n>r\) we have
\begin{align*}
    E(n) =& -\sum_{j=1}^r (M^{-1})_{r,j} \sum_{\substack{p+q=n+j-1\\p,q>0}} (n-j+1) c_p c_q
    \\
=&-\sum_{j=1}^r (M^{-1})_{r,j} \sum_{\substack{p+q=n+j-1\\p,q>0}} (n-p+n-q) c_p c_q
 \\
 =& -2 \sum_p c_p \left( \sum_{j=1}^r (M^{-1})_{r,j} (n-p) c_{(n-p)+j-1} \right) \\
    =& -2 \sum_p c_p \delta_{r, n-p} \\
    =& -2 c_{n-r}.
\end{align*}
For $n=r$, from Lemma \ref{lem:inverse_A}, we have
\begin{equation*}
    (M^{-1})_{r,1}=\frac{1}{rc_r},
\end{equation*}
hence, since the quadratic terms vanish as above,   
\begin{equation*}
    E(r)=(r+1)Q-2c_0.
\end{equation*}
Therefore, since the definition of $L_*$ \eqref{eq:def-Lstar-int} and equation \eqref{eq:L_star_vk} are obtained by examining the coefficient of $c_r^k$ in $\left(L_*-c_r^r\dfrac{\partial}{\partial c_r}\right)|\widetilde I^{(r)}\rangle$, the expressions \eqref{eq:Ln-act-Lstar}, \eqref{eq:Lr-act-Lstar}, \eqref{eq:L2r-act-Lstar} are identically equal to zero. Denote by $Z_k$ the left-hand side of \eqref{eq:L_star_vk}. 
By construction, the coefficient of $v_0$ in $Z_k$ is zero. Moreover, the above computation shows that $\widetilde{L}_\mu Z_k=0$ for all relevant partitions $\mu$. Hence, by Lemma \ref{lem:det}, the non-constant descendant part of $Z_k$ is also zero. Therefore, $Z_k=0$. 
This completes the proof. 
\end{proof}

\subsection{Scalar Gauge and the Full Lower Virasoro Deformation Equations}

In this subsection, we prove that the canonical vector constructed in
Theorem~\ref{thm:int_to_int} can be normalized by a scalar gauge factor so that it
satisfies the full lower Virasoro deformation equations.

For the integer-rank parameter space, set
\[
 \mathcal L_0:=\Delta_{c_0}+\mathcal D_0,\qquad
 \mathcal L_n:=\Lambda_n+\mathcal D_n\quad(1\le n\le r-1),
\]
and, for \(n\ge r\),
\[
 \mathcal L_n:=\Lambda_n .
\]
We then define
\[
 R_n:=L_n-\mathcal L_n\qquad(n\ge0).
\]
Equivalently, for \(0\le n\le r-1\),
\[
 R_n=D_n-\mathcal D_n,
\]
while for \(n\ge r\),
\[
 R_n=L_n-\Lambda_n.
\]

\begin{lem}[Commutation with parameter vector fields]\label{lem:commutativity}
Assume that the rank \(r-1\) vector
\[
 v_0=|I^{(r-1)}\rangle
\]
satisfies the full lower Virasoro deformation equations
\[
 R_i^{(r-1)}v_0=0\qquad(0\le i\le r-2).
\]
Then the parameter vector fields \(\mathcal D_i\) which appear in the
rank \(r\) construction commute with the Virasoro action on the completed
rank \(r-1\) irregular Verma module:
\[
 [L_n,\mathcal D_i]=0
 \qquad(n\in\mathbb Z,\ 0\le i\le r-1).
\]
\end{lem}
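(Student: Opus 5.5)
The plan is to reduce the statement to the tautology that a parameter-independent linear operator commutes with differentiation in the parameters. To do this I would realize the completed rank \(r-1\) irregular Verma module concretely, inside a space on which the \(L_n\) act without reference to the parameters. Unwinding the inductive construction (rank one from \(M_\Delta\), then Theorem~\ref{thm:int_to_int} at each step), \(v_0=|I^{(r-1)}\rangle\) is a formal series in \(c_1,\ldots,c_{r-1}\) with values in (a completion of) the ordinary Verma module \(M_{\Delta}\). The series carries the formal exponential prefactors described in the paragraph on coefficient rings and formal completions. On such series each \(L_n\) acts coefficientwise by a fixed, parameter-independent operator on \(M_\Delta\). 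The prefactor \(P_t\) is handled by the differentiation rule fixed earlier, which only adds scalar multiples. Consequently \([L_n,\partial/\partial c_k]=0\) holds on this big space for every \(k\), including \(k=r\), since \(v_0\) does not depend on \(c_r\). The \(\mathcal D_i\) are \(\mathcal A\)-linear combinations of the \(\partial/\partial c_k\) whose coefficients are scalars, so \([L_n,\mathcal D_i]=0\) follows there as well.

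Next, I identify the abstract completed module with the \(\mathcal A\)-span of the vectors \(L_{-\lambda}v_0\) inside this space. The natural map from the induced module \(M^{(r-1)}_{\Lambda'}\) to this span is well defined because \(v_0\) satisfies the eigenvalue relations \(L_nv_0=\Lambda'_nv_0\) for \(r-1\le n\le 2r-2\), and \(L_nv_0=0\) for larger \(n\). It is injective by the irreducibility of irregular Verma modules \cite{FJK12}, or directly, degree by degree, by Lemma~\ref{lem:det}. Under this identification the Virasoro action is the restriction of the coefficientwise action, so it remains to check that each \(\partial/\partial c_k\) maps the span into itself. Given that, the commutation relation transfers to the completed module.

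The key step, and the main obstacle, is to show \(\partial v_0/\partial c_k\in \widehat M_{\mathcal A}\) for \(1\le k\le r-1\); the Leibniz rule then gives \(\partial(aL_{-\lambda}v_0)=(\partial a)L_{-\lambda}v_0+aL_{-\lambda}\partial v_0\), which again lies in the span. Here I would use the hypothesis \(R^{(r-1)}_iv_0=0\) for \(0\le i\le r-2\). It says \(\mathcal D^{(r-1)}_iv_0=D^{(r-1)}_iv_0\), and the right-hand side is an element of the module: \(L_iv_0\) minus a scalar multiple of \(v_0\). The coefficient matrix \(M^{(r-1)}\) of \(\mathcal D^{(r-1)}_0,\ldots,\mathcal D^{(r-1)}_{r-2}\) is the rank \(r-1\) analogue of \eqref{eq:matrix_form}. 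It is anti-triangular with determinant \(\pm(r-1)!\,c_{r-1}^{\,r-1}\), hence invertible over \(\mathcal A\), because \(c_{r-1}^{-1}\in\mathcal A\). Inverting it expresses each \(\partial v_0/\partial c_k\) as an \(\mathcal A\)-combination of \(L_0v_0,\ldots,L_{r-2}v_0\) and \(v_0\). I expect the bookkeeping of the exponential prefactor of \(v_0\) and the gauge scalars to be the delicate part: their derivatives contribute only scalar multiples of \(v_0\), but one must check these land in the completion \(\widehat M_{\mathcal A}\) with the correct prefactor. Finally, \(\partial/\partial c_r\) acts only on coefficients, and multiplying by the scalar coefficients of \(\mathcal D_i\) preserves the span, which completes the argument.
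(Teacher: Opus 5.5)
Your argument is correct, but it takes a different route from the paper.

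\textbf{The paper's route.} The paper works inside the abstract rank \(r-1\) module. It reduces to the frame \(\mathcal D_0^{(r-1)},\dots,\mathcal D_{r-2}^{(r-1)}\). It lets these act on \(v_0\) through the lower equations, and on PBW monomials by commuting with the creation modes. It then verifies \([L_n,\mathcal D_i^{(r-1)}]v_0=0\) for \(n\ge r-1\) from \(L_nv_0=\Lambda_n^{(r-1)}v_0\) and the scalar identity \(\mathcal D_i^{(r-1)}\Lambda_n^{(r-1)}=(n-i)\Lambda_{n+i}^{(r-1)}\). Finally it propagates this to descendants by induction on PBW length.

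\textbf{Your route.} You unwind the tower to \(M_\Delta\)-valued series, where the \(L_n\) are parameter-independent, so the commutation is tautological. The content then moves into two checks: the image of the irregular Verma module is a faithful copy (Lemma~\ref{lem:det}), and honest differentiation preserves that image (lower equations plus invertibility of \(M^{(r-1)}\) over \(\mathcal A\)).

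\textbf{What each buys.} Your route gives \(\mathcal D_i\) a concrete meaning. It shows that the derivation the paper in effect defines through the lower equations coincides with ordinary differentiation of the formal series, which is what is actually used later when \(\mathcal D_i\) is applied to \(W\). It also makes two steps unnecessary: the identity \(\mathcal D_i\Lambda_n=(n-i)\Lambda_{n+i}\), and the PBW reordering step, where \(L_pL_{-\lambda}\) need not be ordered and commutators can produce modes \(L_{p+q}\) with \(p+q\ge r-1\). The paper's route is local to a single induction step. It needs only the eigenvalue relations and lower equations of \(v_0\), with no global model, no faithfulness argument, and no control of the prefactors of earlier ranks.

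\textbf{Remarks on your write-up.} The prefactor bookkeeping you expect to be delicate is already absorbed by the hypothesis: \(R_i^{(r-1)}v_0=0\) expresses the full derivative, prefactors included, as an \(\mathcal A\)-combination of \(L_iv_0\) and \(v_0\). Two points do need a word:
\begin{itemize}
\item Your model needs a lower-parameter differentiation rule for all the nested prefactors; the paper only defines \(\partial_t\) of \(P_t\).
\item The hypothesis must hold in the model; this follows inductively from your own argument at the preceding rank.
\end{itemize}
Also, you really obtain the uncompleted span rather than the completion. This suffices, because the vectors used later have finitely many PBW degrees at each order in \(c_r\).
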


\begin{proof}
The \(\partial_{c_r}\)-part of \(\mathcal D_i\) acts trivially on
\(v_0\) and on the completed rank \(r-1\) module, so it plainly commutes
with the Virasoro action.  The remaining part is a vector field on the
rank \(r-1\) parameter space.  Since the rank \(r-1\) vector fields
\[
 \mathcal D_0^{(r-1)},\ldots,\mathcal D_{r-2}^{(r-1)}
\]
form a frame after localizing at \(c_{r-1}\), it is enough to prove the
claim for these vector fields.

Fix \(0\le i\le r-2\).  We first check the commutation on the cyclic
vector \(v_0\).  If \(n<r-1\), then \(L_n\) is one of the PBW generators
of the rank \(r-1\) irregular Verma module, and the parameter vector
fields commute with it by definition.  Hence assume \(n\ge r-1\).

By the rank \(r-1\) lower equations,
\[
 \mathcal D_i^{(r-1)}v_0
 =
 \bigl(L_i-\Lambda_i^{(r-1)}\bigr)v_0,
\]
where for \(i=0\) we read \(\Lambda_0^{(r-1)}\) as the conformal weight.
On the other hand,
\[
 L_nv_0=\Lambda_n^{(r-1)}v_0
 \qquad(n\ge r-1),
\]
with the convention \(\Lambda_n^{(r-1)}=0\) for \(n>2r-2\).  Therefore
\[
\begin{aligned}
 L_n\mathcal D_i^{(r-1)}v_0
 &=(n-i)\Lambda_{n+i}^{(r-1)}v_0
   +\Lambda_n^{(r-1)}\mathcal D_i^{(r-1)}v_0 .
\end{aligned}
\]
The scalar parts satisfy
\[
 \mathcal D_i^{(r-1)}\Lambda_n^{(r-1)}
 =
 (n-i)\Lambda_{n+i}^{(r-1)}.
\]
Hence
\[
\begin{aligned}
 \mathcal D_i^{(r-1)}L_nv_0
 &=
 \mathcal D_i^{(r-1)}
 \bigl(\Lambda_n^{(r-1)}v_0\bigr)\\
 &=
 (n-i)\Lambda_{n+i}^{(r-1)}v_0
 +\Lambda_n^{(r-1)}\mathcal D_i^{(r-1)}v_0.
\end{aligned}
\]
Thus
\[
 [L_n,\mathcal D_i^{(r-1)}]v_0=0.
\]

It remains to extend this from \(v_0\) to all PBW descendants.  We argue
by induction on the PBW length.  Suppose that, for a vector \(u\), one
has
\[
 [L_m,\mathcal D_i^{(r-1)}]u=0
 \qquad\text{for all }m.
\]
Let \(L_p\) be a PBW generator, so \(p<r-1\).  Since
\(\mathcal D_i^{(r-1)}\) commutes with PBW generators by definition, we
have
\[
\begin{aligned}
 [L_n,\mathcal D_i^{(r-1)}]L_pu
 &=
 [L_n,L_p]\mathcal D_i^{(r-1)}u
 -\mathcal D_i^{(r-1)}\bigl([L_n,L_p]u\bigr)\\
 &\quad
 +L_p[L_n,\mathcal D_i^{(r-1)}]u .
\end{aligned}
\]
The last term is zero by the induction hypothesis.  The commutator
\([L_n,L_p]\) is a linear combination of \(L_{n+p}\) and, possibly, a
central scalar.  By the induction hypothesis again, \(L_{n+p}\) also
commutes with \(\mathcal D_i^{(r-1)}\) on \(u\), and the central term
commutes trivially.  Hence the first two terms cancel.  Therefore
\[
 [L_n,\mathcal D_i^{(r-1)}]L_pu=0.
\]
This proves the induction step, and hence
\[
 [L_n,\mathcal D_i^{(r-1)}]=0
\]
on the completed rank \(r-1\) irregular Verma module.

Since any rank \(r\) vector field \(\mathcal D_i\), restricted to the
rank \(r-1\) parameter space, is a scalar linear combination of
\(\mathcal D_0^{(r-1)},\ldots,\mathcal D_{r-2}^{(r-1)}\), and since
\(L_n\) does not act on scalar coefficients, the same commutation holds
for \(\mathcal D_i\).  This completes the proof.
\end{proof}

\begin{lem}
For \(m,n\ge0\),
\[
 [R_m,R_n]=(m-n)R_{m+n}.
\]
\end{lem}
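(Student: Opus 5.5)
The plan is to reduce the statement to three ingredients: the Virasoro relations, Lemma~\ref{lem:commutativity}, and two identities among the parameter data \(\Lambda_n,\mathcal D_n\). We work on the completed module under the standing hypothesis of Lemma~\ref{lem:commutativity}. Throughout we use the uniform conventions \(\Lambda_0:=\Delta_{c_0}\), \(\Lambda_n:=0\) for \(n>2r\), and \(\mathcal D_n:=0\) for \(n\ge r\). With these, \(R_n=L_n-\Lambda_n-\mathcal D_n\) for every \(n\ge0\).

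Expand the commutator bilinearly:
\[
 [R_m,R_n]=[L_m,L_n]-[L_m,\mathcal D_n]-[\mathcal D_m,L_n]+[\Lambda_m,\mathcal D_n]+[\mathcal D_m,\Lambda_n]+[\mathcal D_m,\mathcal D_n].
\]
We simplify term by term.
\begin{itemize}
\item Since \(m,n\ge0\), the central term of \([L_m,L_n]\) is present only when \(m=n=0\), where it is multiplied by \(m^3-m=0\). Hence \([L_m,L_n]=(m-n)L_{m+n}\).
\item The two mixed commutators \([L_m,\mathcal D_n]\) and \([\mathcal D_m,L_n]\) vanish by Lemma~\ref{lem:commutativity}.
\item Scalars commute with the \(L_k\), so no other mixed terms appear.
\item The commutator of a vector field with a function is a function: \([\mathcal D_m,\Lambda_n]=\mathcal D_m(\Lambda_n)\) and \([\Lambda_m,\mathcal D_n]=-\mathcal D_n(\Lambda_m)\).
\end{itemize}
Comparing with \((m-n)R_{m+n}=(m-n)(L_{m+n}-\Lambda_{m+n}-\mathcal D_{m+n})\), the claim is therefore equivalent to the following two identities, both valid for all \(m,n\ge0\) under the truncation conventions:
\[
 \text{(a)}\quad \mathcal D_m(\Lambda_n)-\mathcal D_n(\Lambda_m)=(n-m)\Lambda_{m+n},\qquad
 \text{(b)}\quad [\mathcal D_m,\mathcal D_n]=(n-m)\mathcal D_{m+n}.
\]

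For (b) we compute directly. From \(\mathcal D_m=\sum_k k\,c_{m+k}\partial_{c_k}\) (with \(c_j=0\) for \(j>r\)) one gets
\[
[\mathcal D_m,\mathcal D_n]=\sum_l\bigl(l(l+n-m)-l(l+m-n)\bigr)c_{m+n+l}\,\partial_{c_l}.
\]
The bracketed coefficient is \(2l(n-m)\). We will recheck this carefully, keeping track of the range \(1\le k\le r-\cdot\) and of the vanishing of \(\partial_{c_k}\) on \(c_j\) with \(j\) out of range, because the expected coefficient is \(l(n-m)\). The truncation \(c_j=0\) for \(j>r\) makes \(\mathcal D_{m+n}\) vanish automatically once \(m+n\ge r\), which is consistent with the convention \(\mathcal D_{m+n}=0\) there.

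For (a) we treat cases separately. If both \(m,n\ge r\), there is nothing to prove, since both sides vanish or reduce to \(0=0\). If \(m<r\le n\), then \(\mathcal D_n=0\), and (a) reads \(\mathcal D_m(\Lambda_n)=(n-m)\Lambda_{m+n}\); this is the scalar identity already used in the proof of Lemma~\ref{lem:commutativity}. If \(m=0\), we use that \(\mathcal D_0\) is the weighted Euler operator, that \(\Lambda_n\) is weighted-homogeneous of degree \(n\), and that \(\mathcal D_n\Lambda_0=\mathcal D_n(\Delta_{c_0})=0\); together these give \(n\Lambda_n\) on both sides. For general \(1\le m,n\le r-1\), we substitute the quadratic expression of \(\Lambda_n\) in the \(c_k\) and apply \(\mathcal D_m(c_k)=k\,c_{m+k}\). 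The linear part \(((n+1)Q-2c_0)c_n\) and the quadratic part \(-\sum c_pc_{n-p}\) are then handled by reindexing sums, in the same manner as the computation of \(E(n)\) in the proof of Theorem~\ref{thm:int_to_int}.

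The main obstacle is (a) together with the exact constants in (b): the relation holds only if the truncations (\(c_j=0\) for \(j>r\), \(\mathcal D_k=0\) for \(k\ge r\), and the special role of \(\Lambda_0=\Delta_{c_0}\)) line up precisely. I will organize that computation with the generating series \(c(z)=\sum_k c_kz^k\), under which \(\mathcal D_m\) acts like \(z^{1-m}\partial_z\) followed by truncation, and will verify the boundary indices by hand.
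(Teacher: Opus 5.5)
Your reduction is the same as the paper's. The paper expands $[L_m-\mathcal L_m,L_n-\mathcal L_n]$, removes the cross terms with Lemma~\ref{lem:commutativity}, and then simply asserts $[\mathcal L_m,\mathcal L_n]=(n-m)\mathcal L_{m+n}$. Your identities (a) and (b) are exactly the function part and the vector-field part of that assertion, so verifying them is the right move, and it goes beyond what the paper writes. As submitted, however, neither identity is actually established, and the displayed computation for (b) is wrong.

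\textbf{Identity (b).} For every $j\ge1$ one has $\mathcal D_m(c_j)=j\,c_{m+j}$, since both sides vanish when $j>r$. Hence the $\partial_{c_l}$-coefficient of $[\mathcal D_m,\mathcal D_n]$ is
\[
 \mathcal D_m(l\,c_{n+l})-\mathcal D_n(l\,c_{m+l})=\bigl(l(n+l)-l(m+l)\bigr)c_{m+n+l}=l(n-m)\,c_{m+n+l}.
\]
This is exactly $(n-m)\mathcal D_{m+n}$, and it includes the truncation $\mathcal D_N=0$ for $N\ge r$. Your expression $l(l+n-m)-l(l+m-n)$ would give $2(n-m)\mathcal D_{m+n}$. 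That contradicts (b) whenever $m\neq n$ and $m+n\le r-1$, so it must be replaced, not merely rechecked. Relatedly, on $c(z)=\sum_k c_kz^k$ the operator $\mathcal D_m$ acts as $z\partial_z\circ z^{-m}$ followed by truncation to positive powers. It does not act as $z^{1-m}\partial_z$, which would produce $(j+m)c_{j+m}$ instead of $j\,c_{j+m}$.

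\textbf{Identity (a).} The case $1\le m,n\le r-1$ is the real content, and you only announce it. Put $\ell_n=(n+1)Q-2c_0$.
\begin{itemize}
\item Linear parts: they give $(n\ell_n-m\ell_m)c_{m+n}=(n-m)\ell_{m+n}c_{m+n}$, because $n(n+1)-m(m+1)=(n-m)(n+m+1)$.
\item Quadratic parts: with $(x)_+:=\max(x,0)$ and $a,b\ge1$, one has $\mathcal D_m\bigl(\sum_{p+q=n}c_pc_q\bigr)=2\sum_{a+b=m+n}(a-m)_+\,c_ac_b$. So the difference is $-2\sum_{a+b=m+n}\phi(a)\,c_ac_b$ with $\phi(a)=(a-m)_+-(a-n)_+$.
\item Symmetrizing: swap $a\leftrightarrow b=m+n-a$ and use $(x)_+-(-x)_+=x$. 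This gives $\phi(a)+\phi(b)=n-m$, so the difference equals $-(n-m)\sum_{a+b=m+n}c_ac_b$, which is the quadratic part of $(n-m)\Lambda_{m+n}$.
\end{itemize}
This computation uses only $\mathcal D_m(c_j)=j\,c_{m+j}$. It therefore also covers $m<r\le n$ and $m+n>2r$ directly. You do not need the rank-$(r-1)$ identity from the proof of Lemma~\ref{lem:commutativity}, which the paper states without proof. Your separate treatment of $m=0$ via $\Lambda_0=\Delta_{c_0}$ and weighted homogeneity is correct.

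With these two computations inserted, your argument is complete.
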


\begin{proof}
By the preceding lemma, the Virasoro operators commute with the
parameter vector fields.  The scalar vector fields satisfy
\[
 [\mathcal L_m,\mathcal L_n]=(n-m)\mathcal L_{m+n}.
\]
Together with
\[
 [L_m,L_n]=(m-n)L_{m+n},
\]
we obtain
\[
 [R_m,R_n]=(m-n)R_{m+n}.
\]
\end{proof}

\begin{lem}
Let
\[
 W:=|\widetilde I^{(r)}\rangle
\]
be the canonical vector constructed in Theorem~\ref{thm:int_to_int}, and put
\[
 \Theta:=\{W\}.
\]
Then \(\Theta\) is a non-zero invertible scalar formal series in the
prescribed exponential sector over the coefficient ring
\(\mathcal A_r^{\mathrm{int}}\).
  If \(X\) is a formal vector satisfying
\[
 R_nX=0\qquad(n\ge r),
\]
then
\[
 X=\frac{\{X\}}{\Theta}\,W.
\]
Equivalently, within the same prescribed exponential sector and over the same
scalar coefficient ring, the space of solutions of the higher-mode equations
\(R_nX=0\) \((n\ge r)\) is a free rank-one module generated by \(W\).
\end{lem}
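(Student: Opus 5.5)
The plan is to read everything off the PBW expansion of \(W\) and then to use the non-degeneracy in Lemma~\ref{lem:det} for a \(t\)-adic uniqueness argument, with \(t=c_r\).

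\textbf{Invertibility of \(\Theta\).} Write \(W=P_t\sum_k c_r^k v_k\), where \(P_t=c_r^{\nu_r}\exp\bigl(\sum_{k=1}^{r-1}g_k/c_r^k\bigr)\) and \(v_0=|I^{(r-1)}\rangle\). Then the constant term is
\[
\Theta=P_t\sum_k c_r^k c^{(k)}_\emptyset
\]
with \(c^{(0)}_\emptyset=1\), because the constant term of \(v_0\) is \(1\). The proof of Theorem~\ref{thm:int_to_int} shows that every \(c^{(k)}_\emptyset\) lies in \(\mathcal A_r^{\mathrm{int}}\). After dividing by \(P_t\), the constant term of the series in \(\mathcal A[[c_r]]\) is therefore \(1\), a unit. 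Hence \(\Theta\) is non-zero and invertible in the sense fixed in the preliminaries.

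\textbf{Uniqueness for the higher-mode equations.} Let \(X\) satisfy \(R_nX=0\) for \(n\ge r\), and put
\[
Y:=X-\frac{\{X\}}{\Theta}W .
\]
Here \(\{X\}/\Theta\) is a scalar formal series in the same sector and does not involve operators. Since \(R_n=L_n-\Lambda_n\) for \(n\ge r\) acts \(\mathcal A[[c_r]]\)-linearly, and \(\Lambda_n\) is a scalar, we get \(R_nY=0\) for \(n\ge r\) and \(\{Y\}=0\). We must show \(Y=0\).

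Divide \(Y\) by the prefactor and expand as \(\sum_k c_r^k y_k\), with \(y_k\) in the completed rank \(r-1\) module. Comparing coefficients of \(c_r^k\) exactly as in the derivation of \eqref{eq:L_r_vk}--\eqref{eq:Ln-greater-2r-vk} gives the homogeneous analogues
\[
\widetilde L_r y_k=((r+1)Q-2c_0)y_{k-1},\qquad \widetilde L_n y_k=-2c_{n-r}y_{k-1}\ (r<n<2r),
\]
\[
L_{2r}y_k=-y_{k-2},\qquad L_ny_k=0\ (n>2r),
\]
together with \(\widetilde L_{r-1}y_k=0\), where \(\widetilde L_n=L_n-\Lambda'_n\). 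We also have \(\{y_k\}=0\) for all \(k\), since \(\{Y\}=0\) and taking the constant term commutes with the \(c_r\)-expansion.

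We argue by induction on \(k\). Assume \(y_j=0\) for all \(j<k\); the case \(k=0\) starts from the convention that \(y_j=0\) for \(j<0\). The equations then become \(\widetilde L_n y_k=0\) for all \(n\ge r-1\), where \(\widetilde L_n=L_n\) for \(n>2r-2\). Consequently \(\{\widetilde L_\mu y_k\}=0\) for every nonempty partition \(\mu\), because each \(\widetilde L_\mu\) is a product of these modes. Now use the shifted convention \(\widetilde L_\lambda\), \(L_{-\lambda}\) with \(r-1\) from the proof of Theorem~\ref{thm:int_to_int}. In each finite PBW-degree window the pairing matrix \(\bigl(\{\widetilde L_\mu L_{-\lambda}v_0\}\bigr)\) has determinant equal to a power of \(\Lambda'_{2r-2}=-c_{r-1}^2\), by Lemma~\ref{lem:det}, and this is a unit in \(\mathcal A_r^{\mathrm{int}}\). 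Hence every PBW coefficient of \(y_k\) in positive degree vanishes. The degree-zero coefficient is \(\{y_k\}=0\), so \(y_k=0\). The free rank-one statement follows, since \(aW=0\) with \(a\) scalar forces \(a\Theta=0\), and hence \(a=0\).

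\textbf{Expected obstacle: applying Lemma~\ref{lem:det} to an infinite sum.} Each \(y_k\) may a priori be an infinite PBW sum in \(\widehat M_{\mathcal A}\), so Lemma~\ref{lem:det} cannot be applied to \(y_k\) all at once. Lemma~\ref{lem:det} is a statement for each finite window \(n\le|\lambda|,|\mu|\le m\).

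The way through is that \(\widetilde L_\mu\) lowers PBW degree by exactly \(|\mu|\), up to terms with lower coefficients \(\Lambda'\). The pairing is therefore block upper-triangular in degree, with the diagonal blocks given by Lemma~\ref{lem:det} for \(n=m\). We then proceed by downward induction within each truncation, or equivalently by working degree by degree starting from the top nonzero degree. This is legitimate here because the Gram blocks are unitriangular-by-degree and each diagonal block is invertible.

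The step I would check most carefully is the claim that \(\{\widetilde L_\mu y_k\}\) only sees finitely many degrees of \(y_k\). For general \(X\), finiteness of PBW degree at each \(c_r\)-order is not automatic. If it fails, I would instead use the graded pieces to peel off the top degree. Failing that, I would restrict to the stated completion convention that each \(c_r\)-order involves finitely many degrees.
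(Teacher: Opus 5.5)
Your proposal is correct and follows the paper's own proof: $\Theta$ is invertible because $\{v_0\}=1$, and you subtract $\frac{\{X\}}{\Theta}W$. A solution of the higher-mode equations with zero constant term is then killed order by order in $c_r$ by the rank $r-1$ Gram determinant of Lemma~\ref{lem:det}, under the same convention from the preliminaries that each $c_r$-order involves only finitely many PBW degrees. One harmless slip: $\widetilde L_{r-1}y_k=0$ does not follow from $R_nX=0$ $(n\ge r)$, but you never use it, since $\widetilde L_\mu$ with $\mu\neq\emptyset$ involves only the modes $L_n$ with $n\ge r$.
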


\begin{proof}
Indeed, in the prescribed exponential sector of Theorem~\ref{thm:int_to_int}, the
\(v_0\)-component of \(W\) has a non-zero leading term; hence it is
invertible in the localized scalar coefficient ring used throughout this
construction.

Set
\[
 a:=\frac{\{X\}}{\Theta}.
\]
Then \(aW\) also satisfies the higher-mode equations, and
\[
 \{X-aW\}=0.
\]
Thus \(X-aW\) is a solution of the higher-mode equations with zero
\(v_0\)-component.

We claim that such a solution must vanish.  Indeed, write \(X-aW\) in
the PBW basis
\[
 L_{-\lambda}v_0.
\]
After factoring out the common exponential and power prefactor, at each
fixed order in \(c_r\), the equations
\(R_nX=0\) for \(n\ge r\) give a linear system for the non-constant PBW
coefficients.  The coefficient matrix is the Gram matrix appearing in
Lemma~\ref{lem:det}, applied to the rank \(r-1\) irregular module.  Its determinant
is non-zero.  Hence all non-constant PBW coefficients are uniquely
determined by the \(v_0\)-component.  Since that component is zero, all
coefficients vanish.  Therefore
\[
 X-aW=0.
\]
This proves
\[
 X=\frac{\{X\}}{\Theta}W.
\]
\end{proof}

\begin{thm}\label{thm:integer_full_vector}
Assume inductively that the initial rank \(r-1\) vector
\(v_0=|I^{(r-1)}\rangle\) has been normalized to satisfy its full lower
Virasoro deformation equations.
Let \(W=|\widetilde I^{(r)}\rangle\) be the canonical integer-rank
irregular vector constructed in Theorem~\ref{thm:int_to_int}.  Then there exists a
non-zero scalar formal function
\[
 f=f(c_0',c_0,c_1,\ldots,c_{r-1}),
\]
unique up to multiplication by a non-zero function of \(c_0'\) and
\(c_0\), such that
\[
 |I^{(r)}\rangle:=fW
\]
satisfies
\[
 R_i|I^{(r)}\rangle=0,
 \qquad 0\le i\le r-1.
\]
\end{thm}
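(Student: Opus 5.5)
The plan is to show that each lower operator applied to $W$ is a scalar multiple of $W$, that these scalars form a closed logarithmic one-form which can be integrated, and that the resulting gauge factor produces the full vector.

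\emph{Step 1: $R_iW$ is a scalar multiple of $W$.} Fix $0\le i\le r-1$. Using $[R_m,R_n]=(m-n)R_{m+n}$ and $R_nW=0$ for $n\ge r$, we get for every $n\ge r$
\[
 R_n(R_iW)=[R_n,R_i]W=(n-i)R_{n+i}W=0,
\]
because $n+i\ge r$. Thus $R_iW$ solves the higher-mode equations. We must check that it lies in the same prescribed exponential sector over the same coefficient ring. Applying $\mathcal D_i$ to the prefactor $c_r^{\nu_r}\exp(\sum_k g_k/c_r^k)$ yields only a factor of Laurent type in $c_r$ with coefficients in $\mathcal A_r^{\mathrm{int}}$ (or its Laurent extension). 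This requires a short order estimate. The estimate should follow from the fact that $\mathcal D_i$ raises the $c_r$-degree, except through $\partial_{c_r}$, whose effect is controlled by the $L_*$-equation. The preceding lemma then gives
\[
 R_iW=\omega_iW,\qquad \omega_i:=\{R_iW\}/\Theta .
\]
Here $\omega_i$ is a scalar formal series. Moreover, $R_*W=0$ by the $L_*$-equation, where $R_*:=c_r^r\sum_n (M^{-1})_{r,n+1}R_n$. Hence the combination $c_r^r\sum_n(M^{-1})_{r,n+1}\omega_n=0$, so the $\partial_{c_r}$-component of the one-form defined below vanishes.

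\emph{Step 2: closedness.} Apply $[R_m,R_n]=(m-n)R_{m+n}$ to $W$ for $0\le m,n\le r-1$. Since $R_i$ acts on scalars $a$ by $R_i(aW)=aR_iW-(\mathcal D_ia)W$, we obtain
\[
 \mathcal D_m\omega_n-\mathcal D_n\omega_m=(m-n)\omega_{m+n},
\]
with $\omega_{m+n}=0$ when $m+n\ge r$. Convert this through the frame $M$ (invertible after localizing at $c_r$) into the one-form $\omega=\sum_k\omega^{(k)}dc_k$ defined by $\omega_i=\sum_k M_{i+1,k}\,\omega^{(k)}$. The $\partial_{c_r}$-component of $\omega$ vanishes by Step 1. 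The relation above, together with the commutator identity for the $\mathcal L_n$, should give $d\omega=0$. I expect this bookkeeping to be the main obstacle: the $\mathcal D_n$ do not commute, and the structure constants $[\mathcal D_m,\mathcal D_n]=(n-m)\mathcal D_{m+n}$ must be matched exactly with the $R$-brackets, including signs. It must also be verified that the scalar parts $\Lambda_n$, $\Delta_{c_0}$ cancel consistently. The earlier lemma guarantees this cancellation, but only in the form $R_n=D_n-\mathcal D_n$ with $\mathcal D_n$ acting on coefficients.

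\emph{Step 3: integration and uniqueness.} Since $\omega$ has no $dc_r$-component and $d\omega=0$, its coefficients are independent of $c_r$. It is therefore a closed form in $c_1,\dots,c_{r-1}$ with coefficients in the Laurent ring. Its residue parts, the $dc_j/c_j$ terms, integrate to powers $c_j^{\nu_j}$, and the rest integrates to an exponential $e^{g_0}$. Take $f$ with $\mathcal D_i\log f=\omega_i$, equivalently $d\log f=\omega$. Then
\[
 R_i(fW)=f\omega_iW-(\mathcal D_if)W=0 .
\]
This recovers exactly the omitted factor $c_1^{\nu_1}\cdots c_{r-1}^{\nu_{r-1}}e^{g_0}$. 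For uniqueness, if $f'W$ also works, then $\mathcal D_i(f'/f)=0$ for all $i$. Because the $\mathcal D_i$ form a frame in $c_1,\dots,c_r$, the ratio $f'/f$ depends only on $c_0,c_0'$.
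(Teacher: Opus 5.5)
Your route is the paper's: the scalar obstruction $R_iW=\omega_iW$ from the higher-mode uniqueness lemma, integrability from $[R_m,R_n]=(m-n)R_{m+n}$, $c_r$-independence from $R_*W=0$, and then exponentiation. Your Step 3 changes the order: you kill the $dc_r$-component first, deduce that the coefficients of $\omega$ are $c_r$-free, and then integrate directly on the Laurent torus in $c_1,\ldots,c_{r-1}$. This is a harmless and slightly cleaner variant of the paper, which integrates in all of $c_1,\ldots,c_r$ by a formal Frobenius theorem, shows $\partial h/\partial c_r=0$ only afterwards, and leaves the logarithmic terms to the corollary. The genuine problem is in Step 2. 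The relation you display has the wrong sign, and the closedness claim you leave open ("should give $d\omega=0$") depends on exactly this sign.

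From $R_i(aW)=aR_iW-(\mathcal D_ia)W$ you get
\[
 [R_m,R_n]W=-\bigl(\mathcal D_m\omega_n-\mathcal D_n\omega_m\bigr)W .
\]
Comparing with $(m-n)R_{m+n}W=(m-n)\omega_{m+n}W$ gives
\[
 \mathcal D_m\omega_n-\mathcal D_n\omega_m=(n-m)\omega_{m+n},
\]
as in the paper. With $\omega(\mathcal D_i)=\omega_i$, the Cartan formula reads
\[
 d\omega(\mathcal D_m,\mathcal D_n)=\mathcal D_m\omega_n-\mathcal D_n\omega_m-\omega([\mathcal D_m,\mathcal D_n]).
\]
The bracket $[\mathcal D_m,\mathcal D_n]=(n-m)\mathcal D_{m+n}$, with $\mathcal D_N=0$ and $\omega_N=0$ for $N\ge r$, makes this vanish. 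Since the $\mathcal D_i$ form a frame after localizing at $c_r$, this gives $d\omega=0$. With your sign one would instead get $d\omega(\mathcal D_m,\mathcal D_n)=2(m-n)\omega_{m+n}$, and closedness fails. Once the sign is fixed, the step you flagged as the main obstacle is one line.

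No further cancellation of $\Lambda_n$ or $\Delta_{c_0}$ needs checking here. Those scalars are already inside $R_i$, and only the bracket of the pure vector fields $\mathcal D_n$ enters, not that of the $\mathcal L_n$.

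You are right to raise the sector issue in Step 1, which the paper passes over, but your justification is off. The negative powers of $c_r$ in $R_iW$ come from two sources: $\partial_{c_k}$ with $k<r$ hitting $\exp(\sum_j g_jc_r^{-j})$, and $rc_r\partial_{c_r}$ in $\mathcal D_0$ hitting the prefactor. The $L_*$-equation plays no role in controlling them. They are bounded by $c_r^{-(r-1)}$, so you can apply the uniqueness lemma to $c_r^{r-1}R_iW$. This still satisfies $R_nX=0$ for $n\ge r$, because these $R_n$ contain no derivatives, and it yields $R_iW=(\{R_iW\}/\Theta)\,W$.
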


\begin{proof}
We shall use the fact that the canonical vector satisfies
\[
 R_nW=0\qquad(n\ge r),
\]
where \(\Lambda_n=0\) for \(n>2r\). 

Put
\[
 \Theta:=\{W\}.
\]
We first show that the obstruction to the lower equations is scalar.
For \(0\le i\le r-1\), set
\[
 Y_i:=R_iW.
\]
Let \(n\ge r\).  Since \(W\) satisfies the higher-mode equations,
\[
 R_nY_i=[R_n,R_i]W=(n-i)R_{n+i}W=0\qquad(n\ge r).
\]
Thus \(Y_i\) satisfies the higher-mode equations.  By the preceding
lemma,
\[
 Y_i=\frac{\{Y_i\}}{\Theta}W.
\]
Therefore there exists a scalar formal function
\[
 a_i:=\frac{\{R_iW\}}{\Theta}
\]
such that
\[
 R_iW=a_iW,
 \qquad 0\le i\le r-1.
\]

We next derive the integrability condition for the functions \(a_i\).
For \(0\le i,j\le r-1\), applying the Maurer--Cartan relation to \(W\)
gives
\[
 [R_i,R_j]W=(i-j)R_{i+j}W.
\]
We use the convention
\[
 a_k=0\qquad(k\ge r),
\]
since \(R_kW=0\) for \(k\ge r\). 
We obtain
\[
 \mathcal D_i a_j-\mathcal D_j a_i=(j-i)a_{i+j}.
\]
This is precisely the Frobenius integrability condition for
\[
 \mathcal D_i h=a_i,
 \qquad 0\le i\le r-1.
\]
Since the coefficient matrix of
\(\mathcal D_0,\ldots,\mathcal D_{r-1}\) with respect to
\(\partial_{c_1},\ldots,\partial_{c_r}\) has determinant
\[
(-1)^{r(r-1)/2}r!c_r^r,
\]
these vector fields form a frame after localizing at \(c_r\).
Therefore,
by the formal Frobenius theorem, there exists a scalar formal function
\[
 h=h(c_0',c_0,c_1,\ldots,c_r),
\]
unique up to a function independent of \(c_1,\ldots,c_r\), such that
\[
 \mathcal D_i h=a_i,
 \qquad 0\le i\le r-1.
\]

It remains to show that \(h\) is independent of \(c_r\).  Recall that
\(L_*\) was defined by extracting \(\partial/\partial c_r\) from the
 vector fields \(\mathcal D_0,\ldots,\mathcal D_{r-1}\). By construction of \(L_*\), we have
\[
 \frac{\partial}{\partial c_r}
 =
 \sum_{i=0}^{r-1}(M^{-1})_{r,i+1}\mathcal D_i
\]
and hence
\[
 L_*-c_r^r\frac{\partial}{\partial c_r}
 =
 c_r^r\sum_{i=0}^{r-1}(M^{-1})_{r,i+1}R_i.
\]
Since the canonical vector \(W\) satisfies the \(L_*\)-equation, 
\[
 0=
 \left(
 L_*-c_r^r\frac{\partial}{\partial c_r}
 \right)W
 =
 c_r^r\sum_{i=0}^{r-1}(M^{-1})_{r,i+1}R_iW.
\]
Using \(R_iW=a_iW\), we get
\[
 c_r^r\sum_{i=0}^{r-1}(M^{-1})_{r,i+1}a_i=0.
\]
Since \(\mathcal D_i h=a_i\), the left-hand side is exactly
\[
 c_r^r\frac{\partial h}{\partial c_r}.
\]
Therefore
\(
 h=h(c_0',c_0,c_1,\ldots,c_{r-1}).
\)
Notice that the functions \(a_i\) may depend on \(c_r\).  What the
\(L_*\)-equation proves is not the \(c_r\)-independence of the \(a_i\),
but the \(c_r\)-independence of the potential \(h\).

Set
\[
 f:=e^h.
\]
Then, 
\[
\begin{aligned}
 R_i(fW)
 &=fR_iW-(\mathcal D_i f)W \\
 &=fa_iW-fa_iW \\
 &=0.
\end{aligned}
\]
Thus
\[
 R_i|I^{(r)}\rangle=0,
 \qquad 0\le i\le r-1,
\]
where
\[
 |I^{(r)}\rangle=fW.
\]
Together with the rank-one construction, this proves the scalar-gauge
statement for all integer ranks by induction.

Finally, suppose \(f_1W\) and \(f_2W\) both satisfy the lower equations.
Then \(g=f_2/f_1\) satisfies
\[
 \mathcal D_i g=0,
 \qquad 0\le i\le r-1.
\]
Since \(\mathcal D_0,\ldots,\mathcal D_{r-1}\) form a frame after localizing at
\(c_r\), this implies that \(g\) is independent of
\(c_1,\ldots,c_r\).  Hence \(g\) may depend only on \(c_0'\) and \(c_0\).
This proves the asserted uniqueness.
\end{proof}

\begin{cor}
The scalar gauge factor \(f\) in the preceding theorem can be written as
\[
 f
 =
 C(c_0',c_0)\,
 c_1^{\nu_1}\cdots c_{r-1}^{\nu_{r-1}}
 \exp\bigl(g_0(c_0',c_0,c_1,\ldots,c_{r-1})\bigr),
\]
where \(g_0\) belongs to the same formal Laurent coefficient ring in the
lower parameters, and the exponents \(\nu_j\) depend only on the passive
parameters \(c_0'\) and \(c_0\).
\end{cor}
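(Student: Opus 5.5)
The plan is to determine the structure of $h=\log f$ by solving the system $\mathcal D_i h=a_i$ restricted to the lower parameters. Since $h$ is independent of $c_r$, only the $c_1,\dots,c_{r-1}$-parts of the $\mathcal D_i$ matter. The coefficient matrix $M$ is anti-triangular, so the $\partial_{c_r}$-components of the $\mathcal D_i$ are $r c_r\,\delta_{i0}$, and the other entries involve $c_r$ in a graded way. The idea is to use a weighted homogeneity argument. Assign weight $k$ to $c_k$. Then $\mathcal D_0=\sum k c_k\partial_{c_k}$ is the Euler operator, and the equation $\mathcal D_0 h=a_0$ computes the total weighted degree of $h$.

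The first step is to show that each $a_i=\{R_iW\}/\Theta$ is quasi-homogeneous. Its weight should be $i$: the constraints are invariant under the grading in which $L_n$ has weight $n$, $c_k$ has weight $k$, and $v_k$ is graded compatibly, and the uniqueness in Theorem~\ref{thm:int_to_int} makes $W$ homogeneous up to the prefactor.

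For $i=0$ we need more. We expand $a_0$ in the Laurent coefficient ring $\mathcal A_{r,\mathrm{Laur}}^{\mathrm{int}}[[c_r]]$ after removing the prefactor. The $\mathcal D_0$-invariant (degree $0$) part of $a_0$ consists of the constants in $\mathcal K_r^{\mathrm{int}}$. For a weighted-degree-$0$ component the equation $\mathcal D_0 h=a_0$ forces a logarithmic term, which we write as $\sum_j \nu_j\log c_j$.

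We then split $h=\sum_{j<r}\nu_j\log c_j+g_0$, where $g_0$ is a Laurent function in $c_1,\dots,c_{r-1}$. The existence of $g_0$ comes from integrating the remaining closed one-form $\sum_i (\text{frame dual})\,a_i$ over the Laurent ring. The key point is that this one-form, written in $dc_j$ coordinates via $M^{-1}$ with the $c_r$-dependence cancelling, has coefficients in $\mathcal A_{r,\mathrm{Laur}}^{\mathrm{int}}$. Its residues along $c_j=0$ produce the $\nu_j$.

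Next we show the $\nu_j$ lie in $\mathcal K_r^{\mathrm{int}}$ and depend only on $c_0',c_0$. The residue of a closed Laurent form is constant in the other variables, because closedness forces $\partial_{c_k}(\text{coefficient of }dc_j/c_j\text{ at }c_j^0)$ to be exact with zero residue.

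Finally, the constant of integration gives $C(c_0',c_0)$, matching the uniqueness statement of Theorem~\ref{thm:integer_full_vector}.

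\textbf{Main obstacle.} The hardest step is showing that the one-form $\sum_j (\partial_{c_j}h)\,dc_j$ has coefficients in the Laurent ring rather than involving $c_r$ or formal series. Expressing $\partial_{c_j}h$ via $M^{-1}$ introduces powers of $c_r^{-1}$, which must cancel because $\partial_{c_r}h=0$. I would first try to verify the cancellation order by order in $c_r$ directly from the $L_*$-equation. Failing that, I would set $c_r$ to a convenient value after establishing independence, using that $h$ does not depend on $c_r$.
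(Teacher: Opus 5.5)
Once the weighted-homogeneity material is set aside, your argument is the paper's. The shared steps are: $h=\log f$ is independent of $c_r$; $dh$ is a closed Laurent one-form in $c_1,\dots,c_{r-1}$; its zero modes give the $\nu_j$; the remaining modes are exact and give $g_0$; and the integration constant is $C(c_0',c_0)$. Your zero modes are the residues along $c_j=0$, which closedness makes independent of the other $c_k$, so they lie in $\mathcal K_r^{\mathrm{int}}$. The paper argues mode by mode instead: writing $dh=\sum_j A_j\,d\log c_j$, closedness gives $\alpha_iA_{j,\alpha}=\alpha_jA_{i,\alpha}$, so $A_{j,\alpha}=\alpha_jB_\alpha$ for $\alpha\neq0$. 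Your residue formulation is equivalent to this.

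The homogeneity step should be dropped rather than repaired. It is not used in the end, the quasi-homogeneity of the $a_i$ is only asserted, and two of its claims are wrong as stated:
\begin{itemize}
\item For $r\ge3$ there are non-constant weight-zero Laurent monomials such as $c_1^2c_2^{-1}$ (and $c_1^rc_r^{-1}$ once $c_r^{-1}$ is allowed). So weight zero does not mean constant, and the weight-zero part of $a_0$ need not lie in $\mathcal K_r^{\mathrm{int}}$.
\item Even for constant $a_0$, the single equation $\mathcal D_0h=a_0$ fixes only $\sum_j j\nu_j$, not the individual $\nu_j$.
\end{itemize}
Only the full closedness condition determines the logarithmic part.

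The step you flag as the main obstacle is easy, and the paper simply asserts it.
\begin{itemize}
\item After dividing by the prefactor $P_t$, the vector $R_iW/P_t$ has only finitely many negative powers of $c_r$. Also $\Theta/P_t$ is a unit of $\mathcal A_r^{\mathrm{int}}[[c_r]]$. Hence $a_i\in\mathcal A_r^{\mathrm{int}}((c_r))$.
\item Since $\det M\propto c_r^r$, the coefficients $\omega_k=\sum_i(M^{-1})_{k,i+1}a_i$ of $dh$ lie in the same ring.
\item The $L_*$-equation gives $\omega_r=0$. Closedness is equivalent to the Frobenius relations $\mathcal D_ia_j-\mathcal D_ja_i=(j-i)a_{i+j}$, and it gives $\partial_{c_r}\omega_k=\partial_{c_k}\omega_r=0$.
\item So each $\omega_k$ equals its $c_r^0$-coefficient and lies in $\mathcal A_r^{\mathrm{int}}\subset\mathcal A_{r,\mathrm{Laur}}^{\mathrm{int}}$.
\end{itemize}
This is your ``order by order'' option. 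Your fallback of setting $c_r$ to a convenient value is not meaningful for formal series in $c_r$, and $M^{-1}$ is singular at $c_r=0$, so discard it.
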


\begin{proof}
Let \(h=\log f\).  By the theorem, \(h\) is independent of the highest
parameter \(c_r\).  The equations
\[
 \mathcal D_i h=a_i
\]
therefore define a closed formal Laurent \(1\)-form in
\(c_1,\ldots,c_{r-1}\).  On the Laurent torus with coordinates
\(c_1,\ldots,c_{r-1}\), every closed formal Laurent \(1\)-form has the
decomposition
\[
 dh
 =
 dg_0+\sum_{j=1}^{r-1}\nu_j\,\frac{dc_j}{c_j}.
\]
Indeed, writing
\[
 dh=\sum_{j=1}^{r-1} A_j(c)\,d\log c_j
\]
and expanding \(A_j(c)=\sum_\alpha A_{j,\alpha}c^\alpha\), the closedness
condition gives
\[
 \alpha_i A_{j,\alpha}=\alpha_j A_{i,\alpha}.
\]
For \(\alpha\neq0\), this implies
\[
 A_{j,\alpha}=\alpha_j B_\alpha,
\]
so the corresponding term is
\[
 d(B_\alpha c^\alpha).
\]
Only the zero mode \(\alpha=0\) remains non-exact, giving the logarithmic
terms
\[
 \sum_{j=1}^{r-1}\nu_j d\log c_j.
\]
The coefficients \(\nu_j\) are the zero-mode Laurent coefficients, hence they are independent of \(c_1,\ldots,c_{r-1}\) and may depend only on the
passive parameters.  Thus
\[
 h
 =
 g_0+\sum_{j=1}^{r-1}\nu_j\log c_j+\log C(c_0',c_0).
\]
Exponentiating gives the desired expression for \(f\).
\end{proof}

\section{Construction from Integer to Half-Integer Rank}

As in the integer-rank case, we establish a rigorous construction of the
half-integer-rank irregular vector \(|I^{(r-1/2)}\rangle\) from the
integer-rank vector \(|I^{(r-1)}\rangle\), for \(r\in\mathbb Z_{\ge2}\).  We seek a formal vector expanded in terms of the highest half-integer parameter $\Lambda_{2r-1}$.

\subsection{Existence of the Truncated Virasoro Vector Fields for Half-Integer Rank}

Before discussing the construction of the irregular vectors, we must first establish that the differential representation of the Virasoro algebra on the parameter space of the half-integer-rank singularity is mathematically well-defined.

The construction below is intrinsic to the parameter space
\((c_1,\ldots,c_{r-1},\Lambda_{2r-1})\); after passing to eigenvalue
coordinates, it reproduces the vector-field part of the differential
realizations used in the half-integer-rank literature.
\begin{thm}\label{thm:half_vector_fields}
Let \(c_1,\ldots,c_{r-1}\) and \(\Lambda_{2r-1}\) be independent variables, and set
\[
S_m=-\sum_{a=1}^{r-1}c_a c_{m-a}\quad (r\le m\le 2r-2),
\qquad
S_{2r-1}=\Lambda_{2r-1},
\qquad
S_m=0\quad (m>2r-1),
\]
where \(c_a=0\) for \(a<1\) or \(a>r-1\).  Let
\[
\mathcal V_0
=
\sum_{k=1}^{r-1} k c_k\frac{\partial}{\partial c_k}
+(2r-1)\Lambda_{2r-1}\frac{\partial}{\partial \Lambda_{2r-1}} .
\]
Then, for each \(1\le n\le r-1\), there exists a unique rational vector field of the form
\[
\mathcal V_n
=
\sum_{k=1}^{r-n}
h_{n,k}(c_1,\ldots,c_{r-1},\Lambda_{2r-1})
\frac{\partial}{\partial c_k}
\]
such that
\[
\mathcal V_n(S_m)=(m-n)S_{m+n}
\qquad
(r\le m\le 2r-1).
\]
Moreover, these vector fields satisfy
\[
[\mathcal V_0,\mathcal V_n]=n\mathcal V_n,
\]
and, for \(1\le m,n\le r-1\),
\[
[\mathcal V_m,\mathcal V_n]
=
\begin{cases}
(n-m)\mathcal V_{m+n}, & m+n\le r-1,\\
0, & m+n\ge r.
\end{cases}
\]
Equivalently, if we put \(\mathcal V_N=0\) for \(N\ge r\), then
\[
[\mathcal V_m,\mathcal V_n]=(n-m)\mathcal V_{m+n}.
\]
\end{thm}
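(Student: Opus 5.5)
The plan is to treat the defining conditions \(\mathcal V_n(S_m)=(m-n)S_{m+n}\), \(r\le m\le 2r-1\), as a linear system for the unknown coefficients \(h_{n,k}\), \(1\le k\le r-n\), and to show that the system is triangular with invertible pivots. The commutation relations will then follow from uniqueness.

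Since \(\mathcal V_n\) has no \(\partial_{\Lambda_{2r-1}}\)-component, it annihilates \(S_{2r-1}=\Lambda_{2r-1}\). The condition with \(m=2r-1\) therefore reads \(0=(2r-1-n)S_{2r-1+n}\), and this holds automatically because \(S_{2r-1+n}=0\) for \(n\ge1\). The genuine equations come from \(m=r,\ldots,2r-2\), and the left-hand sides are
\[
\mathcal V_n(S_m)=-2\sum_{k=1}^{r-n}h_{n,k}\,c_{m-k},\qquad c_j=0\ \text{for } j\notin[1,r-1].
\]
The coefficient \(c_{m-k}\) is nonzero only when \(m-k\le r-1\), that is, \(k\ge m-r+1\).

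I will restrict to the \(r-n\) equations with \(m=2r-1-n,\ldots,2r-2\), hoping the remaining ones are then automatic. For \(m=2r-2\), only \(k=r-1\) contributes, but \(k\le r-n\), so this equation is \(0=(2r-2-n)S_{2r-2+n}\). When \(n\ge1\) this forces \(S_{2r-2+n}\) to vanish, or to equal \(\Lambda_{2r-1}\) when \(n=1\).

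This shows the index bookkeeping has to be chosen carefully. Ordering by \(m\) descending from \(2r-1-n\), each successive equation introduces one new unknown \(h_{n,k}\) with pivot \(-2c_{r-1}\). The equation for \(m=2r-2+n-(\ldots)\), namely the one involving \(S_{2r-1}=\Lambda_{2r-1}\) on the right-hand side, is what makes \(h_{n,r-n}\) carry \(\Lambda_{2r-1}/c_{r-1}\). The result is a square system of size \(r-n\), triangular (Toeplitz in the \(c\)'s) with determinant \((-2c_{r-1})^{r-n}\). Its unique solution is rational, polynomial in \(c_{r-1}^{-1}\), and this gives existence and uniqueness.

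The main obstacle is consistency: the system has more equations (\(r\) values of \(m\)) than unknowns (\(r-n\)). I must check that the extra equations, for \(m<2r-1-n\), are implied by the solved ones. My first attempt is to generate all the \(S_m\) from the quadratic structure. The relation \(S_m=-\sum c_ac_{m-a}\) means that, up to sign, \(\sum_m S_m z^{m}\) is the truncation of \(c(z)^2\), with \(\Lambda_{2r-1}\) adjoined at the top. A vector field acting by \(c(z)\mapsto\) (a suitable \(z^{-n}\)-shift of \(c\)) then gives a square-root argument: \(\mathcal V_n\) is forced to be the derivation induced by \(S(z)\mapsto z^{-n}(z\partial_z-\ldots)S\) acting on \(c=\sqrt{-S}\) as a truncated series. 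The lower equations then hold by construction, since \(c\) is determined by the top coefficients of \(S\).

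Failing that, I will verify the redundant equations directly by induction on \(n\) using Lemma-type explicit Toeplitz inversion.

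For the brackets, I will use uniqueness. Both sides of each claimed identity are vector fields in the \(\partial_{c_k}\), \(k\le r-(m+n)\), or zero. Applied to \(S_m\) for \(r\le m\le 2r-1\), both sides give the same values by the Jacobi-type computation
\[
[\mathcal V_a,\mathcal V_b]S_m=(m-b)(m+b-a)S_{m+a+b}-(m-a)(m+a-b)S_{m+a+b}=(b-a)(m-a-b)S_{m+a+b}.
\]
In this computation, terms \(S_{m+a}\) with \(m+a>2r-1\) vanish. The bracket of two vector fields of the stated shape again lies in the span of \(\partial_{c_k}\), \(k\le r-m-n\), because \(h_{n,k}\) is independent of \(\Lambda_{2r-1}\)-derivatives. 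Uniqueness then identifies it with \((n-m)\mathcal V_{m+n}\), or with \(0\) when \(m+n\ge r\), since in that case the span is empty.

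The relation \([\mathcal V_0,\mathcal V_n]=n\mathcal V_n\) follows the same way. The one extra ingredient is that \(\mathcal V_0(S_m)=mS_m\), by homogeneity: \(S_m\) has weight \(m\) when \(c_k\) has weight \(k\) and \(\Lambda_{2r-1}\) has weight \(2r-1\).
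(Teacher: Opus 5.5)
Your core mechanism is right: a triangular Toeplitz system with pivots \(-2c_{r-1}\), and brackets checked on the \(S_m\). But the existence argument does not close as written, because your bookkeeping of which equations are redundant is inverted. The system is square, not overdetermined. For \(2r-n\le m\le 2r-2\) and \(k\le r-n\) one has \(m-k\ge r\), so the left side \(-2\sum_{k\le r-n}h_{n,k}c_{m-k}\) vanishes identically. The right side \((m-n)S_{m+n}\) also vanishes, because \(m+n\ge 2r\). Together with \(m=2r-1\), which you handled correctly, these equations are just \(0=0\).

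The only nontrivial equations are \(m=r,\ldots,2r-1-n\). These are exactly the \(r-n\) rows you solve in your fourth paragraph by descending from \(m=2r-1-n\). So the ``extra equations for \(m<2r-1-n\)'', whose consistency you call the main obstacle, are rows you have already solved. The equations that really are extra are trivially satisfied.

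Relatedly, the set \(m=2r-1-n,\ldots,2r-2\) in your third paragraph has \(n\) elements, not \(r-n\), and all but its first member are trivial. For \(n=1\), \(m=2r-2\), the \(k=r-1\) term is present. That equation is therefore the one determining \(h_{1,r-1}=-(2r-3)\Lambda_{2r-1}/(2c_{r-1})\), not \(0=(2r-3)S_{2r-1}\). As it stands, your existence claim rests on a square-root argument that is only gestured at, or on an unspecified Toeplitz induction. Neither is a proof, and neither is needed once the support count above is made.

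The bracket step also has a gap. The claim that \([\mathcal V_a,\mathcal V_b]\) lies a priori in the span of \(\partial/\partial c_k\), \(k\le r-a-b\), is not justified; a priori the bracket only involves \(k\le r-\min(a,b)\). So you cannot invoke uniqueness within the restricted form, since vanishing of the high components is part of what must be proved.

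What you need is uniqueness in a larger class, which is precisely the paper's device. The functions \(y_j=S_{r-1+j}\) (\(1\le j\le r\)) form a rational coordinate system: their Jacobian with respect to \((c_1,\ldots,c_{r-1},\Lambda_{2r-1})\) is triangular with determinant \((-2c_{r-1})^{r-1}\). Hence every rational derivation is determined by its values on \(S_r,\ldots,S_{2r-1}\). With this, your Jacobi-type computation and the homogeneity identity \(\mathcal V_0(S_m)=mS_m\), both of which are correct, prove all the brackets.

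The paper uses the same coordinates for existence too. It defines \(\mathcal V_n\) outright by \(\mathcal V_n(y_j)=(r-1+j-n)y_{j+n}\) and then reads off the triangular shape. There is no \(\partial/\partial\Lambda_{2r-1}\) term since \(\mathcal V_n(y_r)=0\), and \(h_{n,k}=0\) for \(k>r-n\) by back-substitution in the zero rows. So no consistency question ever arises.
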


\begin{proof}
Put
\[
y_j=S_{r-1+j}\qquad (1\le j\le r),
\]
so that \(y_r=\Lambda_{2r-1}\).  We first observe that
\[
y_1,\ldots,y_r
\]
form a rational coordinate system, after localizing at \(c_{r-1}\).  Indeed, with respect to the coordinates
\[
x=(c_1,\ldots,c_{r-1},\Lambda_{2r-1}),
\]
the Jacobian matrix \(\left(\partial y_j/\partial x_k\right)\) is triangular.  For \(1\le j\le r-1\) and \(1\le k\le r-1\), we have
\[
\frac{\partial y_j}{\partial c_k}
=
\frac{\partial S_{r-1+j}}{\partial c_k}
=
-2c_{r-1+j-k}.
\]
This entry is zero when \(j>k\), and the diagonal entries are all equal to
\[
-2c_{r-1}.
\]
The last row is given by
\[
\frac{\partial y_r}{\partial \Lambda_{2r-1}}=1,
\]
and the other entries in the last row vanish.  Hence
\[
\det\left(\frac{\partial y_j}{\partial x_k}\right)
=
(-2c_{r-1})^{r-1}\neq 0.
\]
Therefore a rational derivation is uniquely determined by its action on
\(y_1,\ldots,y_r\).

For \(1\le n\le r-1\), prescribe a derivation by
\[
\mathcal V_n(y_j)
=
(r-1+j-n)y_{j+n},
\]
where we set \(y_\ell=0\) for \(\ell>r\).  Since \(y_1,\ldots,y_r\) form a rational coordinate system, this prescription determines a unique rational vector field.

It remains to check that this vector field has the asserted triangular form.  The equation
\[
\mathcal V_n(y_j)=(r-1+j-n)y_{j+n}
\]
for \(j=1,\ldots,r-n\) gives the linear system
\begin{equation}\label{eq:linear_sys_r_minus_1}
\sum_{k=1}^{r-n}
h_{n,k}
\frac{\partial S_{r-1+j}}{\partial c_k}
=
(r-1+j-n)S_{r-1+j+n}
\qquad
(1\le j\le r-n).
\end{equation}
Its coefficient matrix is
\[
A_{j,k}=-2c_{r-1+j-k}.
\]
This matrix is upper triangular, with diagonal entries \(-2c_{r-1}\).  Hence it is invertible, and the functions \(h_{n,k}\) are uniquely determined as rational functions.  The equations with \(j>r-n\) force no additional terms, because the right-hand side is zero and the triangular form gives no derivatives with respect to \(c_k\) for \(k>r-n\).  Also \(\mathcal V_n(y_r)=0\), so no \(\partial/\partial\Lambda_{2r-1}\) term occurs.

We now verify the commutation relations.  Since \(\mathcal V_0\) is the Euler vector field and \(y_j=S_{r-1+j}\) has weight \(r-1+j\), we have
\[
\mathcal V_0(y_j)=(r-1+j)y_j.
\]
Thus
\[
[\mathcal V_0,\mathcal V_n](y_j)
=
\mathcal V_0\bigl((r-1+j-n)y_{j+n}\bigr)
-
\mathcal V_n\bigl((r-1+j)y_j\bigr)
=
n\,\mathcal V_n(y_j).
\]
Since the \(y_j\)'s form a coordinate system, this proves
\[
[\mathcal V_0,\mathcal V_n]=n\mathcal V_n.
\]

Similarly, for \(1\le m,n\le r-1\), we compute on \(y_j\):
\[
\begin{aligned}
[\mathcal V_m,\mathcal V_n](y_j)
&=
(r-1+j-n)(r-1+j+n-m)y_{j+m+n} \\
&\quad
-
(r-1+j-m)(r-1+j+m-n)y_{j+m+n}  \\
&=
(n-m)(r-1+j-m-n)y_{j+m+n}.
\end{aligned}
\]
If \(m+n\le r-1\), the last expression is exactly
\[
(n-m)\mathcal V_{m+n}(y_j).
\]
If \(m+n\ge r\), then \(j+m+n>r\) for all \(j\ge1\), and hence the expression is zero.  Again, since the \(y_j\)'s form a coordinate system, the desired commutation relations follow.
\end{proof}

\begin{cor}[Eigenvalue coordinates]\label{cor:half_eigenvalue_coordinates}
On the open set \(c_{r-1}\neq 0\), the variables
\[
  \Lambda_r,\ldots,\Lambda_{2r-1}
\]
form a rational coordinate system equivalent to
\[
  c_1,\ldots,c_{r-1},\Lambda_{2r-1},
\]
where
\[
  \Lambda_m
  =
  -\sum_{a=1}^{r-1}c_a c_{m-a}
  \qquad (r\le m\le 2r-2). 
\]
In these coordinates, the vector fields of
Theorem~\ref{thm:half_vector_fields} are given by
\begin{equation}\label{eq:vector_field_Lambda}
  \mathcal V_n
  =
  \sum_{\ell=r}^{2r-1-n}
  (\ell-n)\Lambda_{\ell+n}
  \frac{\partial}{\partial\Lambda_\ell},
  \qquad
  0\le n\le r-1,
\end{equation}
where \(\Lambda_N=0\) for \(N>2r-1\).
\end{cor}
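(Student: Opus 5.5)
The plan is to work directly with the coordinate change $y_j=S_{r-1+j}$ from the proof of Theorem~\ref{thm:half_vector_fields}. There $y_j=\Lambda_{r-1+j}$ for $1\le j\le r$, with $\Lambda_{2r-1}=y_r$ itself. The proof proceeds in two steps: first establish the coordinate statement, then compute the vector fields in the new coordinates.

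\textbf{Step 1 (coordinates).} The Jacobian $\bigl(\partial \Lambda_{r-1+j}/\partial x_k\bigr)$ with respect to $x=(c_1,\ldots,c_{r-1},\Lambda_{2r-1})$ has already been computed in the proof of Theorem~\ref{thm:half_vector_fields}. It is triangular with determinant $(-2c_{r-1})^{r-1}$. Hence on $c_{r-1}\neq0$ the map $x\mapsto(\Lambda_r,\ldots,\Lambda_{2r-1})$ is étale.

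To upgrade this to a rational coordinate equivalence, we must recover $c_1,\ldots,c_{r-1}$ rationally from the $\Lambda$'s. We would do this by triangular back-substitution:
\[
\Lambda_{2r-2}=-c_{r-1}^2,\qquad
\Lambda_{2r-3}=-2c_{r-1}c_{r-2},
\]
and in general $\Lambda_{2r-1-j}=-2c_{r-1}c_{r-j}+(\text{terms in }c_{r-1},\ldots,c_{r-j+1})$. Each $c_{r-j}$ for $j\ge2$ is then rational in the $\Lambda$'s and $c_{r-1}$.

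The main obstacle is $c_{r-1}$ itself, which is only determined up to sign by $\Lambda_{2r-2}$. Accordingly, I would interpret ``equivalent rational coordinate system'' as follows: the $\Lambda$'s are rational functions of $x$, and conversely the $x$'s are rational in the $\Lambda$'s together with $c_{r-1}=\sqrt{-\Lambda_{2r-2}}$, on a chosen branch. In particular, partial derivatives and vector fields transform by the Jacobian above, which is rational and invertible. This suffices for the vector-field statement, since that statement only uses the chain rule.

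\textbf{Step 2 (vector fields).} A derivation is determined by its values on a coordinate system. Since the $y_j$ form one, it suffices to check that the right-hand side of \eqref{eq:vector_field_Lambda} takes the same values on each $\Lambda_m$, $r\le m\le 2r-1$, as $\mathcal V_n$ does. Call that right-hand side $\mathcal W_n$. Applying $\mathcal W_n$ to $\Lambda_m$ picks out the term $\ell=m$ when $m\le 2r-1-n$, giving
\[
\mathcal W_n(\Lambda_m)=(m-n)\Lambda_{m+n}.
\]
When $m>2r-1-n$ the value is $0$, which agrees with $(m-n)\Lambda_{m+n}=0$ by the convention $\Lambda_N=0$ for $N>2r-1$.

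For $1\le n\le r-1$, Theorem~\ref{thm:half_vector_fields} gives $\mathcal V_n(S_m)=(m-n)S_{m+n}$ with $S_m=\Lambda_m$, so $\mathcal V_n=\mathcal W_n$. For $n=0$, $\mathcal V_0$ is the Euler field. Each $S_m$ is weighted-homogeneous of degree $m$, where $c_k$ has weight $k$ and $\Lambda_{2r-1}$ has weight $2r-1$. Hence $\mathcal V_0(\Lambda_m)=m\Lambda_m$, which is exactly $\mathcal W_0(\Lambda_m)$.

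The only subtle point is the branch issue in Step~1. The vector-field identity is purely formal once the Jacobian is invertible.
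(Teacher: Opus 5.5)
Your proof is correct and follows the paper's argument: the triangular Jacobian with determinant $(-2c_{r-1})^{r-1}$ makes $\Lambda_r,\ldots,\Lambda_{2r-1}$ coordinates on $c_{r-1}\neq0$, and the formula then follows by comparing the two actions on these coordinates and using the uniqueness of a derivation with prescribed values on them. You also treat $n=0$ explicitly via weighted homogeneity, and you note that the $\Lambda$'s are invariant under $c\mapsto -c$, so the equivalence is \'etale rather than birational; both are accurate refinements of points the paper passes over, and neither affects the vector-field identity.
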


\begin{proof}
For \(r\le m\le 2r-2\), we have
\[
  \frac{\partial\Lambda_m}{\partial c_a}
  =
  -2c_{m-a},
  \qquad
  1\le a\le r-1.
\]
With respect to the ordered variables
\[
  (c_1,\ldots,c_{r-1},\Lambda_{2r-1})
  \quad\text{and}\quad
  (\Lambda_r,\ldots,\Lambda_{2r-2},\Lambda_{2r-1}),
\]
the Jacobian matrix is upper triangular with diagonal entries
\[
  -2c_{r-1},\ldots,-2c_{r-1},1.
\]
Hence, its determinant is
\[
  (-2c_{r-1})^{r-1}.
\]
Thus, the stated variables form rational coordinates after localizing at
\(c_{r-1}\).

In these coordinates, the right-hand side of
\eqref{eq:vector_field_Lambda} acts by
\[
  \Lambda_\ell\longmapsto(\ell-n)\Lambda_{\ell+n}.
\]
This is precisely the defining action of \(\mathcal V_n\) on the coordinates
\(\Lambda_\ell\), and the uniqueness statement in
Theorem~\ref{thm:half_vector_fields} gives the formula.
\end{proof}

\begin{re}[Comparison with previous half-integer realizations]
Formula~\eqref{eq:vector_field_Lambda} is the vector-field part of the
half-integer-rank differential realization written in eigenvalue coordinates
in \cite{HNNT}.  The realizations of \cite{IILZ25,Zang25}, written in
half-integer-indexed parameters
\[
  c_{1/2},c_{3/2},\ldots,c_{r-1/2},
\]
are related to these coordinates by
\[
  \Lambda_m
  =
  -\sum_{a=1/2}^{r-1/2}c_a c_{m-a},
  \qquad r\le m\le 2r-1.
\]
After this change of variables, their vector-field parts also take the form
\eqref{eq:vector_field_Lambda}.  The comparison made here concerns the
vector-field part; the zeroth-order scalar terms are gauge dependent and are
treated separately in Section~\ref{sec:scalar_half}.
\end{re}

\subsection{Construction of the Canonical Operator}

With the vector fields $\mathcal{V}_n$ established, we construct the specific differential operator $L_*$ so that we algebraically isolate the partial derivative $\partial/\partial \Lambda_{2r-1}$ using truncated Virasoro vector fields $\mathcal{V}_0, \dots, \mathcal{V}_{r-1}$.

 The vector fields are represented by the $r \times r$ matrix relation:
\begin{equation}
    \begin{pmatrix}
        \mathcal{V}_0 \\
        \mathcal{V}_1 \\
        \mathcal{V}_2 \\
        \vdots \\
        \mathcal{V}_{r-1}
    \end{pmatrix}
    = \mathcal{M}
    \begin{pmatrix}
        \frac{\partial}{\partial c_1} \\
        \frac{\partial}{\partial c_2} \\
        \vdots \\
        \frac{\partial}{\partial c_{r-1}} \\
        \frac{\partial}{\partial \Lambda_{2r-1}}
    \end{pmatrix}.
\end{equation}
By directly reading the coefficients from the definitions, the matrix $\mathcal{M}$ takes the explicit form:
\begin{equation}\label{eq:matrix_M_anti_triangular}
    \mathcal{M} = 
    \begin{pmatrix}
        c_1 & 2c_2 & \dots & (r-1)c_{r-1} & (2r-1)\Lambda_{2r-1} \\
        h_{1,1} & h_{1,2} & \dots & h_{1,r-1} & 0 \\
        h_{2,1} & h_{2,2} & \dots & 0 & 0 \\
        \vdots & \vdots & \reflectbox{$\ddots$} & \vdots & \vdots \\
        h_{r-1,1} & 0 & \dots & 0 & 0 
    \end{pmatrix}.
\end{equation}
Crucially, the functions $h_{n,k}$ vanish for $n+k > r$, which means $\mathcal{M}_{i,j} = 0$ for all $i+j > r+1$. This demonstrates that the entire matrix $\mathcal{M}$ is anti-upper-triangular. The elements on the anti-diagonal are $\mathcal{M}_{1,r} = (2r-1)\Lambda_{2r-1}$ and $\mathcal{M}_{n+1, r-n} = h_{n, r-n}$ for $1 \le n \le r-1$.

The anti-diagonal entries $h_{n, r-n}$ are evaluated from \eqref{eq:linear_sys_r_minus_1} as  
\begin{equation}
    h_{n, r-n} = - \frac{2r-2n-1}{2c_{r-1}} \Lambda_{2r-1}.
\end{equation}
The determinant of the anti-upper-triangular matrix $\mathcal{M}$ is the product of its anti-diagonal entries multiplied by the sign factor $(-1)^{r(r-1)/2}$. This gives:
\begin{equation}\label{eq:det_M}
    \det \mathcal{M} = (-1)^{\frac{r(r-1)}{2}} (2r-1)\Lambda_{2r-1} \prod_{n=1}^{r-1} \left( - \frac{2r-2n-1}{2c_{r-1}} \Lambda_{2r-1} \right) = \kappa \frac{\Lambda_{2r-1}^r}{c_{r-1}^{r-1}},
\end{equation}
where $\kappa \neq 0$ is a numerical constant. We work over the localization $\C[c_1,\ldots,c_{r-1},c_{r-1}^{-1}]$ and restrict further to the open locus $\Lambda_{2r-1}\neq 0$. 

By inverting the matrix relation, we isolate the derivative with respect to the highest parameter, which is the $r$-th component of the vector of partial derivatives:
\begin{equation}\label{eq:partial_Lambda_M_inv}
    \frac{\partial}{\partial \Lambda_{2r-1}} = \sum_{m=0}^{r-1} (\mathcal{M}^{-1})_{r, m+1} \mathcal{V}_m.
\end{equation}
We define the canonical operator $L_*$ as 
\begin{equation}\label{eq:L_star_def_matrix}
    L_* := \Lambda_{2r-1}^r\sum_{m=0}^{r-1}  (\mathcal{M}^{-1})_{r, m+1} L_m. 
\end{equation}

Because the pole of order $r$ is canceled by the prefactor,  we can uniquely expand the operator $L_*$ as a polynomial in the deformation parameter $\Lambda_{2r-1}$:
\begin{equation}\label{eq:L_star_expansion}
    L_* = f_0 + f_1 \Lambda_{2r-1} + f_2 \Lambda_{2r-1}^2 + \dots + f_{r-1} \Lambda_{2r-1}^{r-1},
\end{equation}
where each $f_i$ is a linear combination of the Virasoro generators $L_0, \dots, L_{r-1}$ with coefficients in the Laurent polynomial ring $\mathbb{C}[c_1, \dots, c_{r-1}, c_{r-1}^{-1}]$. 

\subsection{Canonical Irregular Vector of Half-Integer Rank}

The construction can now be formulated in the parameter-space setting of the
present paper.  The irregular vector of half-integer rank \(r-1/2\) is obtained
from the irregular vector of integer rank \(r-1\).  The base parameter space is
enlarged by one additional continuous parameter \(\Lambda_{2r-1}\), which
serves as the distinguished independent variable.

We construct the canonical irregular vector of half-integer rank as a formal
series in positive powers of \(\Lambda_{2r-1}\), with coefficients in the
rank \(r-1\) irregular Verma module.

\begin{thm}\label{thm:int_to_half}
 For $r\in\Z_{\geq 2}$, there exists a unique formal series
\begin{equation}\label{eq:series_int_to_half}
    |\widetilde I^{(r-1/2)}\rangle = \Lambda_{2r-1}^{\nu} \exp\left(\sum_{j=1}^{r-1} \frac{g_j(c_0, \dots, c_{r-1})}{\Lambda_{2r-1}^j}\right) \sum_{k=0}^\infty \Lambda_{2r-1}^k v_k
\end{equation}
satisfying the defining relations:
\begin{align}
    L_* |\widetilde I^{(r-1/2)}\rangle &= \Lambda_{2r-1}^r \frac{\partial}{\partial \Lambda_{2r-1}} |\widetilde I^{(r-1/2)}\rangle, \label{eq:Lstar_cond_int_to_half} \\
    L_n |\widetilde I^{(r-1/2)}\rangle &= \Lambda_n |\widetilde I^{(r-1/2)}\rangle \quad (n = r, r+1, \dots, 2r-1), \label{eq:Ln_cond_int_to_half}
\end{align}
where $v_0 = |I^{(r-1)}\rangle$ is the irregular vector of rank $r-1$ such that 
\begin{equation*}
    L_nv_0=\Lambda'_n v_0 \quad (n=r-1,r,\ldots,2r-2)
\end{equation*}
with 
\begin{equation*}
    \Lambda'_n=\delta_{n,r-1}((n+1)Q-2c_0)c_{r-1}-\sum_{\substack{p+q=n\\1\leq p,q\leq r-1}} c_p c_{q},
\end{equation*}
and $v_k$ are its descendant states. 
\end{thm}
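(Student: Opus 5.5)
The plan is to follow the proof of Theorem~\ref{thm:int_to_int} step by step. The role of \(c_r\) is now played by \(t=\Lambda_{2r-1}\), and the rank \(r-1\) irregular Verma module remains the coefficient module.

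\textbf{Step 1: recursion for the \(v_k\).} Write \(L_*=\sum_{i=0}^{r-1}t^if_i\) as in \eqref{eq:L_star_expansion}. Substitute the ansatz \eqref{eq:series_int_to_half} and compare coefficients of \(t^k\). The \(L_*\)-equation becomes
\[
\sum_{i=0}^{r-2}\bigl(f_i+(r-1-i)g_{r-1-i}\bigr)v_{k-i}+\bigl(f_{r-1}-\nu-k+r-1\bigr)v_{k-r+1}=0 .
\]
For the higher modes, note that \(\Lambda_n-\Lambda'_n\) is independent of \(t\) for \(r\le n\le 2r-2\). The only exception is the \(\delta_{n,r-1}\) term, which does not enter for \(n\ge r\); I will check this against the parametrization of \(S_m\) in Theorem~\ref{thm:half_vector_fields}. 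The higher-mode equations then become \(\widetilde L_n v_k=(\Lambda_n-\Lambda_n')v_k\) for \(r\le n\le 2r-2\), together with \(L_{2r-1}v_k=v_{k-1}\) and \(L_nv_k=0\) for \(n>2r-1\). Here \(\widetilde L_n=L_n-\Lambda_n'\). Since every \(L_n\) with \(n\ge r\) lowers PBW degree, \(v_k\) has PBW degree at most \(k(r-1)\). Lemma~\ref{lem:det}, applied to the rank \(r-1\) module, has determinant a power of \(\Lambda'_{2r-2}=-c_{r-1}^2\). Hence all non-constant PBW coefficients of \(v_k\) are determined uniquely, polynomially over \(\mathcal A_r^{\mathrm{half}}\), from \(v_{k-1}\) and the constant terms \(c_\emptyset^{(j)}\).

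\textbf{Step 2: fixing the scalars.} I will show that \(f_0\) reduces to a multiple of \(L_{r-1}\) on \(v_0\) with nonzero scalar, mirroring Lemma~\ref{lem:f0}. This uses the anti-diagonal structure of \(\mathcal M\): the \((r,1)\) entry of \(\mathcal M^{-1}\) times \(t^r\) has a \(t\)-independent leading part coming from \(h_{r-1,1}\). The \(k=0\) equation then fixes \(g_{r-1}\). Next I introduce \(X_k=v_k-\sum_i c_\emptyset^{(i)}X_{k-i}\) as before and take constant terms of the \(L_*\)-recursion. This determines in turn \(g_{r-2},\dots,g_1\), then \(\nu\) at \(k=r-1\), and then \(c_\emptyset^{(k-r+1)}\) for \(k>r-1\). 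In the last case the new unknown enters with coefficient \(C_k-C_{r-1}=k-r+1\neq0\). Uniqueness follows at every stage.

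\textbf{Step 3: consistency, the main obstacle.} It remains to show that the full \(L_*\)-relation \(Z_k=0\) holds, not only its constant term. I will apply \(\widetilde L_n\) for \(n\ge r\) to \(Z_k\) and reduce the check to the identity
\[
\sum_m(\mathcal M^{-1})_{r,m+1}\,(n-m)\Lambda_{n+m}=\frac{\partial\Lambda_n}{\partial\Lambda_{2r-1}} .
\]
This identity replaces the explicit quadratic computation of \(E(n)\) in the integer case. It should follow directly from \eqref{eq:partial_Lambda_M_inv} and the defining property \(\mathcal V_m(S_n)=(n-m)S_{n+m}\) of Theorem~\ref{thm:half_vector_fields}: applying \(\partial/\partial\Lambda_{2r-1}=\sum(\mathcal M^{-1})_{r,m+1}\mathcal V_m\) to \(S_n\) gives exactly this, and it equals \(\delta_{n,2r-1}\).

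With this identity, \([L_n,L_*]\) acting on the series reproduces the \(t\)-derivative of the eigenvalues. Hence \(\widetilde L_\mu Z_k=0\) for all relevant partitions \(\mu\) by induction on \(k\). Since the constant term of \(Z_k\) already vanishes by Step 2, Lemma~\ref{lem:det} gives \(Z_k=0\).

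The delicate part is the bookkeeping of the scalar \(\delta_{n,r-1}\) term and of \(L_0\)'s eigenvalue on \(v_0\), which involves \(\Delta_{c_0}\) and the rank \(r-1\) lower equations. I would handle this by writing \(L_m=\mathcal L_m+R_m\), so that all scalar parts cancel exactly as in the integer case.
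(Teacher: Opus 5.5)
Your outline follows the paper's proof. It uses the same recursion, the Gram-matrix solution via Lemma~\ref{lem:det} on the rank $r-1$ module, the $X_k$ induction, and the consistency check through $E(n)=\Lambda_{2r-1}^r\,\partial\Lambda_n/\partial\Lambda_{2r-1}$, which comes from \eqref{eq:partial_Lambda_M_inv} and $\mathcal V_m(S_n)=(n-m)S_{n+m}$.

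\textbf{The gap: your argument for $f_0$ fails.} The half-integer-specific fact needed to start the induction is that $f_0$ is a non-zero multiple of $L_{r-1}$, and your argument for it does not work as stated. Write $t=\Lambda_{2r-1}$.

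\emph{Why the $(r,1)$ entry is the wrong place to look.} The entry $(\mathcal M^{-1})_{r,1}$ is the coefficient of $\mathcal V_0$, hence of $L_0$. Since $\mathcal M$ is anti-upper-triangular, $\mathcal M^{-1}$ is anti-lower-triangular with $(\mathcal M^{-1})_{r,1}=1/\mathcal M_{1,r}=1/((2r-1)t)$. So $t^r(\mathcal M^{-1})_{r,1}=t^{r-1}/(2r-1)$ contributes to $f_{r-1}$, not to $f_0$, and $h_{r-1,1}$ plays no role. Already for $r=2$ one finds $L_*=\tfrac{t}{3}L_0+\tfrac23c_1^2L_1$. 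Anti-triangularity alone does not locate $f_0$.

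\emph{What does work: the behaviour at $t=0$.}
\begin{itemize}
\item The entries of $\mathcal M$ are affine in $t$, and $\det\mathcal M=\kappa t^r/c_{r-1}^{r-1}$ exactly. Hence $t^r\mathcal M^{-1}=\kappa^{-1}c_{r-1}^{r-1}\,\mathrm{adj}(\mathcal M)$ is polynomial in $t$, with constant term $\kappa^{-1}c_{r-1}^{r-1}\,\mathrm{adj}(\mathcal M|_{t=0})$.
\item At $t=0$ the last row of $\mathcal M$ (that of $\mathcal V_{r-1}$) and its last column both vanish.
\item The upper-left $(r-1)\times(r-1)$ block becomes the matrix of \eqref{eq:matrix_form} for rank $r-1$, call it $M^{(r-1)}$. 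This holds because $\mathcal V_n|_{t=0}=\mathcal D_n^{(r-1)}$ by uniqueness of the triangular system. Its anti-diagonal entries are $(r-1-n)c_{r-1}$.
\item Therefore the last row of $\mathrm{adj}(\mathcal M|_{t=0})$ is $(0,\ldots,0,\det M^{(r-1)})$ with $\det M^{(r-1)}\propto c_{r-1}^{r-1}$. This gives $f_0\propto c_{r-1}^{2r-2}L_{r-1}$.
\end{itemize}
This is the cofactor argument of Lemma~\ref{lem:f0_constant_term}, and it is exactly what makes the $k=0$ equation $f_0v_0=-(r-1)g_{r-1}v_0$ solvable.

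\textbf{Two smaller corrections.}

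\emph{Degree bound.} Your PBW degree bound $k(r-1)$ is too small. In the rank $r-1$ module, $L_{2r-1}=L_{r+(r-1)}$ lowers the degree by exactly $r$ at leading order. For instance, $\{L_{2r-1}L_{-1}v_0\}=2r\Lambda'_{2r-2}\neq0$, while $\{L_{2r-1}L_{-\lambda}v_0\}=0$ for $|\lambda|<r$. So $L_{2r-1}v_1=v_0$ forces a component along $L_{-1}v_0$, which has degree $r$. The correct bound is $|\lambda|\le rk$, as in the paper; nothing else in your argument changes.

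\emph{Scalar bookkeeping.} Drop the closing plan of writing $L_m=\mathcal L_m+R_m$. It would import the scalar completion and the commutativity lemma, neither of which is needed here.
\begin{itemize}
\item In the consistency check the scalar parts of $L_0,\ldots,L_{r-1}$ never appear, because only $[L_n,L_m]=(n-m)L_{n+m}$ with $n+m\ge r$ enters.
\item The $k=0$ step uses only $L_{r-1}v_0=\Lambda'_{r-1}v_0$.
\item $v_0$ is not an $L_0$-eigenvector in any case.
\end{itemize}
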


\begin{proof}
In this proof we set
\[
  \widetilde L_n=L_n-\Lambda'_n \quad (r\le n\le 2r-2),
  \qquad
  \widetilde L_{2r-1}=L_{2r-1},
  \qquad
  \widetilde L_n=L_n \quad (n\ge 2r).
\]

We substitute the ansatz \eqref{eq:series_int_to_half} into the differential equation \eqref{eq:Lstar_cond_int_to_half} and \eqref{eq:Ln_cond_int_to_half}. By expanding both sides in powers of $\Lambda_{2r-1}$ and evaluating the action of $L_*$, we obtain the master recurrence relation for the coefficients: 
\begin{align}
    &\sum_{i=0}^{r-2} \left( f_i+(r-1-i)g_{r-1-i} \right) v_{k-i}
    +\left(f_{r-1}-\nu-k+r-1 \right)v_{k-r+1}=0,\label{eq:L_star_half}\\
    &L_n v_k=\Lambda_n v_k \quad (n=r,r+1,\ldots,2r-2),\label{eq:L_n_half}\\
    &L_{2r-1} v_k=v_{k-1},\label{eq:L_2r-1_half}\\
    &L_n v_k=0 \quad (n\geq 2r). \label{eq:Ln-geq-2r-half}
\end{align}
We use the convention \(v_j=0\) for \(j<0\).
 By the relations \eqref{eq:L_2r-1_half}, we can denote $v_k$ by 
\begin{equation*}
  v_k=\sum_{|\lambda|\leq rk}c_\lambda^{(k)}L_{-\lambda}|I^{(r-1)}\rangle. 
\end{equation*}

First, we construct $v_k$ satisfying the relations \eqref{eq:L_n_half}--\eqref{eq:Ln-geq-2r-half} inductively. 
By definition, we have the matrix equation:
\begin{equation}\label{eq:gram_system}
 \left( \left\{ \widetilde{L}_\mu v_k \right\}\right)_{\mu}=
 \left( \left\{\widetilde{L}_\mu L_{-\lambda} v_0 \right\}\right)_{\mu,\lambda} \left(c_\lambda^{(k)}\right)_\lambda,  
\end{equation}
where $1 \leq |\lambda|, |\mu| \leq rk$. The determinant of the associated Gram matrix is non-zero by Lemma \ref{lem:det}. Consequently, the coefficients $c_\lambda^{(k)}$ for $|\lambda| \geq 1$ are uniquely solved as polynomials in \begin{equation*}
    g_1,\ldots,g_{r-1},\nu,c_0,c_1,\ldots,c_{r-1},  c_{r-1}^{-1},c_\emptyset^{(1)},\ldots,c_\emptyset^{(k-1)}, 
\end{equation*}
  and the relations \eqref{eq:L_n_half}--\eqref{eq:Ln-geq-2r-half} hold.   

Next, we show that the unknown parameters $g_i$ ($i=1,\ldots,r-1$), $\nu$, and $c_\emptyset^{(i)}$ ($i\geq 1$) can be uniquely determined as polynomials in $c_0, c_1, \ldots, c_{r-1}, c_{r-1}^{-1}$.

Setting $k=0$ in \eqref{eq:L_star_half}, we obtain
\begin{equation}
    f_0 v_0 = - (r-1) g_{r-1} v_0.
\end{equation}
By Lemma \ref{lem:f0_constant_term}, the constant term $f_0$ is proportional to $L_{r-1}$ and is non-vanishing as long as $c_{r-1} \neq 0$. Therefore, the unknown function $g_{r-1}$ is uniquely determined. 

For $k \ge 1$, since the algebraic structure of the recurrence \eqref{eq:L_star_half} is completely parallel to that of the integer rank, we can apply the same inductive argument using the truncated states $X_k$. Consequently, $g_{r-k-1}$ (for $k < r-1$), $\nu$ (for $k = r-1$), and $c_\emptyset^{(k-r+1)}$ (for $k > r-1$) are successively and uniquely solved as polynomials in $c_0, c_1, \ldots, c_{r-1}, c_{r-1}^{-1}$.

Finally, we show that the recurrence relation $Y_k = 0$ holds, where $Y_k$ is the left-hand side of the relation \eqref{eq:L_star_half} for $v_k$:
\begin{equation}\label{eq:Y_k_def}
    Y_k = \sum_{i=0}^{r-1} f_i v_{k-i} - (k-r+1+\nu)v_{k-r+1} + \sum_{j=1}^{r-1} j g_j v_{k-r+j+1}.
\end{equation}
Since $g_i$, $\nu$, and $c_\emptyset^{(i)}$ are uniquely solved by looking at the constant terms, we need to show that the actions of $\widetilde{L}_n$ ($n \ge r$) on $Y_k$ are equal to zero. We use induction on $k$, assuming that $v_0, \dots, v_{k-1}$ satisfy $Y_j = 0$ for $j < k$.

Applying $\widetilde{L}_n$ ($n \ge r$) to $Y_k$, and using the relations $\widetilde{L}_n v_j = 0$ ($r \le n \le 2r-2$) and $\widetilde{L}_{2r-1} v_j = v_{j-1}$, we obtain:
\begin{align}
    \widetilde{L}_n Y_k &= \sum_{i=0}^{r-1} [L_n, f_i] v_{k-i} \quad (r \le n \le 2r-2), \label{eq:Ln_act_Yk} \\
    \widetilde{L}_{2r-1} Y_k &= \sum_{i=0}^{r-1} [L_{2r-1}, f_i] v_{k-i} - v_{k-r} + Y_{k-1}. \label{eq:L2r1_act_Yk}
\end{align}
In \eqref{eq:L2r1_act_Yk}, the extra term $-v_{k-r}$ arises because applying $\widetilde{L}_{2r-1}$ to $v_{k-r+1}$ yields $v_{k-r}$, which shifts the index $k$ in the coefficient $(k-r+1+\nu)$ by one compared to $Y_{k-1}$. By the induction hypothesis, $Y_{k-1} = 0$. We note that each $f_i$ does not depend on $k$.

To evaluate the commutator sum, we note that $\sum_{i=0}^{r-1} [L_n, f_i] v_{k-i}$ is the coefficient of $\Lambda_{2r-1}^k$ in the state $[L_n, L_*] |\widetilde I^{(r-1/2)}\rangle$. We compute this using the generating operator:
\begin{equation}
    [L_n, L_*] |\widetilde I^{(r-1/2)}\rangle = \Lambda_{2r-1}^r\sum_{m=0}^{r-1}  (\mathcal{M}^{-1})_{r, m+1} (n-m) L_{n+m} |\widetilde I^{(r-1/2)}\rangle = E(n) |\widetilde I^{(r-1/2)}\rangle,
\end{equation} 
where 
\begin{align*}
    E(n) &= \Lambda_{2r-1}^r \sum_{m=0}^{r-1} (\mathcal{M}^{-1})_{r, m+1} (n-m)\Lambda_{n+m} \\
         &= \Lambda_{2r-1}^r \sum_{m=0}^{r-1} (\mathcal{M}^{-1})_{r, m+1} \mathcal{V}_m(\Lambda_n) \\
         &= \Lambda_{2r-1}^r \frac{\partial \Lambda_n}{\partial \Lambda_{2r-1}}.
\end{align*}

For $r \le n \le 2r-2$, $\Lambda_n$ does not depend on $\Lambda_{2r-1}$, so $\frac{\partial \Lambda_n}{\partial \Lambda_{2r-1}} = 0$. Thus, $E(n) = 0$, which proves that \eqref{eq:Ln_act_Yk} is identically zero. 
For $n = 2r-1$, we have  $\frac{\partial \Lambda_{2r-1}}{\partial \Lambda_{2r-1}} = 1$. Thus, $E(2r-1) = \Lambda_{2r-1}^r$. The coefficient of $\Lambda_{2r-1}^k$ in $E(2r-1) |\widetilde I^{(r-1/2)}\rangle$ is exactly $v_{k-r}$. Substituting this into \eqref{eq:L2r1_act_Yk} yields $v_{k-r} - v_{k-r} = 0$. By construction, the scalar component of \(Y_k\) is zero. 
Moreover, the above computation shows that \(\widetilde L_\mu Y_k=0\) for all relevant partitions \(\mu\). 
Therefore, by Lemma \ref{lem:det}, the non-scalar descendant part of \(Y_k\) also vanishes. Hence \(Y_k=0\). This completes the proof. 

\end{proof}

\subsection{Scalar Gauge and the Full Lower Equations in the Half-Integer Case}\label{sec:scalar_half}

We now prove the analogue of the preceding scalar-gauge statement for
the half-integer-rank vector. 
There is one difference from the integer-rank case.  In the integer-rank
case the scalar parts of the lower deformation operators are fixed in
advance, and the Frobenius theorem is applied only to integrate the
scalar obstructions \(a_i\).  In the half-integer-rank case the
truncated vector fields \(\mathcal V_i\) are first constructed without their
scalar parts.  Hence we first solve a Maurer--Cartan problem for the
unknown scalar functions \(\sigma_i\), with the higher scalar modes
\(S_n\) fixed for \(n\ge r\).  After this scalar completion, the proof
is identical to the integer-rank case.

We first complete the vector-field representation by adding scalar
terms.
\begin{lem}[Scalar completion in the \(L_*\)-gauge]
There exist scalar functions
\[
 \sigma_0,\sigma_1,\ldots,\sigma_{r-1}
\]
in the localized formal coefficient ring such that, if we put
\[
 \mathcal L_i:=\sigma_i+\mathcal V_i\qquad(0\le i\le r-1),
\]
and
\[
 \mathcal L_m:=S_m\qquad(m\ge r),
\]
then
\[
 [\mathcal L_m,\mathcal L_n]=(n-m)\mathcal L_{m+n}
\]
for all \(m,n\ge0\), where \(\mathcal L_N=0\) for \(N>2r-1\).  Moreover,
the scalar functions may be chosen so that
\begin{equation}\label{eq:Lstar-gauge-sigma}
 \sum_{i=0}^{r-1}(\mathcal M^{-1})_{r,i+1}\sigma_i=0.   
\end{equation}
We call this choice the \(L_*\)-gauge.
\end{lem}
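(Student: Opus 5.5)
We would first reduce the commutation relations to a cohomological problem for the unknown scalars \(\sigma_i\), split according to which indices lie below \(r\). For \(m,n\ge r\) both sides are scalars and \(m+n\ge 2r>2r-1\), so \((n-m)\mathcal L_{m+n}=0\) and there is nothing to prove. For \(0\le m\le r-1\) and \(n\ge r\), the scalar \(\sigma_m\) commutes with \(S_n\), so the bracket is \(\mathcal V_m(S_n)\). This equals \((n-m)S_{m+n}\) by Theorem~\ref{thm:half_vector_fields} when \(r\le n\le 2r-1\); for \(n\ge 2r\) both sides vanish, and \(m=0\) is the Euler computation. The only genuine conditions come from \(0\le m,n\le r-1\). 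There Theorem~\ref{thm:half_vector_fields} supplies the vector-field part, with \(\mathcal V_N=0\) for \(N\ge r\), so the requirement becomes
\[
 \mathcal V_m\sigma_n-\mathcal V_n\sigma_m=
 \begin{cases}(n-m)\sigma_{m+n},& m+n\le r-1,\\ (n-m)S_{m+n},& m+n\ge r.\end{cases}
\]

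Next we would rephrase this as \(d\omega=\beta\). On the locus \(c_{r-1}\Lambda_{2r-1}\neq0\), the fields \(\mathcal V_0,\ldots,\mathcal V_{r-1}\) form a frame, since \(\det\mathcal M\neq0\) by \eqref{eq:det_M}. We define a one-form \(\omega\) by \(\omega(\mathcal V_i)=\sigma_i\), and a two-form \(\beta\) by
\[
 \beta(\mathcal V_m,\mathcal V_n)=(n-m)S_{m+n}\ \text{ if } m+n\ge r,\qquad
 \beta(\mathcal V_m,\mathcal V_n)=0\ \text{ otherwise}.
\]
Using \(d\omega(X,Y)=X\omega(Y)-Y\omega(X)-\omega([X,Y])\) together with the brackets of Theorem~\ref{thm:half_vector_fields}, the system above is exactly \(d\omega=\beta\).

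The first real step is to check that \(d\beta=0\). We would evaluate \(d\beta(\mathcal V_a,\mathcal V_b,\mathcal V_c)\) as the cyclic sum of \(\mathcal V_a\beta(\mathcal V_b,\mathcal V_c)-\beta([\mathcal V_a,\mathcal V_b],\mathcal V_c)\). Every term is a multiple of \(S_{a+b+c}\), because \(\mathcal V_a(S_{b+c})=(b+c-a)S_{a+b+c}\). The resulting identity among the coefficients is the Jacobi identity for \(\mathrm{Vir}_{\ge0}\) acting on the span of the \(S\)'s, so it holds. Conceptually, \(\beta\) is the cocycle of the semidirect product of the truncated algebra \(\langle\mathcal V_i\rangle\) with the abelian ideal \(\langle S_m\rangle\). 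The case distinctions \(a+b<r\le a+b+c\) need care, and we would organise them by the value of \(a+b+c\).

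Having \(d\beta=0\), we would produce \(\omega\) by a Poincaré lemma, working in the coordinates \(y_j=S_{r-1+j}\) of Theorem~\ref{thm:half_vector_fields}. In these coordinates \(\mathcal V_n(y_j)=(r-1+j-n)y_{j+n}\), so \(\beta\) has coefficients that are polynomial in \(y\) and localized only at \(y_r=\Lambda_{2r-1}\) (and at \(c_{r-1}\)). For the ring \(\mathbb C[y_1,\ldots,y_{r-1},y_r^{\pm1}]\) the de Rham \(H^2\) vanishes, so \(\beta=d\omega_0\) for some \(\omega_0\); as a fallback we would use the formal Poincaré lemma in the localized formal ring. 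We then set \(\sigma_i:=\omega_0(\mathcal V_i)\).

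For the \(L_*\)-gauge we use the freedom \(\omega\mapsto\omega+d\varphi\), which changes \(\sigma_i\) to \(\sigma_i+\mathcal V_i\varphi\). By \eqref{eq:partial_Lambda_M_inv}, condition \eqref{eq:Lstar-gauge-sigma} says precisely \(\omega(\partial/\partial\Lambda_{2r-1})=0\). We therefore choose \(\varphi\) with \(\partial\varphi/\partial\Lambda_{2r-1}=-\omega_0(\partial/\partial\Lambda_{2r-1})\), that is, we integrate in \(\Lambda_{2r-1}\).

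We expect this integration to be the main obstacle: a \(\Lambda_{2r-1}^{-1}\) term would force a logarithm. We would first try to rule this out by \(\mathcal V_0\)-homogeneity. All the \(S_m\) are weight-homogeneous, so \(\omega_0\) may be chosen weight-homogeneous, which constrains which monomials \(\Lambda_{2r-1}^{-1}c^\alpha\) can occur. If such a term survives, we would instead construct \(\omega\) directly by the homotopy operator contracting in the \(\Lambda_{2r-1}\)-direction, which builds \(\omega(\partial_{\Lambda_{2r-1}})=0\) in from the start. In that approach any residual logarithmic term is a closed one-form in the lower variables, and it can be pushed into \(\omega\) without affecting the gauge condition.
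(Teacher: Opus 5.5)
Your proposal is correct and follows the same route as the paper: you rewrite the Maurer--Cartan system for the \(\sigma_i\) as \(d\omega=\beta\) in the moving frame \(\mathcal V_0,\ldots,\mathcal V_{r-1}\), get closedness of \(\beta\) from the Jacobi identity together with \(\mathcal V_i(S_m)=(m-i)S_{m+i}\), produce \(\sigma\) by a Poincar\'e lemma, and reach the \(L_*\)-gauge by a gauge transformation that integrates \(\omega(\partial/\partial\Lambda_{2r-1})\) in \(\Lambda_{2r-1}\). The paper passes over the logarithm issue you flag, and your weight argument already settles it without the fallback: in the \(y\)-coordinates a weight-zero \(\omega_0\) has \(dy_r\)-coefficient of weight \(-(2r-1)\), so its only possible \(y_r^{-1}\) term is a constant multiple of \(dy_r/y_r\), which is closed and can simply be subtracted.
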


\begin{proof}
The scalar terms must satisfy the Maurer--Cartan equations
\begin{equation}\label{eq:sigma}
    \mathcal V_i\sigma_j-\mathcal V_j\sigma_i=(j-i)\sigma_{i+j},
\end{equation}
where we set
\[
 \sigma_N=S_N\quad(N\ge r),\qquad \sigma_N=0\quad(N>2r-1).
\]
After moving the unknown low-mode term
\((j-i)\sigma_{i+j}\) to the left, the scalar completion equations
\eqref{eq:sigma} become
\[
 d_\mathcal V\sigma=\Omega,
\]
where \(d_\mathcal V\) is the Chevalley--Eilenberg differential associated with
the moving frame \(\mathcal V_0,\ldots,\mathcal V_{r-1}\), and where
\[
 \Omega_{ij}=0\quad(i+j<r),\qquad
 \Omega_{ij}=(j-i)S_{i+j}\quad(i+j\ge r).
\]
The compatibility \(d_\mathcal V\Omega=0\)\footnote{Here \(d_\mathcal V\) denotes the Chevalley--Eilenberg differential for the
moving frame \(\mathcal V_0,\ldots,\mathcal V_{r-1}\); in particular,
\[
(d_\mathcal V\sigma)_{ij}
=
\mathcal V_i\sigma_j-\mathcal V_j\sigma_i-(j-i)\sigma_{i+j}
\quad (i+j<r),
\]
and
\[
(d_\mathcal V\sigma)_{ij}
=
\mathcal V_i\sigma_j-\mathcal V_j\sigma_i
\quad (i+j\ge r).
\]} is exactly the Jacobi identity together
with
\[
 \mathcal V_i(S_m)=(m-i)S_{m+i}.
\]
Since \(\mathcal V_0,\ldots,\mathcal V_{r-1}\) form a local frame after localization, the
Chevalley--Eilenberg complex defined by the \(\mathcal V_i\)'s is identified with
the ordinary de Rham complex written in this moving frame.  Hence the
formal Poincaré lemma gives a local \(1\)-cochain
\(\sigma=(\sigma_i)\) solving \(d_\mathcal V\sigma=\Omega\).

There is a scalar gauge freedom for the completed differential
realization.  For a scalar formal function \(\phi\), set
\[
 \mathcal L_i^\phi:=e^\phi\mathcal L_i e^{-\phi}.
\]
For \(0\le i\le r-1\), this gives
\[
 \mathcal L_i^\phi=\mathcal V_i+\sigma_i-\mathcal V_i\phi,
\]
whereas for \(n\ge r\) we have
\[
 \mathcal L_n^\phi=e^\phi S_n e^{-\phi}=S_n.
\]
Thus the higher scalar parts \(\sigma_n=S_n\) for \(n\ge r\) remain
fixed. 
Since this is a conjugation of the differential operators
\(\mathcal L_n\), the Maurer--Cartan equations are preserved.

Set
\[
 B:=\sum_{i=0}^{r-1}(\mathcal M^{-1})_{r,i+1}\sigma_i.
\]
Under the above gauge transformation,
\[
 B^\phi
 =
 B-\sum_{i=0}^{r-1}(\mathcal M^{-1})_{r,i+1}\mathcal V_i\phi
 =
 B-\frac{\partial\phi}{\partial\Lambda_{2r-1}}.
\]
Choosing \(\phi\) such that
\[
 \frac{\partial\phi}{\partial\Lambda_{2r-1}}=B,
\]
we obtain
\[
 \sum_{i=0}^{r-1}(\mathcal M^{-1})_{r,i+1}\sigma_i^\phi=0.
\]

\end{proof}

The analogue of Lemma~\ref{lem:commutativity} also holds in the
half-integer setting.  Indeed, the \(\partial_{\Lambda_{2r-1}}\)-part
commutes trivially with the Virasoro action on the completed rank \(r-1\)
module, while the remaining \(c\)-parameter vector-field part is a rational
linear combination of the lower vector fields of the rank \(r-1\) integer
module.  Hence, by the inductive normalization of \(v_0\),
\[
  [L_n,\mathcal V_i]=0
  \qquad(n\in\mathbb Z,\ 0\le i\le r-1)
\]
on the completed rank \(r-1\) irregular Verma module.

Set
\[
  \mathcal L_* :=
  \Lambda_{2r-1}^r
  \sum_{i=0}^{r-1}(\mathcal M^{-1})_{r,i+1}\mathcal L_i .
\]
In the \(L_*\)-gauge this is simply
\[
  \mathcal L_*=\Lambda_{2r-1}^r
  \frac{\partial}{\partial\Lambda_{2r-1}} .
\]
Define
\[
 R_n:=L_n-\mathcal L_n,\quad R_*:=L_*-\mathcal L_*
\]
with the $L_*$-gauge. 
For \(n\ge r\), this is simply
\[
 R_n=L_n-S_n.
\]

By construction and by the preceding commutativity observation,  
for \(m,n\ge0\), the operators \(R_n\) satisfy
\begin{equation}\label{eq:half_R}
 [R_m,R_n]=(m-n)R_{m+n}.    
\end{equation}

Thanks to Theorem \ref{thm:int_to_half} and \eqref{eq:half_R}, the same argument in the proof of Theorem \ref{thm:integer_full_vector} holds. Therefore, we obtain 
\begin{thm}
Let
\[
 W=|\widetilde I^{(r-1/2)}\rangle
\]
be the canonical half-integer-rank vector constructed in Theorem~\ref{thm:int_to_half}.
Then there exists a non-zero scalar formal function
\[
 f=f(c_0,c_1,\ldots,c_{r-1}), 
\]
 such that
\[
 |I^{(r-1/2)}\rangle:=fW
\]
satisfies
\[
 R_i|I^{(r-1/2)}\rangle=0,
 \qquad 0\le i\le r-1.
\]
Equivalently,
\[
 L_i|I^{(r-1/2)}\rangle
 =
 \mathcal L_i|I^{(r-1/2)}\rangle,
 \qquad 0\le i\le r-1.
\]
The function \(f\) is unique up to multiplication by a non-zero function
of \(c_0\).
\end{thm}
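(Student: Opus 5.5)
I will transcribe the proof of Theorem~\ref{thm:integer_full_vector} to the half-integer setting, with \(\Lambda_{2r-1}\) in place of \(c_r\), the truncated fields \(\mathcal V_i\) in place of \(\mathcal D_i\), and the scalar completion \(\mathcal L_i=\sigma_i+\mathcal V_i\) in the \(L_*\)-gauge in place of the fixed scalar parts.

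\textbf{Step 1: uniqueness lemma.} Let \(W=|\widetilde I^{(r-1/2)}\rangle\) and put \(\Theta:=\{W\}\). By \eqref{eq:series_int_to_half}, after removing the prefactor \(\Lambda_{2r-1}^{\nu}\exp(\sum_j g_j/\Lambda_{2r-1}^j)\), the constant term of \(\Theta\) is \(1\), coming from \(v_0\). Hence \(\Theta\) is invertible in the prescribed sector. Let \(X\) be a formal vector in the same sector with \(R_nX=0\) for \(n\ge r\). Then \(X-(\{X\}/\Theta)W\) has vanishing constant term. Order by order in \(\Lambda_{2r-1}\), the higher-mode equations give the Gram system \eqref{eq:gram_system}, whose determinant is a power of \(\Lambda'_{2r-2}=-c_{r-1}^2\) by Lemma~\ref{lem:det}. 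So this difference vanishes, and \(X=(\{X\}/\Theta)W\).

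\textbf{Step 2: the obstruction is scalar.} For \(0\le i\le r-1\) set \(Y_i:=R_iW\). For \(n\ge r\), \eqref{eq:half_R} gives
\[
 R_nY_i=[R_n,R_i]W=(n-i)R_{n+i}W=0 ,
\]
since \(n+i\ge r\). Step~1 then yields \(R_iW=a_iW\) with \(a_i:=\{R_iW\}/\Theta\), and \(a_k=0\) for \(k\ge r\). Applying \eqref{eq:half_R} to \(W\), and using that \(R_i\) acts on scalar prefactors through \(-\mathcal V_i\), I expect
\[
 \mathcal V_ia_j-\mathcal V_ja_i=(j-i)a_{i+j}.
\]
This is the closedness of the cochain \((a_i)\) for the frame \(\mathcal V_0,\ldots,\mathcal V_{r-1}\). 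That frame is nondegenerate after localizing at \(c_{r-1}\) and \(\Lambda_{2r-1}\), by \eqref{eq:det_M} together with Theorem~\ref{thm:half_vector_fields}. The formal Poincaré (Frobenius) lemma then gives \(h\) with \(\mathcal V_ih=a_i\), unique up to a function of \(c_0\).

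\textbf{Step 3: \(h\) does not depend on \(\Lambda_{2r-1}\).} In the \(L_*\)-gauge \eqref{eq:Lstar-gauge-sigma} we have \(\mathcal L_*=\Lambda_{2r-1}^r\partial_{\Lambda_{2r-1}}\). Therefore
\[
 L_*-\Lambda_{2r-1}^r\partial_{\Lambda_{2r-1}}=\Lambda_{2r-1}^r\sum_{i=0}^{r-1}(\mathcal M^{-1})_{r,i+1}R_i .
\]
By \eqref{eq:Lstar_cond_int_to_half} this operator kills \(W\), so \(\sum_i(\mathcal M^{-1})_{r,i+1}a_i=0\). By \eqref{eq:partial_Lambda_M_inv} the left side equals \(\partial h/\partial\Lambda_{2r-1}\), hence \(\partial_{\Lambda_{2r-1}}h=0\).

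\textbf{Step 4: conclusion.} Put \(f=e^h\). Then \(R_i(fW)=fa_iW-(\mathcal V_if)W=0\) for \(0\le i\le r-1\). For uniqueness, a ratio \(g\) of two admissible factors satisfies \(\mathcal V_ig=0\) for all \(i\), so by the frame property \(g\) depends only on \(c_0\).

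\textbf{Main obstacle.} The delicate point is making sure the formal Poincaré lemma returns \(h\) in the correct formal class: the \(a_i\) are rational in \(c_{r-1}\) and \(\Lambda_{2r-1}\) and are only formal in \(\Lambda_{2r-1}\). I would handle this as in the integer case. Since Step~3 shows \(h\) is independent of \(\Lambda_{2r-1}\), the one-form \(dh\) reduces to a closed Laurent one-form in \(c_1,\ldots,c_{r-1}\). Integrating mode by mode then gives an exact part plus logarithmic zero-mode terms, both admissible in a scalar gauge factor.
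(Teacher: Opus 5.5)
Your proposal is correct and follows the paper's route. The paper proves this theorem by transcribing the proof of Theorem~\ref{thm:integer_full_vector}: it shows the obstruction is scalar, $R_iW=a_iW$, via Gram-matrix uniqueness, integrates the Maurer--Cartan-compatible $a_i$ by Frobenius, and uses the $L_*$-gauge identity $R_*W=0$ to remove the $\Lambda_{2r-1}$-dependence of the potential, exactly as you do (you are slightly more precise than the paper's sketch in integrating against the vector-field parts $\mathcal V_i$ rather than the completed operators $\mathcal L_i$).
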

\begin{proof}
The proof is the same as that of Theorem~\ref{thm:integer_full_vector}, with
\(\mathcal D_i\) replaced by the completed half-integer differential operators
\(\mathcal L_i\).  The scalar completion lemma gives the Maurer--Cartan relations for the operators \(R_i=L_i-\mathcal L_i\), and the \(L_*\)-gauge
gives
\[
 R_*W=0.
\]
Thus the obstruction \(R_iW\) to the lower equations is again a scalar
multiple of \(W\).  The Maurer--Cartan relation implies the Frobenius
compatibility of these scalar obstructions, and the \(L_*\)-gauge shows that
the resulting scalar potential is independent of \(\Lambda_{2r-1}\).  Hence it
can be absorbed into a scalar factor \(f(c_0,c_1,\ldots,c_{r-1})\).  The
uniqueness statement follows because the completed vector fields form a local
frame on the lower parameter space.
\end{proof}

\begin{cor}
The scalar gauge factor \(f\) in the preceding theorem can be chosen in
the form
\[
 f
 =
 C(c_0)\,
 c_1^{\nu_1}\cdots c_{r-1}^{\nu_{r-1}}
 \exp\bigl(g_0(c_0,c_1,\ldots,c_{r-1})\bigr),
\]
where \(g_0\) belongs to the same formal Laurent coefficient ring in the
lower parameters, and the exponents \(\nu_1,\ldots,\nu_{r-1}\) depend
only on the passive parameters \(c_0\). 
\end{cor}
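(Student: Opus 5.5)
The plan mirrors the proof of the corollary following Theorem~\ref{thm:integer_full_vector}. Put \(h:=\log f\). By the preceding theorem, \(h\) depends only on \(c_0,c_1,\ldots,c_{r-1}\) and satisfies
\[
 \mathcal V_i h=a_i-\sigma_i\ \text{(or simply }a_i\text{ after absorbing the scalar completion)},\qquad 0\le i\le r-1 .
\]
Since \(h\) does not depend on \(\Lambda_{2r-1}\), only the \(\partial/\partial c_k\)-parts of the \(\mathcal V_i\) act on \(h\). These are rational functions in \(c_1,\ldots,c_{r-1}\) and \(\Lambda_{2r-1}\).

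The first step is to eliminate \(\Lambda_{2r-1}\) and obtain an honest one-form on the \(c\)-torus. The vector fields \(\mathcal V_1,\ldots,\mathcal V_{r-1}\) have no \(\partial_{\Lambda_{2r-1}}\)-component. Their \(c\)-coefficient matrix \((h_{n,k})\) is anti-triangular, with anti-diagonal entries \(-\frac{2r-2n-1}{2c_{r-1}}\Lambda_{2r-1}\). So, away from \(\Lambda_{2r-1}=0\), it is invertible, and we can solve for \(\partial h/\partial c_k\) for each \(k\).

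This yields
\[
 dh=\sum_{k=1}^{r-1}A_k\,dc_k .
\]
Here \(A_k\) is a priori a formal object in \(\Lambda_{2r-1}\) over the localized coefficient ring. However, \(A_k=\partial h/\partial c_k\), and this is independent of \(\Lambda_{2r-1}\) because \(h\) is. So the \(A_k\) lie in the Laurent coefficient ring \(\mathcal A^{\mathrm{half}}_{r,\mathrm{Laur}}\) in \(c_1,\ldots,c_{r-1}\) after localization. The one-form \(dh\) is automatically closed, being exact as a formal object. This closedness is the same consequence of the Frobenius condition already established in the theorem.

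The second step repeats the torus argument verbatim. Write
\[
 dh=\sum_j B_j\,d\log c_j,\qquad B_j=c_jA_j,
\]
and expand \(B_j=\sum_\alpha B_{j,\alpha}c^\alpha\). Closedness gives \(\alpha_iB_{j,\alpha}=\alpha_jB_{i,\alpha}\). For \(\alpha\ne0\) this forces \(B_{j,\alpha}=\alpha_j\beta_\alpha\), so these terms combine into \(d\bigl(\sum_{\alpha\neq 0}\beta_\alpha c^\alpha\bigr)=:dg_0\). The zero mode contributes \(\sum_j\nu_j\,d\log c_j\), with \(\nu_j=B_{j,0}\) independent of \(c_1,\ldots,c_{r-1}\), hence depending only on \(c_0\). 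Integrating gives
\[
 h=g_0+\sum_j\nu_j\log c_j+\log C(c_0).
\]
The integration constant is a function of \(c_0\), which matches exactly the uniqueness freedom in the preceding theorem. Exponentiating gives the stated form of \(f\).

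The main obstacle is the first step: justifying that the coefficients \(A_k\) really lie in a Laurent ring in the \(c_j\) alone. The inversion introduces powers of \(\Lambda_{2r-1}^{-1}\), and the obstructions \(a_i\) are formal series in \(\Lambda_{2r-1}\). I would handle this by comparing coefficients of powers of \(\Lambda_{2r-1}\). Since the left side \(\partial h/\partial c_k\) is \(\Lambda_{2r-1}\)-free, all nonzero powers must cancel, and the \(\Lambda_{2r-1}^0\) coefficient is a Laurent expression in \(c_1,\ldots,c_{r-1}\). The only possible poles are along \(c_{r-1}=0\) from the determinant, together with the allowed denominators of the coefficient ring. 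If further denominators appeared, I would enlarge the ring in the same way as the integer-rank corollary implicitly does.
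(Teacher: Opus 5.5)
Your proposal is correct and is essentially the paper's argument: the paper's proof only invokes the Laurent-torus decomposition of closed formal one-forms from the integer-rank corollary, and that is exactly your second step. Your first step makes explicit what the paper asserts without comment, namely that \(dh\) is a closed Laurent one-form in \(c_1,\ldots,c_{r-1}\). You get this by recovering \(\partial h/\partial c_k\) from \(\mathcal V_1h,\ldots,\mathcal V_{r-1}h\) through the anti-triangular block \((h_{n,k})\), whose determinant is a monomial in \(\Lambda_{2r-1}\) and \(c_{r-1}^{-1}\), and then using the \(\Lambda_{2r-1}\)-independence of \(h\). One small point: the relation is simply \(\mathcal V_i h=a_i\), because \(a_i=\{R_iW\}/\Theta\) already contains \(\sigma_i\), so your \(a_i-\sigma_i\) hedge is unnecessary.
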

\begin{proof}
This follows from the same Laurent-torus decomposition of closed formal
one-forms used in the proof of the integer-rank corollary.
\end{proof}

\section{Discussion}

A consequence of the scalar-gauge argument is that the canonical
\(L_*\)-solution is not merely a solution of the higher-mode recursion.  After
multiplication by a scalar factor depending only on the lower parameters, it
satisfies the full lower Virasoro deformation equations.  This separates the
non-scalar recursive construction from the scalar normalization problem and
clarifies the role of the lower modes in the formal deformation theory.

Recent developments also suggest a broader role of the irregular vectors
constructed in this paper in the study of quantum Painlevé tau functions.
In \cite{BST25}, bilinear equations for quantum Painlevé tau functions were
derived.  It was observed that such tau functions can be written as Zak
transforms of certain building blocks, and that these building blocks can be
computed from the bilinear equations.  The same work further relates these
building blocks to irregular conformal blocks defined by expectation values of irregular vertex operators and pairings of
irregular vectors.

From this point of view, the present construction gives a representation
theoretic framework for several building blocks appearing in the large-time
asymptotic expansions of quantum Painlev\'e tau functions.  The known and
expected realizations may be summarized as follows:

\begin{center}
\renewcommand{\arraystretch}{1.15}
\begin{tabular}{c|c|l|c}
Equation & Singularity type & CFT realization & Reference\\
\hline
\(\mathrm{P}_{\mathrm{V}}\) & \((0,0,1)\) &
irregular vertex operator & \cite{Nagoya15}\\
\(\mathrm{P}_{\mathrm{IV}}\) & \((0,2)\) &
irregular vertex operator & \cite{Nagoya15}\\
\(\mathrm{P}_{\mathrm{IV}}\) & \((0,2)\) &
irregular vector & this paper\\
\(\mathrm{P}_{\mathrm{III}_{1}}\) & \((0,0,\frac{1}{2})\) &
ramified irregular vertex operator & \cite{Nagoya18}\\
\(\mathrm{P}_{\mathrm{II}}\) & \((0,\frac{3}{2})\) &
ramified irregular vertex operator & \cite{Nagoya18}\\
\(\mathrm{P}_{\mathrm{II}}\) & \((0,\frac{3}{2})\) &
irregular vector & this paper\\
\(\mathrm{P}_{\mathrm{II}}\) & \((3)\) &
irregular vector & this paper\\
\(\mathrm{P}_{\mathrm{I}}\) & \((\frac{5}{2})\) &
irregular vector & this paper
\end{tabular}
\end{center}

All entries in the table correspond to asymptotic expansions associated
with irregular singularities at infinity.

We expect that the present construction will serve as a representation-theoretic
basis for the asymptotic and resurgent analysis of irregular conformal blocks,
Argyres--Douglas partition functions, and related accessory parameters for
Heun equations of confluent type \cite{INS26}.  It is also natural to expect
that the remaining building blocks in the quantum Painlev\'e theory admit
similar constructions from Virasoro representation theory.

\appendix

\section{Lemmas}

To determine the entries of the inverse matrix $M^{-1}$, we introduce the following lemma.

\begin{lem} \label{lem:inverse_A}
Let $A(z)$ be a formal power series defined by
\begin{equation}
    A(z) = \frac{1}{c_r + c_{r-1}z + \cdots + c_1z^{r-1}} = \sum_{j=0}^\infty a_j z^j.
\end{equation}
Then, the entries of the $r$-th row of the inverse matrix $M^{-1}$ are given by
\begin{equation}
    (M^{-1})_{r, k} = \frac{1}{r} a_{k-1}
\end{equation}
for each $1 \le k \le r$.
\end{lem}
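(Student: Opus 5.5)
The plan is to identify the $r$-th row of $M^{-1}$ as the unique solution of a linear system and then verify that the proposed row vector $x_k:=a_{k-1}/r$ satisfies that system. Since $\det M=(-1)^{r(r-1)/2}r!\,c_r^r\neq0$ over the localization at $c_r$, the $r$-th row $x=(x_1,\dots,x_r)$ of $M^{-1}$ is characterized by $xM=e_r^{\mathsf T}$, because $(M^{-1}M)_{r,j}=\delta_{r,j}$. So it suffices to show that $x_k=a_{k-1}/r$ solves this system.

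First I would read off the entries of $M$ from \eqref{eq:matrix_form}. Row $i$ corresponds to $\mathcal D_{i-1}=\sum_{j}j\,c_{i-1+j}\,\partial/\partial c_j$, so
\[
M_{i,j}=j\,c_{i+j-1},\qquad c_m=0\ (m>r).
\]
The condition $xM=e_r^{\mathsf T}$ then reads
\[
j\sum_{i=1}^{r}x_i\,c_{i+j-1}=\delta_{j,r},
\qquad\text{equivalently}\qquad
\sum_{i=1}^{r-j+1}x_i\,c_{i+j-1}=\frac{\delta_{j,r}}{r},\qquad 1\le j\le r.
\]
The factor $j$ can be divided out because it is nonzero, and at $j=r$ it produces exactly the $1/r$. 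Substituting $x_i=a_{i-1}/r$, writing $s=i-1$ and $m=r-j\in\{0,\dots,r-1\}$, the required identity becomes
\[
\sum_{s=0}^{m}a_s\,c_{r-m+s}=\delta_{m,0}.
\]

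Next I compare this with the defining relation $A(z)\,(c_r+c_{r-1}z+\cdots+c_1z^{r-1})=1$. The polynomial has coefficient $c_{r-t}$ at $z^t$, with the convention $c_{r-t}=0$ for $t\ge r$. The coefficient of $z^m$ in the product is therefore $\sum_{s=0}^m a_s c_{r-(m-s)}$, which is exactly the left-hand side above, and it equals $\delta_{m,0}$. This holds for every $m$, in particular for $0\le m\le r-1$. Note that $A(z)$ is a well-defined formal power series because the constant term $c_r$ of the denominator is invertible in the localized ring.

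The only real care needed is the index bookkeeping: the anti-triangular truncation $c_m=0$ for $m>r$ has to match the upper summation limit $r-j+1$, and the factor $j$ has to be handled correctly. Once that is done, uniqueness from $\det M\neq0$ finishes the proof. As a consistency check, the case $k=1$ gives $(M^{-1})_{r,1}=a_0/r=1/(rc_r)$, which is the value used for $E(r)$ in the proof of Theorem~\ref{thm:int_to_int}.
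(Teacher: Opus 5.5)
Your proposal is correct and follows essentially the same route as the paper: you characterize the $r$-th row of $M^{-1}$ by $xM=e_r^{\mathsf T}$ with $M_{i,j}=j\,c_{i+j-1}$, divide out the factor $j$ (which equals $r$ at the only nonzero right-hand side), reindex by $m=r-j$, and match with the coefficient of $z^m$ in $A(z)(c_r+c_{r-1}z+\cdots+c_1z^{r-1})=1$. The only difference is that you state the uniqueness step explicitly via $\det M\neq0$, whereas the paper uses it tacitly when it concludes $r y_i=a_{i-1}$.
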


\begin{proof}
Let $\mathbf{y} = (y_1, \ldots, y_r)$ denote the $r$-th row vector of $M^{-1}$. By the definition of the inverse matrix, we have $\sum_{i=1}^r y_i M_{i, k} = \delta_{r, k}$ for any $1 \le k \le r$. Substituting $M_{i,k} = k c_{i+k-1}$ (where $i+k-1 \le r$), the sum ranges from $i=1$ to $r-k+1$, yielding 

\begin{equation*}
    \sum_{i=1}^{r-k+1} y_i k c_{i+k-1} = \delta_{r, k}.  
\end{equation*}
Since $k=r$ whenever $\delta_{r,k}\neq 0$, this system is equivalent to 
\begin{equation*}
    \sum_{i=1}^{r-k+1} r y_i c_{i+k-1} = \delta_{r, k}.
\end{equation*}

Letting $p = r-k$, the index $p$ runs from $r-1$ down to $0$ as $k$ goes from $1$ to $r$. Rewriting the above equation in terms of $p$, we obtain
\begin{equation}
    \sum_{i=1}^{p+1} r y_i c_{r-p+i-1} = \delta_{p, 0}.
\end{equation}
On the other hand, consider the identity $A(z) \sum_{m=0}^{r-1} c_{r-m} z^m = 1$, which is a rearrangement of the defining equation of $A(z)$. Comparing the coefficients of $z^p$, we find $\sum_{m=0}^p a_m c_{r-(p-m)} = \delta_{p, 0}$. Shifting the summation index by setting $i = m+1$, this becomes $\sum_{i=1}^{p+1} a_{i-1} c_{r-p+i-1} = \delta_{p, 0}$, which exactly coincides with the equation that $r y_i$ must satisfy. Therefore, we conclude that $r y_i = a_{i-1}$.
\end{proof}

Let us recall that we set $L_*$ for integer rank as   
\begin{equation*}
    L_*=\sum_{i=0}^{r-1}f_ic_r^i,  
\end{equation*}
where each \(f_i\) is a linear combination of \(D_0,\ldots,D_{r-1}\)
with coefficients in \(\mathbb C[c_1,\ldots,c_{r-1}]\).

\begin{lem}\label{lem:f0}
    We have 
    \begin{equation}
        f_0 = \frac{(-1)^{r-1}}{r} c_{r-1}^{r-1} D_{r-1}. 
    \end{equation}
\end{lem}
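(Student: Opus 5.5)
The plan is to compute $f_0$ directly from the definition \eqref{eq:def-Lstar-int}, using Lemma~\ref{lem:inverse_A} to make the row $(M^{-1})_{r,\bullet}$ explicit. By Lemma~\ref{lem:inverse_A},
\[
  L_*=\frac1r\sum_{n=0}^{r-1} c_r^r a_n\, D_n ,
\]
where $a_n$ is the coefficient of $z^n$ in $A(z)=(c_r+c_{r-1}z+\cdots+c_1z^{r-1})^{-1}$. The operators $D_n$ do not involve $c_r$. Therefore $f_0$ is obtained by taking, for each $n$, the $c_r^0$-coefficient of the scalar $c_r^r a_n$, viewed as a function of $c_r$ with the lower parameters fixed. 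The proof thus reduces to an explicit expansion of $a_n$ in powers of $c_r$.

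For that expansion, set $u(z)=c_{r-1}z+c_{r-2}z^2+\cdots+c_1z^{r-1}$. Then
\[
  A(z)=\frac{1}{c_r}\cdot\frac{1}{1+u/c_r},
\]
and expanding the geometric series gives
\[
  a_n=\sum_{m\ge0}(-1)^m c_r^{-1-m}\,[z^n]\,u(z)^m .
\]
Since $u$ has no constant term, $[z^n]u^m=0$ for $m>n$, so the sum is finite with $0\le m\le n$. Hence
\[
  c_r^r a_n=\sum_{m=0}^{n}(-1)^m c_r^{\,r-1-m}\,[z^n]u^m .
\]
Because $n\le r-1$, every exponent $r-1-m$ is non-negative. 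This confirms that $c_r^r a_n$ is a polynomial in $c_r$, as claimed after \eqref{eq:def-Lstar-int}, and it gives its coefficients explicitly.

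The constant term in $c_r$ comes only from $m=r-1$. Since $m\le n\le r-1$, this forces $n=r-1$. For that term, $[z^{r-1}]u^{r-1}$ is the coefficient of $z^{r-1}$ in a product of $r-1$ series, each starting at $z^1$. So each factor must contribute its linear term, and the coefficient is $c_{r-1}^{r-1}$. Consequently all $D_n$ with $n<r-1$ drop out of $f_0$, and
\[
  f_0=\frac{(-1)^{r-1}}{r}\,c_{r-1}^{r-1}\,D_{r-1}.
\]

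No step here is a real obstacle. The only point requiring care is the bookkeeping showing that the constant-in-$c_r$ part receives contributions only from $n=r-1$. This rests on the degree bound $m\le n$, which holds because $u$ vanishes at $z=0$, combined with $n\le r-1$. One should also check that the expansion is read correctly as a polynomial in $c_r$ over $\mathbb C[c_1,\ldots,c_{r-1}]$, which is exactly the ring in which the $f_i$ were defined.
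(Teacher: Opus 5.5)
Your proof is correct and is essentially the paper's argument. Both use Lemma~\ref{lem:inverse_A} to write $L_*=\frac{c_r^r}{r}\sum_{n} a_n D_n$, expand $A(z)$ as a geometric series in $c_r^{-1}$, and observe that only $n=r-1$, with the $c_{r-1}z$ term taken $r-1$ times, contributes to the $c_r^0$ coefficient. Your exact finite-sum formula for $c_r^r a_n$ is a slightly more explicit version of the paper's leading-order estimate $a_n=(-1)^n c_{r-1}^n c_r^{-n-1}+\mathcal{O}(c_r^{-n})$.
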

\begin{proof}
From Lemma \ref{lem:inverse_A}, we have
\begin{equation}
    L_* = \frac{c_r^r}{r} \sum_{n=0}^{r-1} a_n D_n.
\end{equation}
To determine the constant term $f_0$ (i.e., the coefficient of $c_r^0$ in $L_*$), we examine the dependence of the expansion coefficients $a_n$ on $c_r$.

Factoring out $1/c_r$ and expanding $A(z)$ as a geometric series, we obtain
\begin{equation}
    A(z) = \frac{1}{c_r} \left( 1 + \sum_{m=1}^{r-1} \frac{c_{r-m}}{c_r} z^m \right)^{-1} = \frac{1}{c_r} \sum_{k=0}^\infty (-1)^k \left( \sum_{m=1}^{r-1} \frac{c_{r-m}}{c_r} z^m \right)^k.
\end{equation}
When determining the coefficient $a_n$ of $z^n$, we note that the lowest degree term in the parenthesis is $z^1$ (with coefficient $c_{r-1}/c_r$). Therefore, to generate $z^n$, the power $k$ in the sum can be at most $k=n$. The term with this maximum power $k=n$ arises from the unique combination of choosing the $z^1$ term $n$ times, yielding a contribution of
\begin{equation}
    \frac{1}{c_r} (-1)^n \left( \frac{c_{r-1}}{c_r} \right)^n = (-1)^n \frac{c_{r-1}^n}{c_r^{n+1}}.
\end{equation}
Any contribution arising from terms with $k < n$ will have a strictly lower degree of $c_r$ in the denominator than $n+1$. Thus, the structure of $a_n$ is determined as
\begin{equation}
    a_n = (-1)^n \frac{c_{r-1}^n}{c_r^{n+1}} + \mathcal{O}(c_r^{-n}).
\end{equation}
Multiplying this by $c_r^r$, which is the overall prefactor of $L_*$, yields
\begin{equation}
    c_r^r a_n = (-1)^n c_{r-1}^n c_r^{r-n-1} + \mathcal{O}(c_r^{r-n}).
\end{equation}
The constant term $f_0$ corresponds to the $c_r^0$ term of this expression. Since the range of the sum is $0 \le n \le r-1$, the exponent $r-n-1$ is always non-negative. It is clear that the $c_r^0$ (constant) term appears strictly when $r-n-1 = 0$, that is, when $n = r-1$. For all $n < r-1$, the terms are of order $\mathcal{O}(c_r)$ and do not contribute to $f_0$ at all.

From the above, by extracting only the $n=r-1$ term, we obtain
\begin{equation}
    f_0 = \frac{(-1)^{r-1}}{r} c_{r-1}^{r-1} D_{r-1}.
\end{equation}
\end{proof}

\begin{lem}\label{lem:f0_constant_term}
Let
\[
L_*=\sum_{i=0}^{r-1} f_i\Lambda_{2r-1}^i
\]
be the polynomial expansion of \(L_*\) in \(\Lambda_{2r-1}\).
Then the constant term \(f_0\) is a non-zero scalar multiple of
\(c_{r-1}^{2r-2}L_{r-1}\). In particular, \(f_0\) is proportional to
\(L_{r-1}\) and is non-zero on the open locus \(c_{r-1}\ne0\).
\end{lem}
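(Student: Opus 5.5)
We compute the $r$-th row of $\mathcal M^{-1}$ explicitly by back-substitution, exploiting the anti-upper-triangular shape of \eqref{eq:matrix_M_anti_triangular}. The goal is to track, for each entry, its dependence on $\Lambda_{2r-1}$. Write $\Lambda=\Lambda_{2r-1}$ and let $y=(y_1,\ldots,y_r)$ be the $r$-th row of $\mathcal M^{-1}$, so that $L_*=\Lambda^r\sum_{m=0}^{r-1}y_{m+1}L_m$. The defining relation $\sum_i y_i\mathcal M_{i,k}=\delta_{k,r}$ involves only $i\le r+1-k$.

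\emph{Step 1: the $k=r$ column.} Taking $k=r$ gives $y_1(2r-1)\Lambda=1$, hence $y_1=\frac{1}{(2r-1)\Lambda}$.

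\emph{Step 2: the recursion.} For $k=r-j$ with $1\le j\le r-1$, the relation reads
\[
 y_{j+1}=-\frac{\sum_{i=1}^{j}y_i\,\mathcal M_{i,r-j}}{h_{j,r-j}} .
\]
So each $y_{j+1}$ is determined by $y_1,\ldots,y_j$ and by entries of $\mathcal M$.

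\emph{Step 3: $\Lambda$-degrees of the entries.} I first record the $\Lambda$-dependence of the entries of $\mathcal M$. In the triangular system \eqref{eq:linear_sys_r_minus_1}, the coefficient matrix $-2c_{r-1+j-k}$ is independent of $\Lambda$. Among the right-hand sides, only $S_{2r-1}=\Lambda$ depends on $\Lambda$. Hence every $h_{n,k}$ is affine in $\Lambda$. The first row of $\mathcal M$ is $\Lambda$-free except for its last entry. The anti-diagonal entries are $h_{n,r-n}=-\frac{2r-2n-1}{2c_{r-1}}\Lambda$, which vanish at $\Lambda=0$.

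\emph{Step 4: the entries just above the anti-diagonal.} I also need the entries $\mathcal M_{j,r-j}$ at $\Lambda=0$. For $j=1$ this is $(r-1)c_{r-1}$. For $j\ge2$ it is $h_{j-1,r-j}$, i.e.\ $h_{n,r-n-1}$ with $n=j-1\le r-2$. Consider equation number $r-n-1$ of \eqref{eq:linear_sys_r_minus_1}:
\[
 -2c_{r-1}h_{n,r-n-1}-2c_{r-2}h_{n,r-n}=(2r-2-2n)S_{2r-2}=-(2r-2-2n)c_{r-1}^2 .
\]
Setting $\Lambda=0$ kills $h_{n,r-n}$, which gives $h_{n,r-n-1}|_{\Lambda=0}=(r-1-n)c_{r-1}\neq0$. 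This formula also covers the case $j=1$.

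\emph{Step 5: induction on $j$.} I prove by induction on $j$ that $y_j$ is a Laurent polynomial in $\Lambda$ supported in the exponents $-j,\ldots,-1$. I also prove that its $\Lambda^{-j}$-coefficient is $\beta_j c_{r-1}^{2(j-1)}$ for a non-zero rational number $\beta_j$. The base case is Step 1.

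For the inductive step, each product $y_i\mathcal M_{i,r-j}$ with $i\le j$ has exponents in $[-j,0]$, because $y_i$ has exponents $\le -1$ and $\mathcal M_{i,r-j}$ has degree $\le1$. Dividing by $h_{j,r-j}\propto\Lambda$ gives exponents in $[-j-1,-1]$.

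The only contribution to $\Lambda^{-j-1}$ comes from $\Lambda^{-j}$-terms times $\Lambda^0$-parts. Since $y_i$ with $i<j$ has exponents $\ge -i>-j$, only $i=j$ contributes. That contribution is
\[
 -\beta_jc_{r-1}^{2(j-1)}\cdot(r-j)c_{r-1}\cdot\Bigl(-\tfrac{2r-2j-1}{2c_{r-1}}\Bigr)^{-1}
 =\beta_j\frac{2(r-j)}{2r-2j-1}\,c_{r-1}^{2j}.
\]
This is non-zero, which completes the induction.

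\emph{Step 6: conclusion.} In $\Lambda^ry_{m+1}$ the exponents lie in $[r-m-1,r-1]$. These are $\ge1$ unless $m=r-1$. Hence the $\Lambda^0$-part of $L_*$ comes only from $m=r-1$, and
\[
 f_0=\beta_r\,c_{r-1}^{2r-2}L_{r-1}, \qquad \beta_r\neq0 .
\]
As a byproduct, this also confirms that $L_*$ is polynomial in $\Lambda$, as claimed in \eqref{eq:L_star_expansion}.

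The main obstacle is Step 4, namely the non-vanishing of the entries just above the anti-diagonal at $\Lambda=0$. This is exactly what prevents cancellations in the leading pole of $y_j$. I would handle it via the second-to-last triangular equation as above, checking carefully which $S_m$ appears on the right-hand side.
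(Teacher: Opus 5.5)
Your argument is correct, and it reaches the result by a different computational route from the paper's. I checked the key steps: $h_{n,r-n}=-\frac{2r-2n-1}{2c_{r-1}}\Lambda$, and equation $r-n-1$ of the triangular system gives $h_{n,r-n-1}|_{\Lambda=0}=(r-1-n)c_{r-1}$. Both hold, and the leading-coefficient recursion is right.

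\textbf{How the paper argues.} The paper uses Cramer's rule. Since $\mathcal M$ is anti-upper-triangular, $\det\mathcal M$ is exactly the monomial $\kappa\Lambda^r/c_{r-1}^{r-1}$. Hence $\Lambda^r(\mathcal M^{-1})_{r,m+1}$ is $c_{r-1}^{r-1}/\kappa$ times the cofactor of the $(m+1,r)$ entry. This makes it automatically polynomial in $\Lambda$, and $f_0$ is read off by setting $\Lambda=0$ in the cofactors. For $m<r-1$ the cofactor still contains the last row $(h_{r-1,1},0,\ldots,0)$, which is proportional to $\Lambda$, so it vanishes at $\Lambda=0$. For $m=r-1$ one is left with the principal minor $\det\mathcal M(r|r)|_{\Lambda=0}$. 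The paper identifies this minor with the coefficient matrix of the rank $r-1$ integer vector fields, whose determinant is proportional to $c_{r-1}^{r-1}$.

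\textbf{What each route buys.} The Cramer route is shorter and gives polynomiality of $L_*$ in $\Lambda$ for free. Its one step the paper only asserts is that identification of the principal minor at $\Lambda=0$. Your Step~4 is exactly the computation that justifies it. At $\Lambda=0$ that minor is anti-upper-triangular, with anti-diagonal entries $(r-i)c_{r-1}$, and the entries just below that anti-diagonal are the $h_{n,r-n}$, which vanish. Your back-substitution route is longer, since it needs the induction on Laurent supports. In exchange it is self-contained and gives the explicit constant $\beta_r=\frac{2^{r-1}(r-1)!}{(2r-1)\prod_{j=1}^{r-1}(2r-2j-1)}$. This agrees with what Cramer's rule gives; for example, $f_0=\frac23 c_1^2L_1$ when $r=2$.
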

\begin{proof}
By Cramer's rule,
\[
(\mathcal M^{-1})_{r,m+1}
=
\frac{(-1)^{m+1+r}\det \mathcal M(m+1|r)}
{\det \mathcal M}.
\]
After multiplication by \(\Lambda_{2r-1}^r\), the constant term can only
come from the leading \(\Lambda_{2r-1}^r\)-part of \(\det\mathcal M\)
and the constant term of the corresponding cofactor.

For \(m<r-1\), the cofactor
\(\det\mathcal M(m+1|r)\) vanishes at \(\Lambda_{2r-1}=0\), because the
resulting minor contains a zero row/column in the anti-triangular limit.
Thus no \(L_0,\ldots,L_{r-2}\) term contributes to \(f_0\).

For \(m=r-1\), the cofactor \(\det\mathcal M(r|r)\) at
\(\Lambda_{2r-1}=0\) is the coefficient matrix of the integer-rank
\(r-1\) vector fields, with highest parameter \(c_{r-1}\). Its
determinant is non-zero and proportional to \(c_{r-1}^{r-1}\). Since
\(\det\mathcal M\) has leading term proportional to
\(\Lambda_{2r-1}^r/c_{r-1}^{r-1}\), the constant term of
\(\Lambda_{2r-1}^r(\mathcal M^{-1})_{r,r}\) is a non-zero scalar
multiple of \(c_{r-1}^{2r-2}\). Hence \(f_0\) is a non-zero scalar
multiple of \(c_{r-1}^{2r-2}L_{r-1}\).
\end{proof}

\section*{Statements and Declarations}

\paragraph{Funding}
This work was supported by JSPS KAKENHI Grant Number 22K03350.

\paragraph{Competing interests}
The author declares no competing interests.

\paragraph{Data availability}
No datasets were generated or analysed during the current study.

\end{document}